\newtheorem{lemma}{Lemma}
\newtheorem{corollary}[lemma]{Corollary}
\newtheorem{theorem}[lemma]{Theorem}
\theoremstyle{definition}
\newtheorem{definition}[lemma]{Definition}
\newtheorem{remark}[lemma]{Remark}
\newcommand{\sys}{\operatorname{sys}}
\newcommand{\injrad}{\operatorname{injrad}}
\newcommand{\arccosh}{\operatorname{arccosh}}
\newcommand{\arcsinh}{\operatorname{arcsinh}}
\newcommand{\arctanh}{\operatorname{arctanh}}
\newcommand{\diam}{\operatorname{diam}}
\newcommand{\mindist}{\operatorname{mindist}}
\newcommand{\dist}{\operatorname{dist}}
\newcommand{\mcM}{\mathcal{M}}
\newcommand{\mcT}{T}
\newcommand{\mcP}{\mathcal{P}}
\newcommand{\mcD}{\mathcal{D}}
\newcommand{\teich}{\mathcal{T}}
\newcommand{\mM}{X}
\newcommand{\mS}{\Gamma}
\newcommand{\mN}{\mathbb{N}}
\newcommand{\mR}{\mathbb{R}}
\newcommand{\mD}{\mathbb{D}}
\newcommand{\mMthick}{\mM^{\varepsilon}_{\operatorname{thick}}}
\newcommand{\mMthin}{\mM^{\varepsilon}_{\operatorname{thin}}}
\definecolor{dgreen}{RGB}{0,150,0}
\begin{document}
	
\title{Minimal Delaunay triangulations of hyperbolic surfaces}
\author[1]{Matthijs Ebbens}
\author[2]{Hugo Parlier}
\author[1]{Gert Vegter}	
\affil[1]{Bernoulli Institute for Mathematics, Computer Science and Artificial Intelligence, University of Groningen, Netherlands}
\affil[2]{Department of Mathematics, University of Luxembourg, Luxembourg}

\date{\vspace{-1cm}}

\maketitle

\begin{abstract}
Motivated by recent work on Delaunay triangulations of hyperbolic surfaces, we consider the minimal number of vertices of such triangulations. First, we will show that every hyperbolic surface of genus $g$ has a simplicial Delaunay triangulation with $O(g)$ vertices, where edges are given by distance paths. Then, we will construct a class of hyperbolic surfaces for which the order of this bound is optimal. Finally, to give a general lower bound, we will show that the $\Omega(\sqrt{g})$ lower bound for the number of vertices of a simplicial triangulation of a topological surface of genus $g$ is tight for hyperbolic surfaces as well. 
\end{abstract}

\medskip

\noindent\textbf{2012 ACM Subject Classification.} Computational Geometry, Graphs and Surfaces.

\smallskip

\noindent\textbf{Keywords and phrases.} Delaunay triangulations, hyperbolic surfaces, metric graph embeddings, moduli spaces.

\smallskip

\noindent\textbf{Funding.} The second author was partially supported by ANR/FNR project SoS, INTER/ANR/16/11554412/SoS, ANR-17-CE40-0033.

\section{Introduction}\label{sec:introduction}

The classical topic of Delaunay triangulations has recently been studied in the context of hyperbolic surfaces. Bowyer's incremental algorithm for computing simplicial Delaunay triangulations in the Euclidean plane \cite{bowyer1981} has been generalized to orientable hyperbolic surfaces and implemented for some specific cases \cite{bogdanov2016,iordanov2017}. Moreover, it has been shown that the flip graph of geometric (but not necessarily simplicial) Delaunay triangulations on a hyperbolic surface is connected \cite{despre2019}. 

In this work, we consider the minimal number of vertices of a simplicial Delaunay triangulation of a closed hyperbolic surface of genus $g$. Motivated by the interest in embeddings where edges are shortest paths between their endpoints \cite{fary1948,hubard2017}, which have applications in for example the field of graph drawing \cite{tamassia2013}, we restrict ourselves to \emph{distance} Delaunay triangulations, where edges are distance paths. 

Our main result is the upper bound on the number of vertices with sharp order of growth:
\begin{theorem}\label{thm:intro}
An orientable closed hyperbolic surface of genus $g\geq 2$ has a distance Delaunay triangulation with at most $O(g)$ vertices. Furthermore, there exists a family of surfaces, $X_g$, $g\geq 2$, such that the number of vertices of any distance Delaunay triangulation grows like $\Omega(g)$. 
\end{theorem}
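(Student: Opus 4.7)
The plan is to prove the two halves separately: the upper bound by an explicit construction, and the lower bound by exhibiting a family $X_g$ with the extremal property.

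For the upper bound, I would start from the thick-thin decomposition of $\mM$ at the Margulis constant $\varepsilon$. In the thick part $\mMthick$, where the injectivity radius exceeds $\varepsilon/2$ at every point, a maximal $\varepsilon$-separated subset has cardinality $O(\area(\mM)/\varepsilon^2) = O(g)$, since $\area(\mM) = 4\pi(g-1)$. The thin part $\mMthin$ consists of at most $3g-3$ collars around short simple closed geodesics, and for each collar I would add a bounded number of carefully placed vertices on or near the core geodesic. The combined vertex set has size $O(g)$. I would then verify that the Delaunay triangulation on this set is simplicial and that every Delaunay edge is a unique distance-minimizing geodesic; both conditions should reduce to checking that the vertex density is large enough relative to the local injectivity radius, so that each Delaunay triangle lifts injectively to the universal cover.

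For the lower bound, I would construct $X_g$ from $\Omega(g)$ disjoint building blocks---a natural choice being a chain or cycle of pairs of pants with prescribed cuff lengths and twists---and show that any distance Delaunay triangulation of $X_g$ must place at least one vertex inside each of $\Omega(g)$ disjoint subregions $R_1,\ldots,R_N$. The key mechanism would be a uniform upper bound on the circumradius of a distance Delaunay face, derived from the empty-circumdisc condition together with the requirement that edges are distance paths. If each $R_i$ has intrinsic diameter exceeding twice this circumradius, then no circumscribed disc can cover $R_i$, forcing at least one triangulation vertex in $R_i$ and thus $\Omega(g)$ vertices in total.

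The main obstacle is the lower bound. The topological $\Omega(\sqrt{g})$ bound mentioned in the abstract is tight in general, so the improvement to $\Omega(g)$ for the specific family $X_g$ must come from a genuinely metric argument that exploits both the empty-disc condition and the distance-path constraint simultaneously; either one alone is too weak. I expect the hardest step to be the circumradius bound: long distance-path edges could in principle cross several building blocks, and ruling out such ``shortcut'' triangulations requires a careful analysis of how circumdiscs lifted to the universal cover interact with the geometry gluing the blocks of $X_g$ together. Calibrating the building-block parameters so that this metric argument actually yields a matching lower bound---while keeping the surface hyperbolic---is the delicate point of the construction.
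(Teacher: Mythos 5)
Your plan for the upper bound tracks the paper closely: the thick--thin decomposition, a maximal separated net in the thick part with cardinality controlled by an area argument, and an $O(1)$ number of vertices per thin collar; the verification that the result is simplicial with distance-path edges is exactly the business of the paper's Section~\ref{sec:upperbound}. That half is essentially the same route.

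The lower bound is where there is a genuine gap. The key step you propose --- a \emph{uniform} upper bound on the circumradius of every distance Delaunay face, derived from the empty-disc and distance-path conditions --- does not hold, and this is not merely a matter of a delicate computation. For the chain family, a Delaunay circumradius can in principle be $\Theta(g)$: if a long middle stretch of the chain is empty of vertices, then an edge joining the two vertex-bearing ends is still a distance path (the chain admits no alternative route), its lift in $\mD$ is a geodesic segment of length $\Theta(g)$, and the associated circumdisc can remain empty precisely \emph{because} the middle is empty. In fact the circumradius of a Delaunay triangle is comparable to the covering radius of the vertex set, so ``bounded circumradius'' is essentially equivalent to ``vertices are dense'', which is the conclusion you are trying to reach; the argument becomes circular. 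The paper confronts exactly this scenario and handles it differently. Lemma~\ref{lem:N-empty} proves a weaker, local statement --- if an edge crosses $3N$ pairs of pants, the stretch it crosses contains an $N$-gap of empty pants --- rather than a circumradius bound. Combined with a cluster decomposition (Lemma~\ref{lem:decomposition}) which ensures each cluster has $\leq 6N$ pants, contains a vertex, and absorbs all edges up to adjacency, the paper bounds the number of edges \emph{linearly} in the number of vertices via the genus of the embedded subgraphs and a bipartite Euler-formula argument (Lemma~\ref{lem:edgebound}), and then plays this off against the global Euler relation $e = 3v + 6g - 6$. That topological/combinatorial comparison is what rules out a sub-linear vertex count, and it does so even though individual Delaunay discs can be enormous. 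Any repair of your approach would have to replace the circumradius bound with something of this flavour --- a bound on the number of edges rather than on their geometric size.
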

The above result is a compilation of Theorems~\ref{thm:upperboundDT} and~\ref{thm:lowerbound} where explicit upper and lower bounds are given. 

Another reason to study triangulations whose edges are distance paths, comes from the study of moduli spaces $\mcM_g$, which we can think of as a space of all hyperbolic surfaces of genus $g\geq 2$ up to isometry. These spaces admit natural coordinates associated to pants decompositions (the so-called Fenchel-Nielsen coordinates, see Section \ref{sec:preliminaries} for details). It is a classical theorem of Bers \cite{Bers} that any surface admits a short pants decomposition, meaning each of its simple closed geodesics is bounded by a function that only depends on the topology of the surface (but not its geometry). As these curves provide a local description of the surface, one might hope that they are also geodesically convex, meaning that the shortest distance path between any two points of a given curve is contained in the curve. It is perhaps surprising that {\it most} surfaces admit no such pants decompositions. Indeed it is known that {\it any } pants decomposition of a random surface (chosen with respect to a natural probability measure on $\mcM_g$) have at least one curve of length on the order of $g^{\frac{1}{6}-\varepsilon}$ as $g$ grows (for any fixed $\varepsilon>0$) \cite{GPY}. And it is a theorem of Mirzakhani that these same random surfaces are also of {\it diameter} on the order of $\log(g)$ \cite{Mirzakhani}. Hence the longest curve of any pants decomposition of a random surface is not convex. 

The lengths of edges in a given triangulation are another parameter set for $\mcM_g$. By the theorem above, such a parameter set can be chosen with a reasonable number of vertices such that the edges are all convex. Using the moduli space point of view, one has a function $\omega:\mcM_g\to \mN$ which associates to a surface the minimal number of vertices of any of its distance Delaunay triangulations. The above result implies that 
$$
\limsup_{g\to \infty} \max_{X\in \mcM_g}\dfrac{\omega(X)}{g}
$$
is finite and strictly positive, but for instance we do not know whether the actual limit exists.

The examples we exhibit are geometrically quite simple, as they are made by gluing hyperbolic pants, with bounded cuff lengths, in something that resembles a line as the genus grows. One might wonder whether all surfaces have this property, but we show this is not the case by exploring the quantity $\min_{X\in \mcM_g}\omega(X)$. This quantity has a precise lower bound on the order of $\Theta(\sqrt{g})$ because we ask that our triangulations be simplicial \cite{jungerman1980}. We show how to use the celebrated Ringel-Youngs construction \cite{ringel1968} to construct a family of hyperbolic surfaces that attain this bound for infinitely many genus (Theorem~\ref{thm:generallowerbound}), showing that one cannot hope for better than the simplicial lower bound in general. 

Although our results provide a good understanding on the extremal values of $\omega$, there are still plenty of unexplored questions. For example, what is the behavior of $\omega$ for a random surface (using Mirzakhani's notion of randomness \cite{Mirzakhani} alluded to above)?
 
This paper is structured as follows. In Section~\ref{sec:preliminaries}, we will introduce our notation and give some preliminaries on hyperbolic surface theory and triangulations. In Section~\ref{sec:upperbound}, we will prove our linear upper bound for the number of vertices of a minimal distance Delaunay triangulation. In Section~\ref{sec:upperboundtight}, we will construct classes of hyperbolic surfaces attaining the order of this linear upper bound. Finally, in Section~\ref{sec:lowerbound}, we will construct a family of hyperbolic surfaces attaining the general $\Theta(\sqrt{g})$ lower bound. Proofs of two technical lemmas appear in the appendices.

\paragraph{Acknowledgements.} The authors warmly thank Monique Teillaud and Vincent Despr\'e for fruitful discussions.

\section{Preliminaries}\label{sec:preliminaries}

We will start by recalling some hyperbolic geometry. There are several models for the hyperbolic plane \cite{beardon2012}. In the Poincar\'e disk model, the hyperbolic plane is represented by the unit disk $\mD$ in the complex plane equipped with a specific Riemannian metric of constant Gaussian curvature $-1$. With respect to this metric, hyperbolic lines, i.e., geodesics are given by diameters of $\mD$ or circle segments intersecting $\partial\mD$ orthogonally. A hyperbolic circle is a Euclidean circle contained in $\mD$. However, in general the centre and radius of a hyperbolic circle are different from the Euclidean centre and radius. 

We refer to \cite{buser2010} for all of the following facts. A \emph{hyperbolic surface} is a 2-dimensional Riemannian manifold that is locally isometric to an open subset of the hyperbolic plane \cite{stillwell2012}, thus of constant curvature $-1$. Our surfaces are assumed throughout to be {\it closed} and {\it orientable}, and because they are hyperbolic, via Gauss-Bonnet, their genus $g$ satisfies $g \geq 2$ and their area is $4\pi(g-1)$. Note that we will frequently be interested in subsurfaces of a closed surface which we think of as compact surfaces with boundary consisting of a collection of simple closed geodesics. The signature of such a subsurface is $(g',k)$ where $g'$ is its genus and $k$ is the number of boundary geodesics.

Via the uniformization theorem, any hyperbolic surface $\mM$ can be written as a quotient space $\mM=\mD/\Gamma$ of the hyperbolic plane under the action of a Fuchsian group $\Gamma$ (a discrete subgroup of the group of orientation-preserving isometries of $\mD$). The hyperbolic plane $\mD$ is the \emph{universal cover} of $\mM$ and is equipped with a projection $\pi:\mD\rightarrow\mD/\Gamma$.  

In the free homotopy class of any non-contractible closed curve lies a unique closed geodesic. If the curve is simple, then the corresponding geodesic is simple, and hence it is a straightforward topological exercise to decompose a hyperbolic surface into $2g-2$ pairs of pants by cutting along $3g-3$ disjoint simple closed geodesics. A pair of pants is a surface homeomorphic to a three times punctured sphere but we generally think of its closure, and thus of a hyperbolic pair of pants as being a surface of genus $0$ with three simple closed geodesics as boundary. Its closure is thus a surface of signature $(0,3)$. See Figure~\ref{fig:popdecomposition}).

It is a short but useful exercise in hyperbolic trigonometry to show that a hyperbolic pair of pants is determined by its three boundary lengths. This is done by cutting the pair of pants along the three geodesic paths, orthogonal to the boundary, which realize the distance between the different boundary geodesics and then arguing on the resulting right angled hexagons. Hence, the lengths of the $3g-3$ geodesics determine the geometry of each of the $2g-2$ pairs of pants, but to determine $\mM$, one needs to add twist parameters that control how the pants are pasted together. How one computes the twist coordinate is at least partially a matter of taste, and although we will not make much use of it, for completeness we follow \cite{buser2010}, where the twist is the signed distance between the points lying on the basepoints of the orthogeodesics mentioned above. 

The length and twist parameters determine $\mM$ and are called Fenchel-Nielsen coordinates. These parameters can be chosen freely in the set $(\mR^{>0})^{3g-3} \times \mR^{3g-3}$. What they determine is more than just an isometry class of surface: they determine a {\it marked} hyperbolic surface, homeomorphic to a base topological surface $\Sigma$. As the lengths and twists change, the marked surface changes, and the Fenchel-Nielsen coordinates provide a parameter set for the space of marked hyperbolic surfaces of genus $g$, called Teichm\"uller space ${\mathcal T}_g$. The underlying {\it moduli space} $\mathcal{M}_g$ can be thought of as the space of hyperbolic surfaces up to isometry, obtained from ${\mathcal T}_g$ by ``forgetting'' the marking.

Throughout the paper, lengths of closed geodesics will play an important role. As mentioned above, in the free homotopy class of a non-contractible closed curve lies a unique geodesic representative, and as the metric changes, the free homotopy class is well defined, but the length of the geodesic changes. Generally we will be dealing with a fixed surface $X\in {\mathcal T}_g$, and the length of a geodesic $\gamma$ will be denoted by $\ell(\gamma)$. Nonetheless, it is sometimes useful to think of the length of the corresponding homotopy class as a function over ${\mathcal T}_g$ which associates to $X$ the length of the geodesic corresponding to $\gamma$.
\begin{figure}[h]
	\centering
	\includegraphics[width=0.9\textwidth]{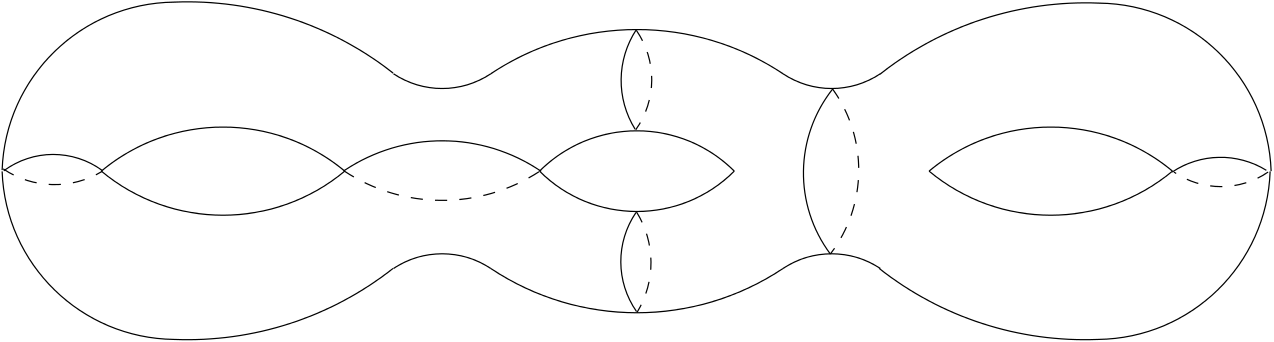}
	\caption{Decomposition of a genus 3 surface into 4 pair of pants using 6 disjoint simple closed geodesics.}
	\label{fig:popdecomposition}
\end{figure}
To a pair of pants decomposition, we can associate a 3-regular graph. In this 3-regular graph, each pair of pants is represented by a vertex and two vertices share an edge if the corresponding pairs of pants share a boundary geodesic. For example, Figure~\ref{fig:cubicgraphgenus3} shows the 3-regular graph corresponding to the pair of pants decomposition in Figure~\ref{fig:popdecomposition}. As our parametrization of $\mathcal{T}_g$ depends on a choice of pants decomposition, one can think of the Fenchel-Nielsen coordinates associating a length and a twist to each edge. 

\begin{figure}[h]
	\centering
	\includegraphics[width=0.7\textwidth]{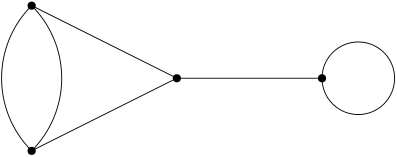}
	\caption{3-regular graph corresponding to the pair of pants decomposition shown in Figure~\ref{fig:popdecomposition}.}
	\label{fig:cubicgraphgenus3}
\end{figure}

Around a simple closed geodesic $\gamma$, the local geometry of a surface is given by its so-called collar. Roughly speaking, for small enough $r$, the set 
$$ C_\gamma(r)=\{x\in\mM\;|\;d(x,\gamma)\leq r\}$$
is an embedded cylinder. A bound on how large one can take the $r$ to be while retaining the cylinder topology is given by the Collar Lemma: 
\begin{lemma}[{\cite[Theorem 4.1.1]{buser2010}}]\label{lem:collarlemma1}
Let $\gamma$ by a simple closed geodesic on a closed hyperbolic surface $\mM$. The collar $C_\gamma(w(\gamma))$ of width $w(\gamma)$ given by
\begin{equation}\label{eq:collarwidth}
w(\gamma)=\arcsinh\bigg(  \dfrac{1}{\sinh(\tfrac{1}{2}\ell(\gamma))} \bigg)
\end{equation}
is an embedded hyperbolic cylinder isometric to $[-w(\gamma),w(\gamma)]\times\mS^1$ with the Riemannian metric $ds^2=d\rho^2+\ell^2(\gamma)\cosh^2(\rho)dt^2$ at $(\rho, t)$. Furthermore, if two simple closed geodesics $\gamma$ and $\gamma'$ are disjoint, then the collars $C_\gamma(w(\gamma))$ and $C_{\gamma'}(w(\gamma'))$ are disjoint as well.
\end{lemma}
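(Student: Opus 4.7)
The plan is to work in the universal cover $\mD$ and reduce everything to a hyperbolic trigonometric computation. Let $\tilde\gamma\subset\mD$ be a lift of $\gamma$, which is the axis of a hyperbolic isometry $T\in\mS$ of translation length $\ell(\gamma)$. Introduce Fermi coordinates $(\rho,t)$ around $\tilde\gamma$, with $\rho$ the signed distance to $\tilde\gamma$ and $t$ the arclength on $\tilde\gamma$ of the foot of the perpendicular. A direct calculation (e.g.\ by conjugating $\tilde\gamma$ to a standard geodesic in the disk model and pulling back the hyperbolic metric) gives the metric $d\rho^2+\cosh^2(\rho)\,dt^2$. Since $T$ acts on these coordinates by $t\mapsto t+\ell(\gamma)$, passing to the quotient and rescaling $t$ to unit period gives exactly the stated metric $d\rho^2+\ell^2(\gamma)\cosh^2(\rho)\,dt^2$ on $\mR\times\mathbb{S}^1$. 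Restricted to $|\rho|\leq w(\gamma)$ this is the candidate collar.

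The only remaining point is to show that the projection $\{|\rho|\leq w(\gamma)\}/\langle T\rangle\to\mM$ is injective, i.e.\ that it is actually an embedded cylinder in $\mM$. Injectivity fails precisely when there exists some $S\in\mS\setminus\langle T\rangle$ with $d(\tilde\gamma,S\tilde\gamma)\leq 2w(\gamma)$. Because $\gamma$ is simple, any two distinct $\mS$-translates of $\tilde\gamma$ are disjoint in $\mD$ (a crossing would project to a self-intersection downstairs). The disjointness of $C_\gamma(w(\gamma))$ and $C_{\gamma'}(w(\gamma'))$ for disjoint closed geodesics $\gamma$ and $\gamma'$ reduces analogously to $d(\tilde\gamma,\tilde\gamma')\geq w(\gamma)+w(\gamma')$ for any choice of lifts.

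The heart of the argument is therefore the following estimate, which I would single out as a separate lemma: if $\tilde\alpha_1,\tilde\alpha_2\subset\mD$ are disjoint axes of hyperbolic elements of $\mS$ whose projections in $\mM$ are disjoint simple closed geodesics of lengths $\ell_1$ and $\ell_2$, then
\[
\sinh\!\Bigl(\tfrac{d(\tilde\alpha_1,\tilde\alpha_2)}{2}\Bigr)\sinh\!\Bigl(\tfrac{\ell_1}{2}\Bigr)\geq 1
\]
when $\ell_1=\ell_2$, with the additive analogue $d(\tilde\alpha_1,\tilde\alpha_2)\geq w(\gamma_1)+w(\gamma_2)$ in the general case. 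I would prove this by exhibiting a right-angled hyperbolic pentagon built from the common perpendicular $P$ of $\tilde\alpha_1$ and $\tilde\alpha_2$, together with the half-translates of $P$ along each axis and the perpendicular between these half-translates. That the resulting polygon is genuinely right-angled uses that translation along an axis commutes with reflection in any geodesic orthogonal to it. The pentagon identity $\sinh(a)\sinh(b)=\cosh(c)$ then yields the inequality, which rearranges to $d(\tilde\alpha_1,\tilde\alpha_2)/2\geq\arcsinh(1/\sinh(\ell(\gamma)/2))=w(\gamma)$.

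The main obstacle is producing the correct right-angled polygon and verifying the angle conditions in full generality (the single-geodesic case, the two-geodesic case, and the degenerate alignment cases). Once the trigonometric inequality is established, the rest is bookkeeping. A minor subtlety is to exclude the possibility that $S\tilde\gamma=\tilde\gamma$ with orientation reversed, which would trivially violate the distance bound; this cannot occur because $\mS$ consists entirely of orientation-preserving isometries.
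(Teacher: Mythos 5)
The paper does not prove this lemma at all: it is quoted verbatim from Buser's book and used as a black box, so there is no in-paper argument to compare yours against. Judged on its own, your sketch has the right overall shape — Fermi coordinates give the stated metric form, embeddedness reduces to a lower bound on $d(\tilde\gamma,S\tilde\gamma)$ over $S\in\mS\setminus\langle T\rangle$, and disjointness of collars reduces to a lower bound on distances between lifts — and this is indeed the same circle of ideas (right-angled polygon trigonometry) that Buser's proof uses, though he organizes it through a pair-of-pants decomposition rather than directly in the universal cover.

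The gap is the pentagon. You announce that the common perpendicular $P$, its ``half-translates'' along the two axes, and a perpendicular between them bound a right-angled pentagon to which $\sinh a\sinh b=\cosh c$ applies, and you flag ``producing the correct right-angled polygon and verifying the angle conditions'' as the main obstacle — but that obstacle is precisely where the theorem lives, and the most natural ways to set up the polygon either do not close up into a pentagon at all or beg the question. Concretely, if $M$ denotes the geodesic through the midpoint $m$ of $P$ perpendicular to $P$, and $R$ the geodesic through $q_1=T^{1/2}p_1$ perpendicular to $\tilde\gamma$, then the pentagon with sides $c/2$, $\ell/2$, and the common perpendicular of $R$ and $M$ yields $\cosh(\cdot)=\sinh(c/2)\sinh(\ell/2)\geq 1$ \emph{only if} $R$ and $M$ are disjoint; if they meet, one gets instead a Lambert quadrilateral and the \emph{opposite} inequality $\sinh(c/2)\sinh(\ell/2)=\cos\phi<1$. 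So whether the pentagon exists is equivalent to the inequality you are trying to prove, and you have not supplied an independent reason (coming from simplicity of $\gamma$, discreteness of $\mS$, and the absence of parabolics) why $R$ and $M$ cannot cross. Using only the bare fact that distinct lifts are disjoint, the polygon one actually gets for free is a right-angled \emph{hexagon} built from $P$, $T_1P$, and the common perpendicular of $S\tilde\gamma$ and $T_1S\tilde\gamma$; that hexagon gives $\sinh(c)\sinh(\ell/2)\geq 1$, i.e.\ $c\geq w(\gamma)$, which is off by a factor of two from the $c\geq 2w(\gamma)$ needed. Recovering the missing factor of two, and likewise justifying the additive bound $d(\tilde\alpha_1,\tilde\alpha_2)\geq w(\gamma_1)+w(\gamma_2)$ for the disjointness part (which is not a single pentagon identity), requires a more careful argument than the sketch provides.
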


This paper is about \emph{distance Delaunay triangulations} on closed hyperbolic surfaces. 

\begin{definition}
	A \emph{distance Delaunay triangulation} is a triangulation satisfying the following three properties:
	\begin{enumerate}
		\item it is a simplicial complex,
		\item it is a Delaunay triangulation,
		\item its edges are distance paths.
	\end{enumerate}
	The set of all distance Delaunay triangulations of a closed hyperbolic surface $\mM$ is denoted by $\mcD(\mM)$. 
\end{definition}

We will describe each of the three properties of distance Delaunay triangulations in more detail below.

\paragraph{Simplicial complexes.}
We will use the standard definition of a simplicial complex. In our case, an embedding of a graph into a surface is a simplicial complex if and only if it does not contain any $1$- or $2$-cycles. In particular, a geodesic triangulation of a point set in the Euclidean or hyperbolic planes is always a simplicial complex. This is because there are no geodesic monogons or bigons. 

\paragraph{Delaunay triangulations.}
We recall that given a set of vertices in the Euclidean plane a triangle is called a \emph{Delaunay triangle} if its circumscribed disk does not contain any vertex in its interior. A triangulation of a set of vertices in the Euclidean plane is a \emph{Delaunay triangulation} if all triangles are Delaunay triangles.

We can define Delaunay triangulations in the hyperbolic plane in the same way as in the Euclidean plane, where we use the fact that hyperbolic circles are Euclidean circles. Delaunay triangulations on hyperbolic surfaces can be defined by lifting vertices on a hyperbolic surface $\mM$ to the universal cover $\mD$ \cite{bogdanov2013b,despre2019}. More specifically, let $\mcP$ be a set of vertices on $\mM$ and let $\pi:\mD\rightarrow \mD/\Gamma$ be the projection of the hyperbolic plane $\mD$ to the hyperbolic surface $\mM=\mD/\Gamma$. A triangle $(v_1,v_2,v_3)$ with $v_i\in\mcP$ is called a \emph{Delaunay triangle} if there exist pre-images $v_i'\in\pi^{-1}(\{v_i\})$ such that the circumscribed disk of the triangle $(v_1',v_2',v_3')$ in the hyperbolic plane does not contain any point of $\pi^{-1}(\mcP)$ in its interior. A triangulation of $\mcP$ on $\mM$ is a \emph{Delaunay triangulation} if all triangles are Delaunay triangles. 

A Delaunay triangulation of a point set on a hyperbolic surface $\mM$ is related to a Delaunay triangulation in $\mD$ as follows \cite{bogdanov2013b}. Given a point set $\mcP$ on $\mM$, we consider a Delaunay triangulation $\mcT'$ of the infinite point set $\pi^{-1}(\mcP)$. Then, we let $\mcT=\pi(\mcT')$. By definition, $\mcT$ is a Delaunay triangulation. Moreover, because every triangulation in $\mD$ is a simplicial complex, $\mcT'$ is a simplicial complex. However, $\mcT$ is not necessarily a simplicial complex, because projecting $\mcT'$ to $\mM$ might introduce 1- or 2-cycles. We will use the correspondence between Delaunay triangulations in $\mD$ and in $\mM$ in Definition~\ref{def:standardtriangulation} and the proof of Theorem~\ref{thm:upperboundDT} and show explicitly that in these cases the result after projecting to $\mM$ is simplicial.

To make sure that $T=\pi(T')$ is a well-defined triangulation, we will assume without loss of generality that $\mcT'$ is \emph{$\Gamma$-invariant}, i.e., the image of any Delaunay triangle in $T'$ under an element of $\Gamma$ is a Delaunay triangle. Otherwise, it is possible that in so-called \emph{degenerate cases} $T$ contains edges that intersect in a point that is not a vertex \cite{bogdanov2016}. Namely, suppose that $\mcT'$ contains a polygon $P=\{p_1,p_2,\ldots,p_k\}$ consisting of $k\geq 4$ concircular vertices and let $T_P$ be the Delaunay triangulation of $P$ in $T'$. Because the Delaunay triangulation of a set of at least four concircular vertices is not uniquely defined, assume that there exists $A\in\Gamma$ such that the Delaunay triangulation $T_{A(P)}$ of $A(P)$ in $T'$ is not equal to $A(T_P)$. Because $\pi(P)=\pi(A(P))$, there exists an edge of $\pi(T_{A(P)})$ and an edge of $\pi(A(T_P))$ that intersect in a point that is not a vertex.

\paragraph{Distance paths.}
Suppose we are given an edge $(u,v)$ in a triangulation of a hyperbolic surface $\mM$. Because $(u,v)$ is embedded in $\mM$, there exists a geodesic $\gamma:[0,1]\rightarrow\mM$ with $\gamma(0)=u$ and $\gamma(1)=v$. We say that $(u,v)$ is a \emph{distance path} if $\ell(\gamma)=d(u,v)$, where $d(u,v)$ is the infimum of the lengths of all curves joining $u$ to $v$.

\section{Linear upper bound for the number of vertices of a minimal distance Delaunay triangulation}\label{sec:upperbound}

As our first result, we prove that for every hyperbolic surface there exists a distance Delaunay triangulation whose cardinality grows linearly as a function of the genus. Note that the constant 151 is certainly not optimal.

\begin{theorem}\label{thm:upperboundDT}
	For every closed hyperbolic surface $\mM$ of genus $g$ there exists $\mcT\in\mcD(\mM)$ with at most $151g$ vertices.
\end{theorem}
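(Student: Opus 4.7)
The plan is to exhibit an explicit point set $\mcP\subset\mM$ of cardinality at most $151g$ and argue that its Delaunay triangulation automatically lies in $\mcD(\mM)$. The guiding principle is that, provided $\mcP$ is sufficiently dense at every point of $\mM$, the empty-disk condition forces all Delaunay triangles to have small circumradii, which in turn forces edges to be short geodesic arcs that lift uniquely to $\mD$, realise the distance between their endpoints, and cannot create $1$- or $2$-cycles on $\mM$.

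To keep $|\mcP|$ linear in $g$ I would adapt the density of $\mcP$ to a thick-thin decomposition of $\mM$. Fix a constant $\varepsilon_0$ smaller than the Margulis constant; then $\mMthin$ is a disjoint union of at most $3g-3$ embedded collars around simple closed geodesics of length below $2\varepsilon_0$ (cf.\ Lemma~\ref{lem:collarlemma1}), and $\mMthick=\mM\setminus\mMthin$ has injectivity radius at least $\varepsilon_0/2$ everywhere. On $\mMthick$ I would take $\mcP_{\text{thick}}$ to be a maximal $\varepsilon_0$-separated set: the hyperbolic disks of radius $\varepsilon_0/2$ around its points are pairwise disjoint and embedded, each of fixed area $2\pi(\cosh(\varepsilon_0/2)-1)$, and since $\area(\mM)=4\pi(g-1)$ by Gauss-Bonnet this gives $|\mcP_{\text{thick}}|\le 2(g-1)/(\cosh(\varepsilon_0/2)-1)=O(g)$. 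For each collar in $\mMthin$ I would add a bounded number of vertices placed symmetrically on and near the central short geodesic; there are at most $3g-3$ collars, so this again contributes $O(g)$ vertices, and calibrating the constants gives the explicit bound $151g$.

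The final step is to verify that the Delaunay triangulation of this combined $\mcP$ satisfies the three defining properties. On the thick part this is immediate: by maximality $\mcP$ is $\varepsilon_0$-dense there, so every Delaunay triangle has circumradius at most $\varepsilon_0$, every Delaunay edge has length at most $2\varepsilon_0<2\injrad$, and such an edge lifts uniquely to a geodesic in $\mD$ that realises the distance between its endpoints, with no room for $1$- or $2$-cycles. The main obstacle is pushing this argument through the thin collars, where the injectivity radius is arbitrarily small and the Collar Lemma width $w(\gamma)\sim\log(1/\ell(\gamma))$ is unbounded, so a uniformly $\varepsilon_0$-dense point set is out of reach. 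Here I would exploit the explicit cylinder geometry provided by Lemma~\ref{lem:collarlemma1}: any geodesic arc between two points of a collar that winds non-trivially around the core is strictly longer than the corresponding non-winding arc, so by positioning the constant number of collar vertices carefully along $\gamma$ one can force every Delaunay edge touching the collar to be a short, non-winding arc, which then inherits the simpliciality and distance-path properties by the same short-edge mechanism as in the thick part.
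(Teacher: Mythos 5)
Your plan follows the same high-level route as the paper: decompose $\mM$ into a thick part and a thin part of at most $3g-3$ collars, take a maximal $\varepsilon$-separated set on the thick part, bound it by a disk-packing/area argument, add a bounded number of vertices per collar, and deduce $151g$. This much is fine. But your treatment of the collars has a gap that is exactly where the real work of the theorem lies.

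You propose to put the collar vertices ``on and near the central short geodesic'' and later ``along $\gamma$'', and you claim that these choices force every Delaunay edge touching the collar to be a \emph{short} non-winding arc, so that ``the same short-edge mechanism as in the thick part'' applies. This cannot work as stated. The half-width $K_C$ of a collar satisfies $\cosh K_C=\sinh(\varepsilon)/\sinh(\ell(\gamma)/2)$, so $K_C\to\infty$ as $\ell(\gamma)\to 0$. If all your collar vertices sit near the waist $\gamma$, then the nearest vertex to a point on the boundary curve of the collar is at distance roughly $K_C$, which is unbounded; the empty-disk radii near the ends of the collar are therefore unbounded, the Delaunay edges crossing the collar are \emph{not} short, and the injectivity-radius argument that ruled out $1$- and $2$-cycles in the thick part gives you nothing. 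The paper's fix is precisely the missing idea: for thin collars it places three vertices on the waist \emph{and} three on each of the two boundary curves of the collar (nine in total), so that the ``standard triangulation'' of the collar is controlled by hand. The spanning edges of that triangulation still have length on the order of $K_C$, so they are not short; the paper must instead verify directly (Corollary~\ref{cor:cylinderdistancepaths}, using Lemma~\ref{lem:bandaroundcollar}) that they are nonetheless distance paths, and separately verify (Lemma~\ref{lem:cylindernottooshort}, the long trigonometric computation in the appendix) that the nine-vertex triangulation is in fact Delaunay relative to the global vertex set. Neither of these steps is a consequence of the ``short-edge mechanism'', and your proposal omits both. You also omit the distinction the paper makes between ``thick'' cylinders (where three waist vertices suffice, since $K_C$ is small) and ``thin'' cylinders (where the nine-vertex configuration is needed), and the $\Gamma$-invariance issue needed to ensure the projected Delaunay complex is well defined in degenerate cases.

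In short: the thick-part density argument and the disk-packing bound are correct and match the paper, but the collar case is the hard part of the theorem and your plan for it, as written, would fail once the collar gets long enough. You need vertices on the collar boundaries, an explicit verification of the Delaunay property for the resulting collar triangles, and a separate geometric argument (not a short-edge argument) for why the collar-spanning edges are distance paths.
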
 

The idea of the proof is the following. Given a hyperbolic surface $\mM$, we construct a vertex set $\mcP$ on $\mM$ consisting of at most $151g$ vertices such that the projection $\mcT$ of a Delaunay triangulation of $\pi^{-1}(\mcP)$ in $\mD$ to $\mM$ is a distance Delaunay triangulation of $\mM$. 

It is known that $\mcT$ is a simplicial complex if $\mcP$ is sufficiently dense and well-distributed \cite{bogdanov2013b}. More precisely, there are no 1- or 2-cycles in $\mcT$ if the diameter of the largest disk in $\mD$ not containing any points of $\pi^{-1}(\mcP)$ is less than $\tfrac{1}{2}\sys(\mM)$, where $\sys(\mM)$ is the systole of $\mM$, i.e. the length of the shortest homotopically non-trivial closed curve. However, the systole of a hyperbolic surface can be arbitrarily close to zero, which means that we would need an arbitrarily dense set $\mcP$ to satisfy this condition. 

Instead, for a constant $\varepsilon>0$ we subdivide $\mM$ into its \emph{$\varepsilon$-thick} part
$$ \mMthick=\{x\in\mM\;|\;\injrad(x)>\varepsilon\}$$
and its \emph{$\varepsilon$-thin} part $\mMthin=\mM\setminus\mMthick$, where $\injrad(x)$ is the injectivity radius at $x$, i.e., the radius of the largest embedded open disk centered at $x$. Note that the minimum of $\injrad(x)$ over all $x\in\mM$ is given by $\tfrac{1}{2}\sys(\mM)$. We see in Section~\ref{subsec:thinhyperboliccylinders} that $\mMthin$ is a collection of hyperbolic cylinders for sufficiently small $\varepsilon$ (see Figure~\ref{fig:thickthindecomp}). In these hyperbolic cylinders we want to construct a set of vertices of which the cardinality does not depend on $\sys(\mM)$. To do this, we put three vertices on the ``waist'' and each of the two boundary components of the cylinders that are ``long and narrow''. In the cylinders that are not ``long and narrow'' it suffices to place three vertices on its waist only. The notions of ``waist'' and ``long and narrow'' will be specified in Section~\ref{subsec:thinhyperboliccylinders}. Because $\injrad(x)>\varepsilon$ for all $x\in\mMthick$, we can construct a sufficiently dense and well-distributed point set in $\mMthick$ whose cardinality does not depend on $\sys(\mM)$ but only on $\varepsilon$. In Section~\ref{subsec:thickhyperbolicsurfaces} we will describe how we combine the vertices placed in the hyperbolic cylinders with the dense and well-distributed set of vertices in $\mMthick$. Finally, the proof of Theorem~\ref{thm:upperboundDT} is given in Section~\ref{subsec:prooftheorem1}.

\begin{figure}[h]
	\centering
	\includegraphics[width=0.9\textwidth]{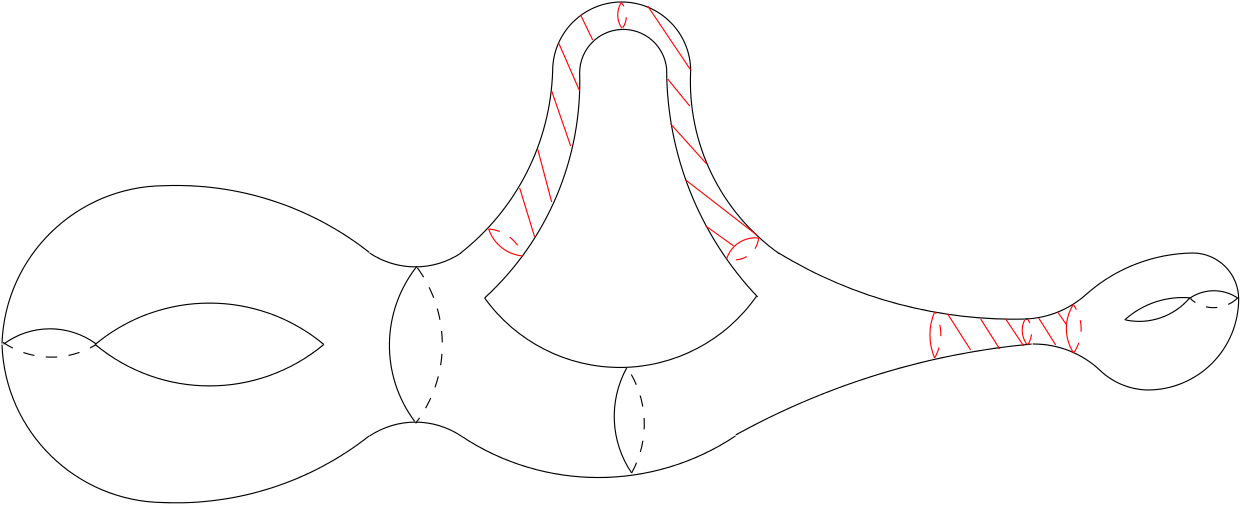}
	\caption{Decomposition of a hyperbolic surface into a thick part consisting of two connected components and two narrow hyperbolic cylinders (in red).}
	\label{fig:thickthindecomp}
\end{figure}

\subsection{Distance Delaunay triangulations of hyperbolic cylinders}\label{subsec:thinhyperboliccylinders}

We now describe our construction of a set of vertices for the $\varepsilon$-thin part $\mMthin$. The following lemma describes $\mMthin$ in more detail.

\begin{lemma}[{\cite[Theorem 4.1.6]{buser2010}}]\label{lem:collarlemma2}
	If $\varepsilon<\arcsinh(1)$ then $\mMthin$ is a collection of at most $3g-3$ pairwise disjoint hyperbolic cylinders.
\end{lemma}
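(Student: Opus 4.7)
The plan is to show that every point of $\mMthin$ lies in an embedded collar around some simple closed geodesic of length at most $2\varepsilon$, identify these collars as the connected components of $\mMthin$, and then invoke the classical $3g-3$ bound on pairwise disjoint simple closed geodesics.

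The starting point is routine: for $x \in \mMthin$ we have $\injrad(x) \leq \varepsilon$, so there is an essential loop based at $x$ of length at most $2\varepsilon$, and the free homotopy class of its primitive contains a unique closed geodesic $\gamma_x$ with $\ell(\gamma_x) \leq 2\varepsilon$.

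The main obstacle is showing that $\gamma_x$ is \emph{simple}, and that the assignment $x \mapsto \gamma_x$ is constant on each connected component of $\mMthin$. This is precisely where the bound $\varepsilon < \arcsinh(1)$ enters, via a Margulis-type argument: a self-intersection of $\gamma_x$, or the existence of two non-commuting short loops at $x$, would provide two non-commuting elements $A, B \in \mS$ with short translation lengths whose axes meet close to a lift of $x$. A direct trace computation would then force the commutator $[A,B]$ to be elliptic, contradicting torsion-freeness of $\mS$. The threshold at which this argument breaks down is exactly $\ell = 2\arcsinh(1)$, since what is required is the inequality $\sinh(\ell/2) < 1$.

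Once $\gamma_x$ is a well-defined simple closed geodesic, the Collar Lemma (Lemma~\ref{lem:collarlemma1}) provides the embedded cylinder $C_{\gamma_x}(w(\gamma_x))$ with
\[
w(\gamma_x) = \arcsinh\!\left(\frac{1}{\sinh(\ell(\gamma_x)/2)}\right) \geq \arcsinh\!\left(\frac{1}{\sinh(\varepsilon)}\right) > \varepsilon.
\]
A short computation in the cylinder coordinates $(\rho, t)$ of Lemma~\ref{lem:collarlemma1} shows that $x$ lies in this collar and that the sublevel set $\{\injrad \leq \varepsilon\}$ inside the collar is a concentric sub-cylinder around $\gamma_x$; hence the component of $\mMthin$ through $x$ is exactly this sub-cylinder. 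The second conclusion of Lemma~\ref{lem:collarlemma1} then gives that collars about disjoint simple closed geodesics are themselves disjoint, so distinct components of $\mMthin$ correspond to distinct, pairwise disjoint simple closed geodesics on $\mM$. Any such family extends to a pants decomposition, which has exactly $3g-3$ curves, yielding the claimed bound.
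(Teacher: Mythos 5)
The paper itself does not prove this statement; it is quoted directly from Buser's book (Theorem 4.1.6), so there is no internal proof to compare against. Your reconstruction follows the standard argument from that reference: produce, for each $x$ in the thin part, a primitive short geodesic $\gamma_x$; argue it is simple and well-defined; show $x$ lies inside the Collar Lemma collar of $\gamma_x$ and that the thin part inside that collar is a concentric sub-cylinder; then invoke disjointness of collars and the topological $3g-3$ bound. The key computation you do carry out, $w(\gamma_x)\geq \arcsinh(1/\sinh\varepsilon)>\varepsilon$, is exactly where the hypothesis $\varepsilon<\arcsinh(1)$ enters, and it does give strict containment of $x$ in the collar once one writes $\cosh(\rho)\sinh(\ell(\gamma_x)/2)\le\sinh\varepsilon<1$ in the cylinder coordinates of Lemma~\ref{lem:collarlemma1}.

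Two points deserve flagging, though neither invalidates the approach. First, the Margulis/trace step -- that a self-intersection of $\gamma_x$ or a pair of non-commuting short loops at $x$ forces an elliptic commutator -- is asserted rather than carried out; in a self-contained write-up this needs the actual inequality (Buser proves it as a lemma preceding 4.1.6). Second, your claim that the threshold for \emph{that} argument is \emph{exactly} $\ell=2\arcsinh(1)$ is an overstatement: simplicity of short geodesics holds well beyond that length (the shortest self-intersecting geodesic on a hyperbolic surface has length $\ge 4\arcsinh(1)$), and the genuinely binding role of $\sinh\varepsilon<1$ in this lemma is the collar-containment computation above, not the simplicity step. Stating it that way would make the logical dependence on the hypothesis clearer.
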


The following description of the geometry of the hyperbolic cylinders in $\mMthin$ is based primarily on a similar description in the context of colourings of hyperbolic surfaces \cite{parlier2016}. Each hyperbolic cylinder $C$ in $\mMthin$ consists of points with injectivity radius at most $\varepsilon$ and the boundary curves $\gamma^+$ and $\gamma^-$ consist of all points with injectivity radius equal to $\varepsilon$. Every point on the boundary curves is the base point of an embedded geodesic loop of length $2\varepsilon$ (Figure~\ref{fig:cylinderwidth}), which is completely contained in the hyperbolic cylinder. All points on the boundary curves have the same distance $K_C$ to $\gamma$, which only depends on $\varepsilon$ and the length $\ell(\gamma)$ of $\gamma$. To see this, fix a point $p$ on $\gamma^+$ and consider a distance path $\xi$ from $p$ to $\gamma$ (Figure~\ref{fig:cylinderwidth}). Cutting along $\gamma,\xi$ and the loop of length $2\varepsilon$ with base point $p$ yields a hyperbolic quadrilateral. The common orthogonal of $\gamma$ and the geodesic loop subdivides this quadrilateral into two congruent quadrilaterals, each with three right angles. Applying a standard result from hyperbolic trigonometry yields \cite[Formula Glossary 2.3.1(v)]{buser2010}
$$\sinh(\varepsilon)=\sinh(\tfrac{1}{2}\ell(\gamma))\cosh(\ell(\xi)).$$
Because $K_C=\ell(\xi)$, it follows that
\begin{equation}\label{eq:expressionKC}
K_C=\arccosh\bigg(\dfrac{\sinh(\varepsilon)}{\sinh(\tfrac{1}{2}\ell(\gamma))} \bigg).
\end{equation}

\begin{figure}[h]
	\centering
	\includegraphics[width=0.65\textwidth]{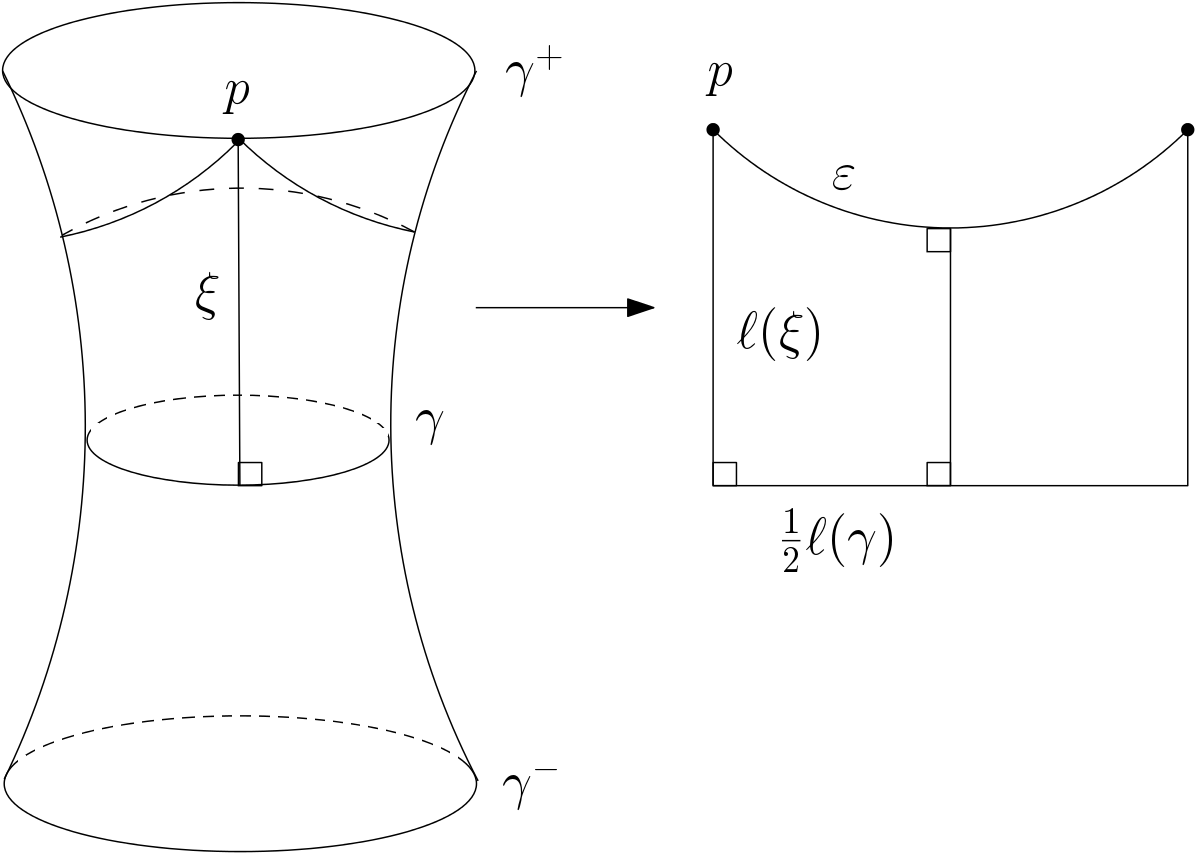}
	\caption{Computing $K_C$.}
	\label{fig:cylinderwidth}
\end{figure}

We see that $\gamma^+$ consists of points that are equidistant to $\gamma$. By symmetry, the distance between a point on $\gamma^-$ and $\gamma$ is equal to $K_C$ as well. Moreover, $\gamma^+$ and $\gamma^-$ are smooth. 

Recall the notion of a collar from Section~\ref{sec:preliminaries}. In particular, each hyperbolic cylinder $C$ in $\mMthin$ is a collar of width $K_C$. Comparing equation \eqref{eq:expressionKC} for $K_C$ with equation \eqref{eq:collarwidth} in the statement of the Collar Lemma, we see that $w(\gamma)>K_C$, because $\sinh\varepsilon<1$. This inequality will be used in the proof of Lemma~\ref{lem:bandaroundcollar} to give a lower bound for the distance between distinct hyperbolic cylinders in $\mMthin$.

We distinguish between two kinds of hyperbolic cylinders in $\mMthin$, namely $\varepsilon'$-thin cylinders and $\varepsilon'$-thick cylinders, where $\varepsilon'=0.99\varepsilon$. An $\varepsilon'$-thick cylinder with waist $\gamma$ satisfies  $2\varepsilon'\leq\ell(\gamma)\leq 2\varepsilon$, since $\gamma$ is contained in the $\varepsilon$-thin part. An $\varepsilon'$-thin cylinder satisfies $\ell(\gamma)<2\varepsilon'$.

Lemma~\ref{lem:cylindernottooshort} in Section~\ref{subsec:thickhyperbolicsurfaces} states that the triangulation depicted in Figure~\ref{fig:standardtriangulation} is a Delaunay triangulation for $\varepsilon'$-thin cylinders. We call this triangulation a \emph{standard triangulation} and describe it in more detail in the following definition. For $\varepsilon'$-thick cylinders we use a different construction defined in Definition~\ref{def:standardcycle}.

\begin{definition}\label{def:triangulatingcylinder}
	Let $\mM$ be a closed hyperbolic surface. Let $C$ be an $\varepsilon'$-thin hyperbolic cylinder in $\mMthin$ with waist $\gamma$ and boundary curves $\gamma^+,\gamma^-$. Place three equally-spaced points $x_i,i=1,2,3$ on $\gamma$ (see Figure~\ref{fig:standardtriangulation}). Then, place three points $x_i^+,i=1,2,3$ on $\gamma^+$ and three points $x_i^-,i=1,2,3$ on $\gamma^-$ such that the projection of $x_i^\pm$ on $\gamma$ is equal to $x_i$ for $i=1,2,3$. Let $V$ be the set consisting of $x_i, x_i^-$ and $x_i^+$ for $i=1,2,3$. Let $E$ be the set of edges of one of the forms $$(x_i^-,x_{i+1}^-),(x_i^-,x_i),(x_i^-,x_{i+1}),(x_i,x_{i+1}),(x_i,x_i^+),(x_i,x_{i+1}^+),(x_i^+,x_{i+1}^+)$$
	for $i=1,2,3$ (counting modulo 3), where the embedding of an edge in $C$ is as shown in Figure~\ref{fig:standardtriangulation}. We call $(V,E)$ a \emph{standard triangulation} of $C$. 
\end{definition}  

\begin{figure}[h]
	\centering
	\includegraphics[width=0.35\textwidth]{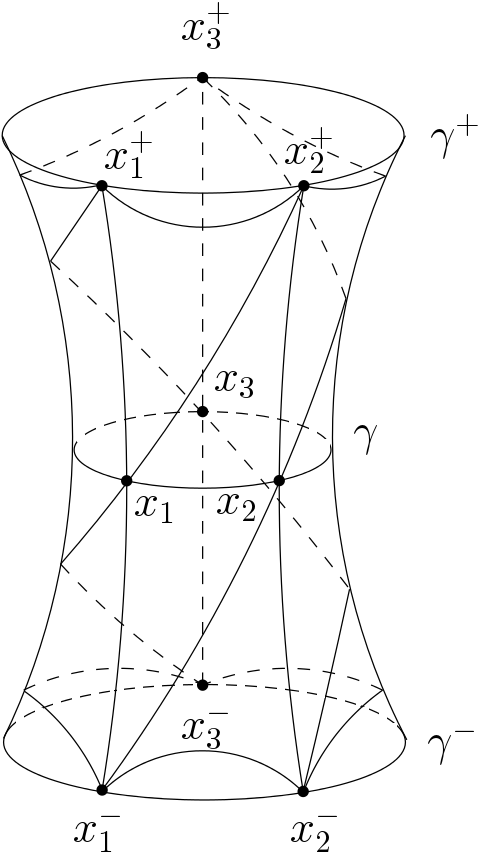}
	\caption{Standard triangulation of an $\varepsilon'$-thin cylinder.}
	\label{fig:standardtriangulation}
\end{figure}

We not only have to prove that a standard triangulation of an $\varepsilon'$-thin cylinder is a Delaunay triangulation, we also have to show that its edges are distance paths. Corollary~\ref{cor:cylinderdistancepaths} states that all edges in a standard triangulation are distance paths if $\varepsilon\leq 0.72$. Before we can prove Corollary~\ref{cor:cylinderdistancepaths}, we first need the following lemma.

\begin{lemma}\label{lem:bandaroundcollar}
	Let $\mM$ be a closed hyperbolic surface and let $\varepsilon\leq 0.72$. For each pair of distinct closed geodesics $\gamma_1$ and $\gamma_2$ in $\mMthin$ the collars $C_{\gamma_1}(K_{C_1}+\tfrac{1}{3}\varepsilon)$ and $C_{\gamma_2}(K_{C_2}+\tfrac{1}{3}\varepsilon)$ are embedded and disjoint.
\end{lemma}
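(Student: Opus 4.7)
The plan is to reduce the statement to the classical Collar Lemma (Lemma~\ref{lem:collarlemma1}). Two distinct closed geodesics $\gamma_1,\gamma_2$ contained in $\mMthin$ must lie in two distinct components of the thin part: by Lemma~\ref{lem:collarlemma2} these components are cylinders, and a hyperbolic cylinder (an embedded annulus) contains a unique closed geodesic, namely its waist. Hence $\gamma_1$ and $\gamma_2$ are disjoint simple closed geodesics, so the standard collars $C_{\gamma_i}(w(\gamma_i))$ supplied by Lemma~\ref{lem:collarlemma1} are embedded and pairwise disjoint. Consequently, if I can establish
\begin{equation*}
K_C + \tfrac{\varepsilon}{3} \leq w(\gamma)
\end{equation*}
for every thin cylinder $C\subseteq\mMthin$ with waist $\gamma$, then the inclusion $C_{\gamma_i}(K_{C_i}+\varepsilon/3)\subseteq C_{\gamma_i}(w(\gamma_i))$ immediately delivers both the embeddedness and the disjointness asserted in the lemma.

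The core of the argument is this one-variable inequality. I would substitute $t=\sinh(\ell(\gamma)/2)$, which ranges in $(0,\sinh\varepsilon]$ as $\gamma$ varies over waists in the thin part, and use \eqref{eq:collarwidth} and \eqref{eq:expressionKC} to rewrite the target bound as $f(t)\geq 0$ for
\begin{equation*}
f(t) := \arcsinh(1/t) - \arccosh(\sinh\varepsilon/t) - \tfrac{\varepsilon}{3}.
\end{equation*}
A direct differentiation gives $f'(t)=\tfrac{1}{t}\bigl[\sinh\varepsilon/\sqrt{\sinh^2\varepsilon-t^2}\,-\,1/\sqrt{t^2+1}\bigr]$, and squaring the two positive fractions reduces the sign of $f'$ to the obvious estimate $t^2(1+\sinh^2\varepsilon)>0$. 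Thus $f$ is strictly increasing on $(0,\sinh\varepsilon)$, and its infimum is attained as $t\to 0^+$. Using the logarithmic forms $\arcsinh(x)=\ln(x+\sqrt{1+x^2})$ and $\arccosh(x)=\ln(x+\sqrt{x^2-1})$, this limit evaluates to $-\ln(\sinh\varepsilon)-\tfrac{\varepsilon}{3}$.

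The lemma is thereby reduced to the concrete numerical estimate $\sinh\varepsilon\cdot e^{\varepsilon/3}\leq 1$, required to hold for every $\varepsilon\in(0,0.72]$. Since the left-hand side is strictly increasing in $\varepsilon$, only the endpoint $\varepsilon=0.72$ needs checking, where $\sinh(0.72)\approx 0.7838$ and $e^{-0.24}\approx 0.7866$, so the inequality holds with a small margin to spare. This tightness is the only delicate point and the main obstacle of the proof: the hypothesis $\varepsilon\leq 0.72$ in the statement is evidently chosen precisely so that the limiting short-waist case $t\to 0^+$ just barely complies with the bound. For any fixed cylinder with $\ell(\gamma)>0$ the strict monotonicity of $f$ then yields strict inequality, so the enlarged collars are genuinely embedded and disjoint, completing the plan.
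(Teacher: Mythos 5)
Your proof is correct and follows essentially the same strategy as the paper: reduce to the Collar Lemma by showing $w(\gamma)-K_C\geq\varepsilon/3$, observe the difference is minimized as $\ell(\gamma)\to 0$, and check the limiting numerical bound at $\varepsilon=0.72$. You carry out explicitly (the monotonicity computation via $f'$, the logarithmic limit, and the disjointness of $\gamma_1,\gamma_2$ needed to invoke the Collar Lemma) some steps the paper states without detail or delegates to a citation, but the argument is the same.
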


\begin{remark}
	The value $0.72$ was found experimentally and is optimal up to two decimal digits, i.e., the statement is not true for $\varepsilon=0.73$. More specifically, if $\varepsilon\geq 0.73$ then there exists a closed hyperbolic surface $\mM$ with disjoint closed geodesics $\gamma_1$ and $\gamma_2$ in $\mMthin$ such that $C_{\gamma_1}(K_{C_1}+\tfrac{1}{3}\varepsilon)$ and $C_{\gamma_2}(K_{C_2}+\tfrac{1}{3}\varepsilon)$ are not disjoint.
\end{remark}

\begin{proof}
	See Figure~\ref{fig:cylindernesting}. We will show that $w(\gamma_i)-K_{C_i}\geq\tfrac{1}{3}\varepsilon$ for $i=1,2$. Namely, this implies that $C_{\gamma_i}(K_{C_i}+\tfrac{1}{3}\varepsilon)\subseteq C_{\gamma_i}(w(\gamma_i))$. Because $C_{\gamma_1}(w(\gamma_1))$ and $C_{\gamma_2}(w(\gamma_2))$ are embedded and disjoint by the Collar Lemma, it follows that $C_{\gamma_1}(K_{C_1}+\tfrac{1}{3}\varepsilon)$ and $ C_{\gamma_2}(K_{C_2}+\tfrac{1}{3}\varepsilon)$ are embedded and disjoint as well. 
	
	\begin{figure}[h]
		\centering
		\includegraphics[width=0.35\textwidth]{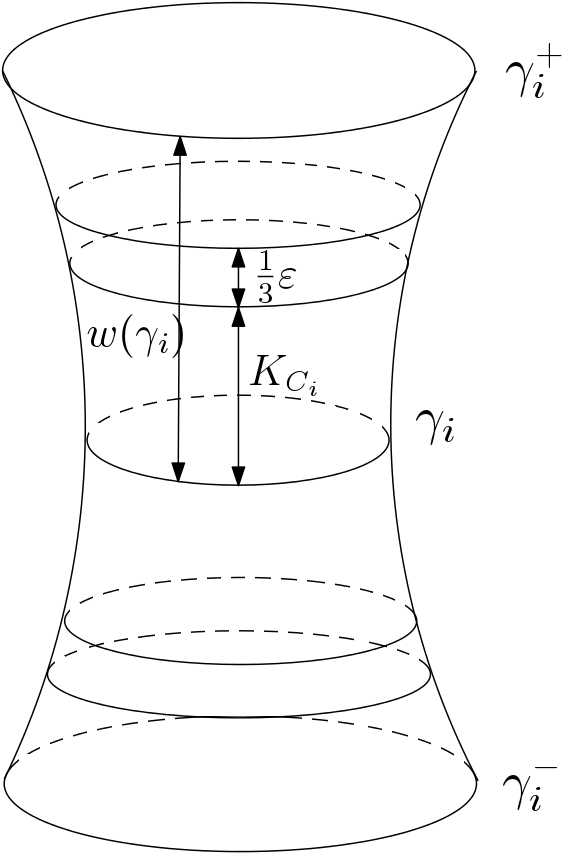}
		\caption{Illustration of the collars $C_{\gamma_i}(K_{C_i})\subset C_{\gamma_i}(K_{C_i}+\tfrac{1}{4}\varepsilon)\subseteq C_{\gamma_i}(w(\gamma_i))$.}
		\label{fig:cylindernesting}
	\end{figure}
	
	Comparing expression \eqref{eq:expressionKC} for $K_{C_i}$ and expression \eqref{eq:collarwidth} for $w(\gamma_i)$, we see that $w(\gamma_i)-K_{C_i}$ is a positive number, with infimum when $\ell(\gamma_i)\rightarrow 0$ \cite{parlier2016}. A straightforward computation shows that for $\varepsilon=0.72$ this infimum is equal to $0.24\ldots>\tfrac{1}{3}\varepsilon$. Since $w(\gamma_i)-K_{C_i}$ is decreasing as a function of $\varepsilon$, it follows that $w(\gamma_i)-K_{C_i}\geq \tfrac{1}{2}\varepsilon$ for all $\varepsilon\leq 0.72$. 
\end{proof}

\begin{corollary}\label{cor:cylinderdistancepaths}
	Let $\mM$ be a closed hyperbolic surface and let $\varepsilon\leq 0.72$. All edges in a standard triangulation of an $\varepsilon'$-thin cylinder in $\mMthin$ are distance paths. 
\end{corollary}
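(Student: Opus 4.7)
The plan is to fix an edge $e=(u,v)$ of the standard triangulation and show that the geodesic realization of $e$ drawn in Figure~\ref{fig:standardtriangulation} is a distance path in $\mM$. I would lift everything to the universal cover $\mH^2$ and classify competing curves from $u$ to $v$ in $\mM$ according to which lift of $v$ they realize.

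First I would invoke Lemma~\ref{lem:bandaroundcollar}: the buffered collar $C_\gamma(K_C + \tfrac{1}{3}\varepsilon)$ is embedded in $\mM$, so it lifts to a disjoint family of equidistant tubes in $\mH^2$. Each such tube is convex in $\mH^2$, since $x \mapsto d_{\mH^2}(x,\tilde\gamma)$ is convex along $\mH^2$-geodesics. Fix a lift $\tilde u$ of $u$ in the tube $\tilde U$ whose core is a lift $\tilde\gamma$ of $\gamma$; the cyclic stabilizer of $\tilde U$ is generated by the translation $\tau$ of length $\ell(\gamma)$ along $\tilde\gamma$. Any curve from $u$ to $v$ in $\mM$ has length at least $d_{\mH^2}(\tilde u,\tilde v')$ for some lift $\tilde v'\in\pi^{-1}(v)$, so it suffices to show that the lift $\tilde v_0$ corresponding to the drawn realization of $e$ minimizes this distance.

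\textbf{Case A (same tube).} If $\tilde v'\in\tilde U$, then $\tilde v'=\tau^k\tilde v_0$ for some $k\in\mathbb Z$, and by convexity the $\mH^2$-geodesic from $\tilde u$ to $\tilde v'$ stays in $\tilde U$. For each of the seven edge types in Definition~\ref{def:triangulatingcylinder} I would verify that $k=0$ minimizes $d_{\mH^2}(\tilde u,\tau^k\tilde v_0)$; since the three marked points on each of $\gamma,\gamma^\pm$ are equally spaced and $\tfrac{1}{3}\ell(\gamma)<\tfrac{2}{3}\varepsilon'$, the competitors $k=\pm 1$ correspond to ``going the long way'' around the waist and are strictly longer by an elementary hyperbolic trigonometry computation in Fermi coordinates around $\tilde\gamma$.

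\textbf{Case B (different tube).} If $\tilde v'\notin\tilde U$, then $\tilde v'$ lies in a disjoint lifted buffered tube $\tilde U'$, hence
\[
d_{\mH^2}(\tilde u,\tilde v')\;\geq\;d_{\mH^2}(\tilde u,\partial\tilde U)+d_{\mH^2}(\tilde v',\partial\tilde U').
\]
Each summand is at least $\tfrac{1}{3}\varepsilon$, and if an endpoint lies on the core $\gamma$ the corresponding summand improves to $K_C+\tfrac{1}{3}\varepsilon$. I would then check edge by edge that the length of the drawn realization of $e$ is strictly smaller than this lower bound, using $\ell(\gamma)<2\varepsilon'$ and formula~\eqref{eq:expressionKC} for $K_C$.

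\textbf{Main obstacle.} The tight case is the diagonal edges $(x_i,x_{i+1}^\pm)$: by the hyperbolic Pythagorean theorem applied to the right triangle with legs $\tfrac{1}{3}\ell(\gamma)$ along $\gamma$ and $K_C$ perpendicular, one has $\cosh(\ell_{\mathrm{diag}})=\cosh(\tfrac{1}{3}\ell(\gamma))\cosh(K_C)$, so for $u=x_i$ on $\gamma$ and $v=x_{i+1}^+$ on $\gamma^+$ the Case~B comparison reduces to showing
\[
K_C+\tfrac{2}{3}\varepsilon\;>\;\arccosh\bigl(\cosh(\tfrac{1}{3}\ell(\gamma))\cosh(K_C)\bigr).
\]
This is the inequality that consumes both the precise buffer width $\tfrac{1}{3}\varepsilon$ of Lemma~\ref{lem:bandaroundcollar} and the threshold $\varepsilon\leq 0.72$, and would presumably be handled by a monotonicity argument in $\ell(\gamma)\in(0,2\varepsilon']$.
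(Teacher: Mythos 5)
Your lift-to-universal-cover dichotomy between Case~A (competing lift in the same buffered tube) and Case~B (competing lift in a different tube) is a clean formalization of the paper's argument, and Case~A is handled correctly. However, the Case~B lower bound you propose is not strong enough for the boundary-to-boundary edges $(x_i^\pm, x_{i+1}^\pm)$, and the ``edge-by-edge check'' you defer would in fact fail there. Both endpoints lie at radius $K_C$ inside buffered tubes of radius $K_C + \tfrac{1}{3}\varepsilon$, so each summand in your bound equals $\tfrac{1}{3}\varepsilon$ and the bound is $\tfrac{2}{3}\varepsilon = 0.48$. But the length $\ell$ of the drawn realization of $(x_i^+, x_{i+1}^+)$ satisfies, from the appendix computation,
\[
\sinh\bigl(\tfrac{1}{2}\ell\bigr) \;=\; \sinh\bigl(\tfrac{1}{6}\ell(\gamma)\bigr)\cosh(K_C) \;=\; \frac{\sinh\bigl(\tfrac{1}{6}\ell(\gamma)\bigr)\sinh(\varepsilon)}{\sinh\bigl(\tfrac{1}{2}\ell(\gamma)\bigr)},
\]
which tends to $\tfrac{1}{3}\sinh(\varepsilon)$ as $\ell(\gamma)\to 0$, so $\ell\to 2\arcsinh(\tfrac13\sinh(0.72))\approx 0.517 > 0.48$. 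Thus for thin-waisted cylinders your Case~B bound does not dominate the drawn edge, and the argument as stated does not close.

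The correct Case~B argument for these edges uses the injectivity radius of the thick part rather than the tube buffer (compare the proof of Lemma~\ref{lem:graphwithshortedges}): the drawn boundary edge has length $< \varepsilon$, any Case~B competitor must exit the cylinder and hence meet $\mMthick$, so the non-contractible loop formed by concatenating the competitor with the drawn edge meets $\mMthick$ and therefore has length at least $2\varepsilon$, forcing the competitor to have length $>\varepsilon$, strictly longer than the drawn edge. Your buffer bound \emph{does} close for the vertical edges $(x_i,x_i^\pm)$ and diagonals $(x_i, x_{i+1}^\pm)$, since $K_C$ then appears on both sides; but for the diagonal your proposed monotonicity analysis is unnecessary overkill --- since $\arccosh(\cosh a\cosh b)< a+b$, the needed inequality reduces to $\tfrac13\ell(\gamma)<\tfrac23\varepsilon$, which is immediate from $\ell(\gamma)<2\varepsilon'<2\varepsilon$. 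That triangle-inequality shortcut is exactly what the paper uses.
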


\begin{proof}
	It is clear that edges of the form $(x_i^-,x_{i+1}^-),(x_i,x_{i+1}),(x_i^+,x_{i+1}^+)$ for $i=1,2,3$ are distance paths. Now, consider the edge of length $K_C$ between $x_i$ and $x_i^+$. Because we know the metric of the cylinder, it can be shown explicitly that there are no shorter paths completely contained in the cylinder. Furthermore, because the collar $C_\gamma(K_C+\tfrac{1}{3}\varepsilon)$ is embedded by Lemma~\ref{lem:bandaroundcollar}, any path that leaves the top half of the cylinder and returns through the bottom half has length at least $K_C+\tfrac{2}{3}\varepsilon$. It follows that the edges of the form $(x_i,x_i^+)$ are distance paths. By symmetry, the edges of the form $(x_i^-,x_i)$ are distance paths as well.  
	
	Finally, consider the edge between $x_i$ and $x_{i+1}^+$. Because $d(x_i,x_{i+1})=\tfrac{1}{3}\ell(\gamma)<\tfrac{2}{3}\varepsilon$ and $d(x_{i+1},x_{i+1}^+)=K_C$, we see from the triangle inequality that $d(x_i,x_{i+1}^+)<K_C+\tfrac{2}{3}\varepsilon$. Because any path that leaves the top half of the cylinder and returns through the bottom part of the cylinder has length at least $K_C+\tfrac{2}{3}\varepsilon$ by the same reasoning as above, it follows that edges of the form $(x_i,x_{i+1}^+)$ are distance paths. By symmetry, edges of the form $(x_i^-,x_{i+1})$ are distance paths as well. 
\end{proof}

For $\varepsilon'$-thick cylinders, we see from Equation \eqref{eq:expressionKC} for $K_C$ that the width $K_C$ is close to zero. It turns out that we do not need to place three points on its waist and on each of its two boundary curves. Instead, three vertices on its waist suffice. 

\begin{definition}\label{def:standardcycle}
	Let $\mM$ be a closed hyperbolic surface. Let $C$ be a $\varepsilon'$-thick hyperbolic cylinder in $\mMthin$ with waist $\gamma$. Place three equally-spaced points $x_i,i=1,2,3$ on $\gamma$. Let $V=\{x_i\;|\;i=1,2,3 \}$ and $E=\{(x_1,x_2),(x_2,x_3),(x_3,x_1)\}$. We call $(V,E)$ a \emph{standard cycle} of $C$.
\end{definition}

\subsection{Constructing a distance Delaunay triangulation of $\mM$ with few vertices}\label{subsec:thickhyperbolicsurfaces}

After constructing sets of vertices in the cylinders in the $\varepsilon$-thin part $\mMthin$, we construct a sufficiently dense and well-distributed set of vertices in the remainder of the surface. The following definition shows more precisely how we construct a set of vertices in $\mMthick$ and a corresponding Delaunay triangulation. 

\begin{definition}\label{def:standardtriangulation}
	Set $\varepsilon=0.72$ and $\varepsilon'=0.99\varepsilon$. Let $\mM$ be a closed hyperbolic surface. Let $\mcP_1$ be the set consisting of the vertices of a standard triangulation of every $\varepsilon'$-thin cylinder in $\mMthin$ together with the vertices of a standard cycle for every $\varepsilon'$-thick cylinder in $\mMthin$. Let $T_j$ be the union of triangles in a standard triangulation $(V_j,E_j)$ of an $\varepsilon'$-thin cylinder $C_j$. For every $\varepsilon'$-thick cylinder $C_j$, set $T_j=\emptyset$. Define $\mcP_2$ to be a maximal set in $\mM\setminus\cup_j T_j$ such that $d(p,q)\geq\tfrac{1}{2}\varepsilon$ for all distinct $p\in\mcP_1\cup\mcP_2,q\in\mcP_2$. Denote the union $\mcP_1\cup\mcP_2$ by $\mcP$ and let $\mcT$ be the Delaunay triangulation of $\mcP$ on $\mM$ obtained after projecting a Delaunay triangulation of $\pi^{-1}(\mcP)$ in $\mD$ to $\mM$. We call $\mcT$ a \emph{thick-thin Delaunay triangulation} of $\mM$. The vertices in $\mcP_1$ and $\mcP_2$ are called the \emph{cylinder vertices} and \emph{non-cylinder vertices} of $\mcT$, respectively. 
\end{definition}

\begin{remark}
	Because by Corollary~\ref{cor:cylinderdistancepaths} all edges in a standard triangulation of any $\varepsilon'$-thin cylinder are distance paths if we choose $\varepsilon\leq0.72$, we have chosen $\varepsilon=0.72$ in Definition~\ref{def:standardtriangulation}. Namely, we will see in the proof of Theorem~\ref{thm:upperboundDT} that the larger we choose $\varepsilon$, the smaller the constant (in our case 151) in the upper bound for the number of vertices. As in Section~\ref{subsec:thinhyperboliccylinders} we will fix $\varepsilon=0.72$ and $\varepsilon'=0.99\varepsilon$ throughout this subsection. 
\end{remark}

The edges between vertices on the same boundary curve of $C_j$ are not equal to the boundary curves of $C_j$ (because the latter are not geodesics), so $T_j$ is strictly contained in $C_j$. We define $\mcP_2$ as a point set in $\mM\setminus\cup_j T_j$ instead of in $\mM\setminus\cup_j C_j$ to simplify our proof of Lemma~\ref{lem:simplicialcomplex}, where we show that a thick-thin Delaunay triangulation of $\mM$ is a simplicial complex.   

The definition of $\mcP$ does not explicitly forbid placing vertices of $\mcP_2$ in $\varepsilon'$-thick cylinders. However, we will see in the next lemma that there are no vertices of $\mcP_2$ in $\varepsilon'$-thick cylinders, because then they would be too close to the vertices of a standard cycle. 

\begin{lemma}\label{lem:standardcycle}
	Let $\mM$ be a closed hyperbolic surface and let $\mcT$ be a thick-thin Delaunay triangulation of $\mM$. Every vertex of $\mcT$ contained in an $\varepsilon'$-thick cylinder in $\mMthin$ is a cylinder vertex.  
\end{lemma}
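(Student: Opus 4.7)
The plan is to argue by contradiction. Suppose $p$ is a vertex of $\mcT$ lying in some $\varepsilon'$-thick cylinder $C \subset \mMthin$ with waist $\gamma$ and standard-cycle vertices $x_1,x_2,x_3 \in \mcP_1$. By Lemma~\ref{lem:collarlemma2} the hyperbolic cylinders of $\mMthin$ are pairwise disjoint, so any cylinder vertex belonging to a different cylinder cannot lie in $C$. Hence if $p$ is not one of the $x_i$, it must lie in $\mcP_2$, and the defining condition of $\mcP_2$ forces $d(p,x_i)\geq\tfrac{1}{2}\varepsilon$ for $i=1,2,3$. I will rule this out by producing an index $i$ with $d(p,x_i)<\tfrac{1}{2}\varepsilon$.

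To bound $d(p,x_i)$ I lift to the universal cover. Pick a lift $\tilde\gamma\subset\mD$ of $\gamma$ and a lift $\tilde p$ of $p$ minimising the distance to $\tilde\gamma$, and let $\tilde q$ be the foot of the perpendicular from $\tilde p$ to $\tilde\gamma$, so that $d(\tilde p,\tilde q)=d(p,\gamma)\leq K_C$. Since the three $x_i$ are equally spaced on $\gamma$ at pairwise arc distance $\tfrac{1}{3}\ell(\gamma)$, their lifts on $\tilde\gamma$ form an equally spaced set of spacing $\tfrac{1}{3}\ell(\gamma)$; so some index $i$ admits a lift $\tilde x_i$ on $\tilde\gamma$ with $d(\tilde q,\tilde x_i)\leq\tfrac{1}{6}\ell(\gamma)\leq\tfrac{1}{3}\varepsilon$. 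Applying the hyperbolic Pythagorean theorem to the right triangle $\tilde p\tilde q\tilde x_i$ (the right angle being at $\tilde q$) gives
\[
d(p,x_i)\;\leq\;d(\tilde p,\tilde x_i)\;=\;\arccosh\!\bigl(\cosh d(\tilde p,\tilde q)\cdot\cosh d(\tilde q,\tilde x_i)\bigr)\;\leq\;\arccosh\!\bigl(\cosh K_C\cdot\cosh(\tfrac{1}{3}\varepsilon)\bigr).
\]

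To close the argument I bound $\cosh K_C$ using equation~\eqref{eq:expressionKC} together with the lower bound $\ell(\gamma)\geq 2\varepsilon'$ defining an $\varepsilon'$-thick cylinder, which yields $\cosh K_C\leq\sinh(\varepsilon)/\sinh(\varepsilon')$. It then suffices to verify the single numerical inequality
\[
\frac{\sinh\varepsilon}{\sinh\varepsilon'}\cdot\cosh(\tfrac{1}{3}\varepsilon)\;<\;\cosh(\tfrac{1}{2}\varepsilon)
\]
with $\varepsilon=0.72$ and $\varepsilon'=0.99\varepsilon$; a direct computation gives roughly $1.041<1.065$, hence $d(p,x_i)<\tfrac{1}{2}\varepsilon$, contradicting the defining property of $\mcP_2$.

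The only mildly delicate step is transferring the upper bound from $d(\tilde p,\tilde x_i)$ in $\mD$ to $d(p,x_i)$ in $\mM$; this is immediate since the quotient distance is always bounded above by the $\mD$-distance between any pair of lifts, so no further geometric argument is needed. The real content of the proof is the tight numerical estimate above, which is precisely what pins down the choice $\varepsilon=0.72$ made in Definition~\ref{def:standardtriangulation}.
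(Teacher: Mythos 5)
Your proof is correct and uses essentially the same approach as the paper: both arguments reduce to applying the hyperbolic Pythagorean theorem to a right triangle with one leg of length at most $\tfrac{1}{6}\ell(\gamma)\leq\tfrac{1}{3}\varepsilon$ along the waist and the other perpendicular of length at most $K_C$, then checking the same numerical inequality $\cosh K_C\cdot\cosh(\tfrac{1}{3}\varepsilon)<\cosh(\tfrac{1}{2}\varepsilon)$ at $\varepsilon=0.72$, $\varepsilon'=0.99\varepsilon$. The only cosmetic difference is that the paper phrases this as the three disks $B_{x_i}(\tfrac{1}{2}\varepsilon)$ covering the collar $C_\gamma(K_C)$, whereas you argue by direct contradiction that an arbitrary $p\in C$ lands within $\tfrac{1}{2}\varepsilon$ of some $x_i$; the two formulations are equivalent and the underlying trigonometry is identical.
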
 

\begin{proof}
	Let $\mcP_1$ be the set of cylinder vertices and $\mcP_2$ the set of non-cylinder vertices. Let $C$ be an arbitrary $\varepsilon'$-thick cylinder with waist $\gamma$ and standard cycle $(V,E)$. We will show that the union $U$ of the disks of radius $\tfrac{1}{2}\varepsilon$ centered at the vertices of $V$ covers $C$ completely. Namely, this implies that every point of $C$ has distance at most $\tfrac{1}{2}\varepsilon$ to a vertex of $V$. Because $d(p,q)\geq\tfrac{1}{2}\varepsilon$ for all $p\in\mcP_1$ and $q\in\mcP_2$, it follows that there are no vertices of $\mcP_2$ contained in $C$.
	
	To prove that $U$ covers $C$ completely, first observe that $d(x_i,x_{i+1})=\tfrac{1}{3}\ell(\gamma)<\tfrac{2}{3}\varepsilon$ for all $i=1,2,3$ (counting modulo 3). Therefore, the circles of radius $\tfrac{1}{2}\varepsilon$ centered at $x_i$ and $x_{i+1}$ intersect in two points, of which we call one $p$. Since the collar $C_\gamma(d(\gamma,p))$ is contained in $U$, it suffices to show that $K_C<d(\gamma,p)$, because then $C=C_\gamma(K_C)\subset C_\gamma(d(\gamma,p))\subset U$. From equation \eqref{eq:expressionKC} for $K_C$ we know that
	$$ \cosh(K_C)=\dfrac{\sinh(\varepsilon)}{\sinh(\tfrac{1}{2}\ell(\gamma))}\leq\dfrac{\sinh(\varepsilon)}{\sinh(\varepsilon')}\leq 1.02,$$
	where we substituted $\varepsilon'=0.99\varepsilon$ and $\varepsilon=0.72$ in the last step. On the other hand, the hyperbolic Pythagorean theorem yields
	$$ \cosh (d(\gamma,p))=\dfrac{\cosh(\tfrac{1}{2}\varepsilon)}{\cosh(\tfrac{1}{6}\ell(\gamma))}\geq\dfrac{\cosh(\tfrac{1}{2}\varepsilon)}{\cosh(\tfrac{1}{3}\varepsilon)}\geq 1.03,$$
	(see Figure~\ref{fig:standardcycletriangle}) where again we substituted $\varepsilon=0.72$ in the last step. We conclude that $K_C<d(\gamma,p)$, which finishes the proof.
	
	\begin{figure}[h]
		\centering
		\includegraphics[width=0.5\textwidth]{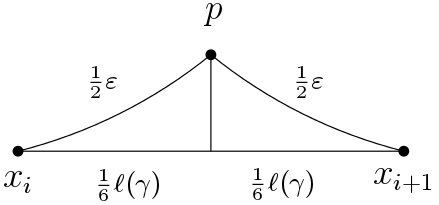}
		\caption{Computing $d(\gamma,p)$.}
		\label{fig:standardcycletriangle}
	\end{figure}
	
\end{proof}

Even though the set of vertices of a thick-thin Delaunay triangulation of $\mM$ contains the vertices of a standard triangulation $(V_j,E_j)$ for every $\varepsilon'$-thin cylinder $C_j$, a piori it is not clear that the edges in $E_j$ are edges in $\mcT$ as well. In the next lemma, we will show that for every $\varepsilon'$-thin cylinder the triangles in a standard triangulation are Delaunay triangles with respect to the set of vertices of any thick-thin Delaunay triangulation of $\mM$. Namely, if this holds, then there exists a Delaunay triangulation of $\mcP$ on $\mM$ containing a standard triangulation of every $\varepsilon'$-thin cylinder in $\mMthin$.

\begin{lemma}\label{lem:cylindernottooshort}
	Let $\mM$ be a closed hyperbolic surface. Let $\mcT$ be a thick-thin Delaunay triangulation of $\mM$ with vertex set $\mcP$ and let $C$ be an $\varepsilon'$-thin cylinder in $\mMthin$ with waist $\gamma$. Let $(V,E)$ be a standard triangulation of $C$ such that $V\subset\mcP$. Then all triangles of $(V,E)$ are Delaunay triangles with respect to the point set $\mcP$.
\end{lemma}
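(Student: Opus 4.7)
My plan is to lift the standard triangulation to the universal cover $\mD$ and verify the Delaunay property triangle by triangle. Fix a lift $\tilde\gamma$ of $\gamma$ and work in Fermi coordinates $(\rho,t)$ about $\tilde\gamma$, so that the lift $\tilde C$ of $C$ is the strip $|\rho|\le K_C$ with metric $ds^2 = d\rho^2 + \cosh^2(\rho)\,\ell^2(\gamma)\,dt^2$. The points $x_i$ lift to an equally spaced sequence on $\tilde\gamma$ of spacing $\tfrac{1}{3}\ell(\gamma)$, and $x_i^\pm$ lift to periodic sequences on the equidistant curves $\rho=\pm K_C$. Using the threefold rotational symmetry of the configuration and the reflection $\rho\leftrightarrow-\rho$, it suffices to treat two representative triangles: $\Delta_1=(x_i,x_{i+1},x_{i+1}^+)$ and $\Delta_2=(x_i,x_i^+,x_{i+1}^+)$.

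For each $\Delta_k$ the key step is to bound its circumscribed disk $D_k$ by the enlarged strip $C_{\tilde\gamma}(K_C+\tfrac{1}{3}\varepsilon)$. I would write down the explicit Fermi coordinates of the three vertices, solve for the circumcenter as the intersection of two perpendicular bisectors, and verify (using $\varepsilon=0.72$ and $\ell(\gamma)<2\varepsilon'$) that the resulting radius keeps $D_k$ within $\tfrac{1}{3}\varepsilon$ of $\gamma^\pm$ in the $\rho$-direction. Once this inclusion is established, Lemma~\ref{lem:bandaroundcollar} implies that $D_k$ projects into the embedded collar $C_\gamma(K_C+\tfrac{1}{3}\varepsilon)$, which is disjoint from the analogous collar of every other closed geodesic in $\mMthin$; thus no lift of a vertex belonging to a different cylinder can lie in $D_k$.

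It then remains to rule out vertices of $\mcP$ actually living in $C$. There are two kinds. First, the six standard-triangulation vertices of $C$ other than those of $\Delta_k$: a direct distance computation in Fermi coordinates shows each lies outside $D_k$, as does every other lift of a cylinder vertex in the strip $\tilde C$. Second, any non-cylinder vertex $q\in\mcP_2$ possibly located in the thin sliver $C\setminus T_j$ between $\gamma^+$ and the upper chord of the standard triangulation (and symmetrically below): by the definition of $\mcP_2$, such a point satisfies $d(q,p)\ge\tfrac{1}{2}\varepsilon$ for every cylinder vertex $p$, and in particular for the vertex of $\Delta_k$ on $\gamma^+$ that lies on $\partial D_k$. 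Since $D_k$ only protrudes slightly beyond $\gamma^+$, the portion of $D_k$ above $\gamma^+$ near this vertex has diameter smaller than $\tfrac{1}{2}\varepsilon$, forcing $q\notin D_k$.

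The main obstacle is the circumradius bound in the second step. As $\ell(\gamma)\to 0$ the triangles $\Delta_1,\Delta_2$ become elongated perpendicular to $\tilde\gamma$ (since $K_C\to\infty$), and naive estimates of the circumradius grow without bound. The argument must therefore exploit the precise Fermi coordinates of the vertices, together with the estimate from the proof of Lemma~\ref{lem:bandaroundcollar} that $w(\gamma)-K_C\ge\tfrac{1}{2}\varepsilon$ under our choice $\varepsilon=0.72$, to show that the circumcircle nonetheless stays within a $\tfrac{1}{3}\varepsilon$-band beyond $\gamma^\pm$. This hyperbolic-trigonometric computation is the technical heart of the lemma and is the kind of calculation plausibly deferred to the appendix.
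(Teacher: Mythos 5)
Your strategy---lift to Fermi coordinates, study the circumscribed disk of the upper-half triangles, show it stays inside the collar $C_\gamma(K_C+\tfrac13\varepsilon)$, and then invoke Lemma~\ref{lem:bandaroundcollar} to exclude lifts of vertices from other cylinders---follows the paper's proof. The paper streamlines the setup slightly by noting that $x_i,x_{i+1},x_i^+,x_{i+1}^+$ are concircular by reflection symmetry, so there is a single disk $D_i$ per half-cylinder and no need to treat $\Delta_1$ and $\Delta_2$ separately; but that is only a matter of efficiency.

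The remaining part of your argument has a gap. You must rule out $\mcP_2$-vertices in $D_k$, and such a vertex can lie anywhere in $D_k\setminus T_j$. That region consists both of the sliver between the geodesic chord $(x_i^+,x_{i+1}^+)$ and the equidistant curve $\gamma^+$ (still inside $C$) and of the protrusion of $D_k$ past $\gamma^+$ into $\mMthick$; your prose conflates the two, and "the portion of $D_k$ above $\gamma^+$ near this vertex has diameter smaller than $\tfrac12\varepsilon$" is neither the right region nor the right quantity. What is actually needed is that every point of $D_k$ lying above the chord is within distance $\tfrac12\varepsilon$ of $x_i^+$ or of $x_{i+1}^+$, and a diameter bound on the cap does not deliver that. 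The paper establishes the needed statement by first observing that \emph{every} $p\in\mcP\setminus V$ (from $\mcP_2$ by definition, or from another cylinder by Lemma~\ref{lem:bandaroundcollar}) satisfies $d(p,x_j^\pm)\geq\tfrac12\varepsilon$, then reducing to the worst-case $p$ on the perpendicular bisector with $d(p,x_i^+)=d(p,x_{i+1}^+)=\tfrac12\varepsilon$, and finally checking $d(c_i,p)>d(c_i,x_i)$ by explicit hyperbolic trigonometry in the right-angled quadrilaterals of Figure~\ref{fig:cylinderdelaunay}, uniformly over $\ell(\gamma)\in(0,2\varepsilon']$. So the computation you defer is not a self-contained circumradius bound: it is the comparison $d(c_i,p)>r$ for that worst-case $p$, and your outline would need to be rephrased around that target before the deferred trigonometry could be slotted in.
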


\begin{remark}\label{rem:distanceverticesstandardtriangulation}
	The proof of Lemma~\ref{lem:cylindernottooshort} is given in Appendix~\ref{sec:appendixupperbound}. Even though we do not give the proof here, we note that in the proof it is shown as an intermediate step that $d(x_i^\pm,x_{i+1}^\pm)<\varepsilon$ for all $i=1,2,3$. This inequality is used once more in the proof of Lemma~\ref{lem:graphwithshortedges}. 
\end{remark}
  
Henceforth, we will assume that for each $\varepsilon'$-thin cylinder the vertices and edges of a standard triangulation are contained in a thick-thin Delaunay triangulation of $\mM$. To show that $\mcT\in\mcD(\mM)$, we must show that $\mcT$ is a simplicial complex, i.e. it does not contain any 1- or 2-cycles, and that its edges are distance paths. 

In the next lemma, we show that any edge that intersects $\mMthick$ has length smaller than $\varepsilon$. Moreover, we show that it follows that all edges that intersect $\mMthick$ are distance paths and that there are no 1- and 2-cycles consisting of edges intersecting $\mMthick$. 

\begin{lemma}\label{lem:graphwithshortedges}
	Let $\mM$ be a closed hyperbolic surface and let $\mcT$ be a thick-thin Delaunay triangulation of $\mM$. Any edge of $\mcT$ that intersects $\mMthick$ has length smaller than $\varepsilon$ and is a distance path. Moreover, there are no 1- or 2-cycles that intersect $\mMthick$ and consist of edges of length smaller than $\varepsilon$.
\end{lemma}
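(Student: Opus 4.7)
The plan is to exploit the near-maximality of $\mcP_2$ in Definition~\ref{def:standardtriangulation}. If one tried to add any point of $\mM\setminus\cup_j T_j$ to $\mcP_2$, the $\tfrac{1}{2}\varepsilon$-separation would be violated, so every $z \in \mM\setminus\cup_j T_j$ satisfies $d(z,\mcP) < \tfrac{1}{2}\varepsilon$. Since every $T_j$ lies inside $\mMthin$, this covering-radius estimate holds throughout $\mMthick$; combined with the empty-disk property of Delaunay triangles, it will bound the lengths of edges meeting $\mMthick$.

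For the length bound, let $(u,v) \in \mcT$ be an edge with some $y \in (u,v) \cap \mMthick$, lift it to $\tilde u \tilde v$ in $\mD$, and let $\tilde T$ be one of the two Delaunay triangles adjacent to this lifted edge, with third vertex $\tilde w$ and empty circumdisk $B(\tilde c, R)$. Setting $c := \pi(\tilde c)$, the empty-disk condition translates into $d(c,p) \geq R$ for every $p\in\mcP$, with equality whenever $p\in\{u,v,w\}$ (as the lifts realizing each minimum sit on the circumcircle). Hence $d(c,\mcP) = R$. In the generic case $c \in \mM\setminus\cup_j T_j$ the covering-radius estimate gives $R < \tfrac{1}{2}\varepsilon$ at once, and the edge length is then bounded via the triangle inequality in $\mD$ as $\ell((u,v)) = d(\tilde u,\tilde v) \leq 2R < \varepsilon$. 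In the remaining case $c$ lies inside a standard-triangulation region of a thin cylinder; then one of $u,v,w$ must be a cylinder vertex realizing $d(c,\mcP)=R$, and an analysis using the explicit standard-triangulation geometry from Corollary~\ref{cor:cylinderdistancepaths} and Remark~\ref{rem:distanceverticesstandardtriangulation} again yields $R < \tfrac{1}{2}\varepsilon$. I expect this case distinction to be the main technical obstacle.

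With the edge $\gamma := (u,v)$ satisfying $L := \ell(\gamma) < \varepsilon$, the distance-path property follows from a standard universal-cover argument. Suppose $\gamma'$ were a path from $u$ to $v$ with $\ell(\gamma') < L$. The loop $\alpha := \gamma' \cdot \bar\gamma$ based at $u$ has length $L + \ell(\gamma') < 2\varepsilon$, and conjugating by the subarc of $\gamma$ from $u$ to $y$ produces a loop based at $y$ of the same length. If $\alpha$ is non-contractible, this forces $\injrad(y)<\varepsilon$, contradicting $y\in\mMthick$. If $\alpha$ is contractible, then $\gamma$ and $\gamma'$ are homotopic rel endpoints; but $\gamma$ is the unique length-minimizer in its homotopy class, contradicting $\ell(\gamma')<L$.

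The same conjugation trick rules out short 1- and 2-cycles. A 1-cycle is a self-loop at some vertex $u$; by the first part its length is $<\varepsilon$, and being a self-loop it corresponds to a nontrivial deck transformation in $\pi_1$ and is therefore non-contractible. Conjugating to $y\in\mMthick$ on the loop yields a non-trivial loop at $y$ of length $<\varepsilon$, forcing $\injrad(y)<\tfrac{1}{2}\varepsilon$, a contradiction. For a 2-cycle, two distinct geodesic edges $\gamma_1,\gamma_2$ between the same two vertices cannot be homotopic rel endpoints on a hyperbolic surface (geodesics are unique in their homotopy class), so the loop $\gamma_1 \cdot \bar\gamma_2$ is non-contractible, of length $<2\varepsilon$, and yields the same contradiction.
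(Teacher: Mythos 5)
Your covering-radius observation, the $\ell((u,v))\leq 2R<\varepsilon$ chord bound in the generic case, and the conjugation argument for the distance-path and $1$/$2$-cycle claims are all correct and essentially match the paper (the paper handles the contractibility of $(u,v)\cup\gamma'$ a little more implicitly, but the content is the same). The genuine gap is exactly the case you flagged, namely when the circumcenter $c$ lands in some $T_j$. There the paper does \emph{not} bound the circumradius at all, and for good reason: the maximality argument that gives $r<\tfrac{1}{2}\varepsilon$ needs $c\in\mM\setminus\cup_j T_j$ (so that $c$ could otherwise have been added to $\mcP_2$), and it breaks down once $c\in T_j$. Indeed inside a long thin cylinder the edge $(x_i,x_i^+)$ has length $K_{C_j}$, which blows up as $\ell(\gamma_j)\to 0$, so Delaunay triangles with circumcenter in $T_j$ can have circumradius far larger than $\tfrac12\varepsilon$; nothing in the definitions forces $R<\tfrac12\varepsilon$ there, and your proposed reduction to Corollary~\ref{cor:cylinderdistancepaths} and Remark~\ref{rem:distanceverticesstandardtriangulation} will not produce it.

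What the paper actually does in this case is bound the edge length \emph{directly}, not via $R$. It first argues that when $c\in T_j$ at least one endpoint of $(u,v)$ must be a cylinder vertex in $\mcP_1$ (if both were, the edge would be one of the standard edges, hence contained in $T_j$, contradicting that it meets $\mMthick$), and that the configuration is then as in Figure~\ref{fig:cylindercircumdisk}: the triangle is $(u,x_i^+,x_{i+1}^+)$ with $u$ outside the cylinder and the disk poking just past the chord $(x_i^+,x_{i+1}^+)$. From that geometry it reads off $\ell(u,v)<d(x_i^+,x_{i+1}^+)<\varepsilon$, with the last inequality supplied by Remark~\ref{rem:distanceverticesstandardtriangulation}. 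So you should abandon the goal of showing $R<\tfrac12\varepsilon$ in this case and instead follow the paper's route of constraining which vertices can appear in the triangle and comparing $\ell(u,v)$ to the short chord $(x_i^+,x_{i+1}^+)$.
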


\begin{proof}
	Let $(u,v)$ be an edge of $\mcT$ with non-empty intersection with $\mMthick$. Assume that $(u,v)$ is contained in a triangle $(u,v,w)$ in $\mcT$ with circumradius $r$ and circumcenter $c$. We will first show that $\ell(u,v)<\varepsilon$. We consider two cases, depending on which set $c$ is contained in. First, if $c\in T_j$ for some $\varepsilon'$-thin cylinder $C_j$, then at least one of $u$ and $v$ is contained in $\mcP_1$. If both are contained in $\mcP_1$, then $(u,v)$ is contained in $T_j$, because the edges of a standard triangulation of an $\varepsilon'$-thin cylinder are distance paths by Corollary~\ref{cor:cylinderdistancepaths}. This contradicts $(u,v)\cap\mMthick\neq\emptyset$, so we can assume that only one of $u$ and $v$, say $v$, is contained in $\mcP_1$. Then without loss of generality the situation is as depicted in Figure~\ref{fig:cylindercircumdisk}, where $\{v,w\}=\{x_i^+,x_{i+1}^+\}$.
	
	\begin{figure}[h]
		\centering
		\includegraphics[width=0.4\textwidth]{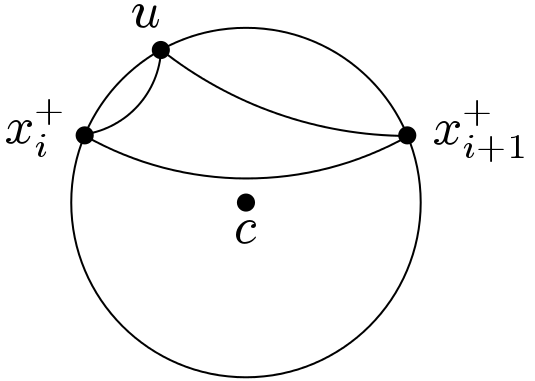}
		\caption{Circumscribed disk of a triangle $(u,x_i^+,x_{i+1}^+)$.}
		\label{fig:cylindercircumdisk}
	\end{figure}
	
	It follows that $\ell(u,v)<d(x_i^+,x_{i+1}^+)$. Because $d(x_i^+,x_{i+1}^+)<\varepsilon$ by Remark~\ref{rem:distanceverticesstandardtriangulation}, it follows that $\ell(u,v)<\varepsilon$.
	Second, if $c\in\mM\setminus\cup_{j\in I}T_j$, then we can deduce that $r<\tfrac{1}{2}\varepsilon$. Namely, if we suppose for a contradiction that $r\geq\tfrac{1}{2}\varepsilon$, then $d(c,p)\geq\tfrac{1}{2}\varepsilon$ for all $p\in\mcP$, because the circumcircle of $(u,v,w)$ is empty. Then we could add $c$ to $\mcP_2$, which contradicts its maximality. We conclude that $r<\tfrac{1}{2}\varepsilon$. Because $(u,v)$ is contained in a circle of radius $r<\tfrac{1}{2}\varepsilon$, it follows that $\ell(u,v)<\varepsilon$. Because $\ell(u,v)<\varepsilon$ in both cases, the first claim of the lemma follows.	
	
	To show that $(u,v)$ is a shortest distant path between its endpoints, suppose for a contradiction that it is not. Then there exists a geodesic $\gamma$ from $u$ to $v$, such that $\ell(\gamma)<\ell(u,v)$. This means that $(u,v)\cup \gamma$ is a homotopically non-trivial closed curve of length smaller than $2\ell(e)<2\varepsilon$. However, because $\injrad(x)>\varepsilon$ for all $x\in\mMthick$, every homotopically non-trivial closed curve $\gamma$ intersecting $\mMthick$ has length at least $2\varepsilon$, which contradicts $\ell((u,v)\cup\gamma)<2\varepsilon$. We conclude that $(u,v)$ is a distance path between its endpoints.
	
	A 1- or 2-cycle in $G$ corresponds to a homotopically non-trivial closed curve on $\mM$ \cite{bogdanov2016}. By the same argument as before, the length of a 1- or 2-cycle $\sigma$ intersecting $\mMthick$ is at least $2\varepsilon$. Therefore, there are no 1- or 2-cycles that intersect $\mMthick$ and consist of edges of length smaller than $\varepsilon$. 
\end{proof}

Using the previous lemma, we show that a thick-thin Delaunay triangulation of $\mM$ is a distance Delaunay triangulation. 

\begin{lemma}\label{lem:simplicialcomplex}
	Every thick-thin Delaunay triangulation of a closed hyperbolic surface is a distance Delaunay triangulation. 
\end{lemma}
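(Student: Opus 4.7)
The plan is to verify the three defining properties of a distance Delaunay triangulation in turn. The Delaunay property is immediate from Definition~\ref{def:standardtriangulation}, since $\mcT$ is obtained as the projection of a $\Gamma$-invariant Delaunay triangulation of $\pi^{-1}(\mcP)$ in $\mD$. So the real work is to prove that every edge of $\mcT$ is a distance path and that $\mcT$ contains no $1$- or $2$-cycles. In both cases the strategy is a case analysis on whether the edge (or cycle) meets $\mMthick$ or lies entirely in $\mMthin$, the two regimes being handled by Lemma~\ref{lem:graphwithshortedges} and by the results of Section~\ref{subsec:thinhyperboliccylinders}, respectively.

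To verify that edges are distance paths I would argue as follows. Fix an edge $e$ of $\mcT$. If $e$ meets $\mMthick$, Lemma~\ref{lem:graphwithshortedges} gives directly that $e$ has length less than $\varepsilon$ and is a distance path. Otherwise $e$ is contained in $\mMthin$, hence in a single connected component, i.e. a single hyperbolic cylinder $C_j$. If $C_j$ is $\varepsilon'$-thick, Lemma~\ref{lem:standardcycle} implies that both endpoints of $e$ are standard cycle vertices on the waist $\gamma$, so $e$ is an arc of $\gamma$ of length $\tfrac{1}{3}\ell(\gamma)<\tfrac{2}{3}\varepsilon$, which is trivially a distance path. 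If $C_j$ is $\varepsilon'$-thin, Lemma~\ref{lem:cylindernottooshort} guarantees that the standard triangulation of $C_j$ sits inside $\mcT$, and since by construction of $\mcP$ the only vertices inside $T_j$ are those of the standard triangulation, $e$ must be an edge of that standard triangulation; Corollary~\ref{cor:cylinderdistancepaths} then finishes the job.

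To verify that $\mcT$ is a simplicial complex, I would show that any hypothetical $1$- or $2$-cycle of $\mcT$ would correspond to a homotopically non-trivial closed curve on $\mM$ and then rule it out. If such a cycle meets $\mMthick$, every one of its edges meets $\mMthick$ (otherwise any edge wholly in $\mMthin$ lies in a cylinder whose other boundary components it cannot cross, contradicting the cycle having an edge in $\mMthick$), so all its edges have length less than $\varepsilon$ by Lemma~\ref{lem:graphwithshortedges}; but the last assertion of that same lemma precisely rules out such cycles. If instead the cycle lies entirely in $\mMthin$, it lies in a single cylinder $C_j$ (as cylinders are the connected components of $\mMthin$), and one then inspects the standard triangulation or standard cycle of $C_j$: neither contains a $1$- or $2$-cycle, so no such closed curve can arise.

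The main obstacle I anticipate is the second half of the case analysis, namely making sure that no ``exotic'' edge of $\mcT$ slips through: in particular, an edge whose endpoints are cylinder vertices but whose geodesic representative wanders outside $T_j$. The way I would close this gap is by combining two facts from Section~\ref{subsec:thickhyperbolicsurfaces}: Lemma~\ref{lem:standardcycle} confines vertices inside $\varepsilon'$-thick cylinders to the standard cycle, and Lemma~\ref{lem:cylindernottooshort} forces the standard triangulation to appear inside $\mcT$ for each $\varepsilon'$-thin cylinder. Together with the fact that $\mcP_2$ avoids $\cup_j T_j$, this makes the partition of edges into ``inside a $T_j$'', ``a standard cycle edge'', and ``crosses $\mMthick$'' exhaustive, so that Corollary~\ref{cor:cylinderdistancepaths} and Lemma~\ref{lem:graphwithshortedges} cover all cases.
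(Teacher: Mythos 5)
There is a genuine gap in the simplicial-complex half of your argument. You claim that if a $1$- or $2$-cycle of $\mcT$ meets $\mMthick$, then \emph{every} edge of that cycle meets $\mMthick$, with the parenthetical justification that an edge wholly inside a cylinder ``cannot cross its boundary components, contradicting the cycle having an edge in $\mMthick$.'' This is not a contradiction: in a $2$-cycle with common endpoints $u,v$, it is perfectly consistent for one edge $(u,v)$ to lie entirely inside a cylinder $C_j$ (so $u,v\in C_j$) while the other edge from $u$ to $v$ leaves $C_j$, passes through $\mMthick$, and returns. Nothing in the construction forbids this configuration a priori, so the case cannot simply be dismissed.

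The paper's proof handles precisely this mixed case as a separate branch of the case analysis. If the cylinder is $\varepsilon'$-thick, Lemma~\ref{lem:standardcycle} forces the in-cylinder edge to be a standard cycle edge of length $\tfrac{1}{3}\ell(\gamma)<\tfrac{2}{3}\varepsilon$, and then the final clause of Lemma~\ref{lem:graphwithshortedges} applies because \emph{both} edges are short. If the cylinder is $\varepsilon'$-thin, one must split further: if both endpoints of the in-cylinder edge lie on the same boundary curve, Remark~\ref{rem:distanceverticesstandardtriangulation} gives length $<\varepsilon$ and Lemma~\ref{lem:graphwithshortedges} again applies; but if the in-cylinder edge joins the waist $\gamma$ to a boundary curve, it may have length $K_C$ which can exceed $\varepsilon$, so the ``all edges are short'' argument breaks down, and the paper instead observes that the two candidate edges then cannot close up into a $2$-cycle because one endpoint lies on $\gamma$ and the other cannot. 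Your proof covers neither of these subcases; in particular, the last one cannot be repaired just by length estimates. The rest of your plan (the Delaunay property being definitional, the distance-path verification via Lemma~\ref{lem:graphwithshortedges}, Corollary~\ref{cor:cylinderdistancepaths}, Lemma~\ref{lem:standardcycle}, and Lemma~\ref{lem:cylindernottooshort}, and the all-in-$\mMthin$ cycle case) matches the paper's proof.
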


\begin{proof}
	Let $\mM$ be a closed hyperbolic surface and let $\mcT$ be a thick-thin Delaunay triangulation of $\mM$. By definition, $\mcT$ is a Delaunay triangulation. We will show that $\mcT$ does not contain any 1- or 2-cycles to prove that it is a simplicial complex. We know from Lemma~\ref{lem:graphwithshortedges} that any edge $(u,v)$ such that $(u,v)\cap\mMthick\neq\emptyset$ is not a 1-cycle. Because by construction there are no 1-cycles in a standard triangulation or standard cycle in $\mMthin$ as well, we conclude that $\mcT$ contains no 1-cycles.
	
	To prove that $\mcT$ does not contain any 2-cycles, consider two distinct edges $(u,v)$ and $(v,w)$ of $\mcT$ with at least one shared endpoint. There are three cases, depending on whether two, one or zero of the edges $(u,v)$ and $(v,w)$ intersect $\mMthick$.\\ First, if $(u,v)$ and $(v,w)$ both intersect $\mMthick$, then they do not form a 2-cycle by Lemma~\ref{lem:graphwithshortedges}.\\
	Second, if precisely one of $(u,v)$ and $(v,w)$, say $(u,v)$, intersects $\mMthick$, then $\ell(u,v)<\varepsilon$ and $(v,w)$ is an edge contained in a hyperbolic cylinder of $\mMthin$. If $(v,w)$ is an edge contained in an $\varepsilon'$-thick cylinder $C$ with waist $\gamma$, then $(v,w)$ is one of the edges of the standard cycle of $C$, because there are no other vertices in $C$ by Lemma~\ref{lem:standardcycle}. Then $\ell(v,w)=\tfrac{1}{3}\ell(\gamma)<\tfrac{2}{3}\varepsilon$, so $(u,v)$ and $(v,w)$ do not form a 2-cycle by Lemma~\ref{lem:graphwithshortedges}. Next, assume that $(v,w)$ is an edge in an $\varepsilon'$-thin cylinder with waist $\gamma$. Then either $w$ lies on $\gamma$ and $v$ lies on one of the boundary curves of $C$ or $v$ and $w$ both lie on the same boundary curve of $C$. If $w$ lies on $\gamma$ and $v$ on a boundary curve of $C$, then $(u,v)$ and $(v,w)$ do not form a 2-cycle, because $u$ does not lie on $\gamma$. If $v$ and $w$ both lie on the same boundary geodesic, then $\ell(v,w)<\varepsilon$ by Remark~\ref{rem:distanceverticesstandardtriangulation}, so $(u,v)$ and $(v,w)$ do not form a 2-cycle by Lemma~\ref{lem:graphwithshortedges}.\\
	Third, if neither $(u,v)$ nor $(v,w)$ intersects $\mMthick$, then $(u,v)$ and $(v,w)$ are both contained in a hyperbolic cylinder in $\mMthin$. They are contained in the same cylinder, because different cylinders are separated by $\mMthick$. Because by construction standard triangulations and standard cycles do not contain any 2-cycle, $(u,v)$ and $(v,w)$ do not form a 2-cycle.\\ 
	This finishes the case analysis and we conclude that $\mcT$ is a simplicial complex.
	
	To prove that all edges of $\mcT$ are distance paths, we know from Lemma~\ref{lem:graphwithshortedges} that any edge that intersects $\mMthick$ is a distance path. Because all edges in a standard triangulation are distance paths by Corollary~\ref{cor:cylinderdistancepaths} and because all edges in a standard cycle are distance paths by construction, we conclude that all edges in $\mcT$ are distance paths.
\end{proof}

\subsection{Proof of Theorem~\ref{thm:upperboundDT}}\label{subsec:prooftheorem1}

\begin{proof}
	\textbf{(Theorem~\ref{thm:upperboundDT})}\\
	Let $\mM$ be an arbitrary hyperbolic surface of genus $g$ and let $\mcT$ be a thick-thin Delaunay triangulation of $\mM$. By definition, $\mcT$ is a Delaunay triangulation. By Lemma~\ref{lem:simplicialcomplex}, $\mcT$ is a simplicial complex and all edges of $\mcT$ are distance paths. Hence, $\mcT\in\mcD(\mM)$.
	
	We will show here that the number of vertices of $\mcT$ is smaller than $151g$. By Lemma~\ref{lem:collarlemma2}, $\mMthin$ consists of at most $3g-3$ cylinders and each of these cylinders contains either 9 vertices (if it is $\varepsilon'$-thin) or 3 vertices (if it is $\varepsilon'$-thick). Therefore, $|\mcP_1|\leq 27g-27$. 
	
	To find an upper bound for the cardinality of $\mcP_2$, observe that for distinct $p,q\in\mcP_2$ the disks $B_p(\tfrac{1}{4}\varepsilon)$ and $B_q(\tfrac{1}{4}\varepsilon)$ of radius $\tfrac{1}{4}\varepsilon$ centered at $p$ and $q$, respectively, are embedded and disjoint. Therefore, the cardinality of $\mcP_2$ is bounded above by the number of disjoint, embedded disks of radius $\tfrac{1}{4}\varepsilon$ that we can fit in $\mM$. Because the area of a hyperbolic disk of radius $\tfrac{1}{4}\varepsilon$ is $2\pi (\cosh(\tfrac{1}{4}\varepsilon) -1)$ \cite{beardon2012} and because the area of $\mM$ is $4\pi(g-1)$ \cite{stillwell2012}, we obtain 	
	$$ |\mcP_2|\leq \dfrac{2(g-1)}{\cosh(\tfrac{1}{4}\varepsilon)-1}.$$
	Therefore,   
	$$ |\mcP|\leq 27g-27+\dfrac{2(g-1)}{\cosh(\tfrac{1}{4}\varepsilon)-1}\leq 151g.$$
	This finishes the proof. 
\end{proof}

\begin{remark}
The constant 151 is not optimal. We can obtain the stronger upper bound $|\mcP|\leq 124 g$ by looking more precisely at the upper bounds of $|\mcP_1|$ and $|\mcP_2|$ but because we are mainly interested in the the order of growth, we will not provide any details. 
\end{remark}

\section{Classes of hyperbolic surfaces attaining the order of the upper bound}\label{sec:proofmaintheorem}\label{sec:upperboundtight}

As our second result, we show that there exists a class of hyperbolic surfaces which attains the order of the upper bound presented in Theorem~\ref{thm:upperboundDT}. We will first introduce this class of hyperbolic surfaces and then state the precise result in Theorem~\ref{thm:lowerbound}.

Recall from the Preliminaries that cutting a hyperbolic surface along $3g-3$ disjoint simple closed geodesics decomposes the surface into $2g-2$ pairs of pants and that each pair of pants decomposition has an associated 3-regular graph. Conversely, define $L_g$ as the trivalent graph depicted in Figure~\ref{fig:Ggandpopdecomp}. Here, every vertex $v_i$ corresponds to a pair of pants $Y_i$. There is one edge from $v_1$ to itself and similarly from $v_{2g-2}$ to itself. Moreover, for $1\leq i\leq 2g-3$ there is one edge between $v_i$ and $v_{i+1}$ if $i$ is odd and there are two edges if $i$ is even. 

\begin{figure}[h]
	\begin{subfigure}{\textwidth}
		\centering
		\includegraphics[width=0.9\textwidth]{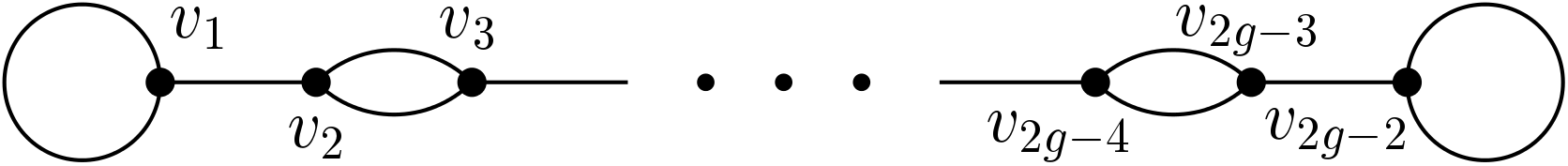}
		\caption{Trivalent graph $L_g$.}
		\label{fig:defGg}
	\end{subfigure}
	\begin{subfigure}{\textwidth}
		\centering
		\includegraphics[angle=-90, width=\textwidth]{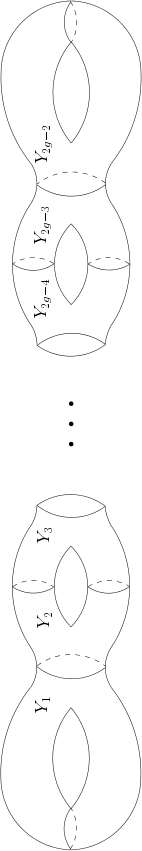}
		\caption{Pair of pants decomposition corresponding to $L_g$}
		\label{fig:popdecompGg}
	\end{subfigure}
	\caption{Trivalent graph $L_g$ with corresponding pair of pants decomposition.}
	\label{fig:Ggandpopdecomp}
\end{figure}

Now, fix some interval $[a,b]\subset\mR$ with $0<a<b$. Let $S_g(a,b)$ be the subset of $\teich_g$ with underlying graph $L_g$ such that all length parameters are contained in $[a,b]$. In particular, $S_g(a,b)$ contains an open subset of $\teich_g$, showing that having a linear number of vertices in terms of genus is relatively stable in this part of Teichm\"uller space. We will now state the result of this section.

\begin{theorem}\label{thm:lowerbound}
There exists a constant $B>0$ depending only on $a,b$ such that a minimal distance Delaunay triangulation of any hyperbolic surface in $S_g(a,b)$ has at least $Bg$ vertices.
\end{theorem}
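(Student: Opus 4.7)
Fix a surface $X\in S_g(a,b)$, with cuffs $\gamma_1,\dots,\gamma_{3g-3}$ of lengths in $[a,b]$ and pants $Y_1,\dots,Y_{2g-2}$ labeled along the chain $L_g$. The plan is to show that no distance Delaunay triangulation can ``skip over'' the long chain structure of $X$: once one bounds Delaunay edge lengths uniformly in terms of $a,b$, the linear diameter of $X$ forces a linear number of vertices.

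First I would collect bounded-geometry data. Since each cuff has length in $[a,b]$, Lemma~\ref{lem:collarlemma1} gives each cuff $\gamma_j$ an embedded collar $C_{\gamma_j}(w_0)$ of width $w_0=w(b)>0$, and the $3g-3$ collars are pairwise disjoint. The lower bound $a$ on cuff lengths together with the collar picture yields a uniform lower bound $\rho_0=\rho_0(a,b)>0$ on the injectivity radius of $X$ and a uniform upper bound $D_0=D_0(a,b)$ on the diameter of each pair of pants. Finally, the chain structure of $L_g$ gives $\diam(X)\ge C_1 g$ for some $C_1=C_1(a,b)>0$, since any path between pants at opposite ends of the chain must traverse $\Omega(g)$ disjoint collars, each contributing at least $2w_0$ to the length.

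The heart of the proof, and what I expect to be the main obstacle, is a uniform upper bound on edge length: for some $L=L(a,b)$, every edge $(u,v)$ of every distance Delaunay triangulation of $X$ satisfies $\ell(u,v)\le L$. I would prove this by working in the universal cover $\mD$. Each cuff $\gamma_j$ corresponds to a hyperbolic element $\tau_{\gamma_j}\in\Gamma$ of translation length $\ell(\gamma_j)\le b$ along an axis $\tilde\gamma_j$, so along (or just beside) any such axis the lifts of a single vertex of the triangulation form a bi-infinite sequence of points spaced at distance at most $b$ apart. A distance-path edge of length $\ell$ lies in a Delaunay triangle whose empty circumdisk has radius at least $\ell/2$; combining the collar picture of Lemma~\ref{lem:collarlemma1} with the distance-path hypothesis, a long edge is forced to run essentially parallel to some axis $\tilde\gamma_j$ for a long arc, so its large empty circumdisk must enclose one of the densely spaced axis-translates and violate the Delaunay property once $\ell$ exceeds a threshold $L=L(a,b)$.

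Given the edge-length bound, the conclusion follows from a graph-theoretic argument. The $1$-skeleton of any distance Delaunay triangulation is a connected graph on $X$ all of whose edges have length at most $L$; hence for any two vertices the combinatorial distance is at least the $X$-distance divided by $L$. Picking vertices close to opposite ends of the chain gives a combinatorial diameter of at least $C_1 g/L - O(1)$, and a connected graph of combinatorial diameter $d$ has at least $d+1$ vertices. Setting $B=C_1/(2L)$ we obtain at least $Bg$ vertices for all sufficiently large $g$, with $B$ depending only on $a,b$. The delicate point throughout is the uniform edge-length bound: $S_g(a,b)$ is thick in the injectivity-radius sense but still becomes arbitrarily ``long'' as $g$ grows, so one must exploit the bounded translation length $\le b$ of each cuff — not merely thickness — to rule out long Delaunay edges.
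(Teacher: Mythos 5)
Your approach is genuinely different from the paper's, but it hinges entirely on a claim that I do not believe is true and that is certainly not established by the heuristic you give: a \emph{uniform} bound $L = L(a,b)$, independent of $g$, on the length of every edge of every distance Delaunay triangulation of a surface in $S_g(a,b)$. The paper never proves or uses such a bound; instead it sets up machinery (Lemmas~\ref{lem:N-empty} and~\ref{lem:decomposition}) expressly to cope with edges that \emph{do} cross many consecutive pairs of pants. Lemma~\ref{lem:N-empty}(1) says that when an edge crosses a cluster of $\geq 3N$ pants, that cluster must contain $N$ consecutive empty pants --- it is phrased as a structural consequence, not a contradiction --- and Lemma~\ref{lem:decomposition} explicitly allows ``wide gaps'' of empty pants between consecutive clusters, across which edges run. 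So the paper's own argument is built around the possibility of edges of length growing with $g$ (spanning a long run of empty pants), and your plan to rule them out uniformly would need to overcome exactly the obstruction the paper chose to work around rather than eliminate.

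The specific step in your sketch is also flawed on its own terms. You assert that along (or beside) any cuff axis $\tilde\gamma_j$, ``the lifts of a single vertex of the triangulation form a bi-infinite sequence of points spaced at distance at most $b$ apart.'' This is false unless that vertex lies on or very near $\gamma_j$ itself: if $\tilde p$ is at distance $d$ from the axis of a hyperbolic isometry $\tau$ of translation length $\ell$, then $\cosh\bigl(\dist(\tilde p,\tau\tilde p)\bigr)=\cosh^2(d)\cosh(\ell)-\sinh^2(d)$, which grows without bound in $d$. Vertices of $\mathcal{T}$ have no reason to sit on the cuff geodesics. Worse, in the very scenario you want to exclude --- a long edge crossing a long run of empty pants --- there are by assumption \emph{no} vertices in that stretch, hence no lifted vertices anywhere near those axis segments; the ``densely spaced axis-translates'' you would need inside the large empty circumdisk simply are not there. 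So the heuristic does not close the gap, and the claim itself appears to be false for distance Delaunay triangulations whose vertex distribution is allowed to be lopsided along the chain $L_g$.

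If you want to pursue something in the spirit of your argument, the place to look is not an edge-length bound but an edge-count bound. The paper groups the pants into $O(v)$ clusters of bounded combinatorial size with at least one vertex each (Lemma~\ref{lem:decomposition}), then uses graph genus bounds on the cluster-restricted subgraphs together with Euler's formula $v-\tfrac13 e=2-2g$ (Lemma~\ref{lem:edgebound} and the proof of Theorem~\ref{thm:lowerbound}). That route sidesteps edge length entirely: it controls how many edges can emanate from each cluster, not how long they are. If you could instead show that wide gaps of empty pants are impossible in a distance Delaunay triangulation of a surface in $S_g(a,b)$ --- not just unlikely --- your diameter argument would then go through with a correct proof of a bound $L(a,b)$, but that would itself be a nontrivial new lemma that the paper does not supply.
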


The idea of the proof is the following. Let a hyperbolic surface $\mM\in S_g(a,b)$ and $\mcT\in\mcD(\mM)$ be given. Euler's formula implies $v-\tfrac{1}{3}e=2-2g$ for triangulations of a surface of genus $g$, where $v$ and $e$ are the number of vertices and edges of the triangulation. We prove that $e\leq B'v$ for some constant $B'>3$ only depending on $a,b$, which implies that
$$ v\geq \dfrac{6g-6}{B'-3}.$$
This implies the result of Theorem~\ref{thm:lowerbound}. Hence, the argument consists mostly in finding an upper bound for the number of edges in terms of the number of vertices. 

Before we continue with the proof of Theorem~\ref{thm:lowerbound}, we will look at our construction of $S_g(a,b)$ in more detail. By definition, every boundary geodesic of a pair of pants in the pair of pants decomposition of $\mM\in S_g(a,b)$ with respect to $L_g$ has length in $[a,b]$. As explained in Section~\ref{sec:preliminaries}, the geometry of a pair of pants depends continuously on the lengths of its three boundary geodesics. In particular, the diameter $\diam(Y)$ of a pair of pants $Y$ as well as the minimal distance $\mindist(Y)$ between any two of its boundary geodesics depend continuously on the lengths of its boundary geodesics. Because $[a,b]$ is a compact set, we obtain as an immediate consequence the following lemma.
\begin{lemma}
There exist positive numbers $m(a,b)$ and $M(a,b)$ depending on $a$ and $b$ such that $m(a,b)\leq\mindist(Y)<\diam(Y)\leq M(a,b)$ for every pair of pants $Y$ whose boundary geodesics have length in $[a,b]$. 
\end{lemma}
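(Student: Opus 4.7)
The plan is to deduce the lemma from the standard facts that a hyperbolic pair of pants is uniquely determined up to isometry by its three boundary lengths, that its geometry depends continuously on these lengths, and that a continuous function on a compact set attains its extrema.

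First I would set up the parametrization. By the right-angled hexagon argument recalled in Section~\ref{sec:preliminaries}, there is a well-defined map $Y=Y(\ell_1,\ell_2,\ell_3)$ from the parameter space $(\mathbb{R}^{>0})^3$ to isometry classes of hyperbolic pairs of pants, and the side lengths of the two isometric right-angled hexagons obtained by cutting $Y$ along the three orthogeodesics between its boundary components are given by explicit hyperbolic-trigonometric formulas in $(\ell_1,\ell_2,\ell_3)$. In particular, $Y(\ell_1,\ell_2,\ell_3)$ varies continuously (even smoothly) with the parameters.

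Next I would argue that the two invariants of interest depend continuously on $(\ell_1,\ell_2,\ell_3)$. For $\mindist(Y)$ this is immediate, because the distance between two fixed boundary cuffs is realized by the corresponding orthogeodesic, whose length is an explicit continuous function of $(\ell_1,\ell_2,\ell_3)$ via the hexagon formulas, and $\mindist(Y)$ is the minimum of the three such lengths. For $\diam(Y)$ I would use a bilipschitz-comparison argument: a small change in $(\ell_1,\ell_2,\ell_3)$ produces a small bilipschitz distortion of the pair of pants (via the hexagon decomposition), and since $Y$ is compact the diameter varies continuously under such distortions.

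Finally I would restrict to the compact cube $[a,b]^3 \subset (\mathbb{R}^{>0})^3$ and set
\[
m(a,b) := \min_{(\ell_1,\ell_2,\ell_3)\in [a,b]^3}\mindist\bigl(Y(\ell_1,\ell_2,\ell_3)\bigr),
\qquad
M(a,b) := \max_{(\ell_1,\ell_2,\ell_3)\in [a,b]^3}\diam\bigl(Y(\ell_1,\ell_2,\ell_3)\bigr).
\]
Both extrema are attained by continuity and compactness. They satisfy $m(a,b)>0$, since $\mindist(Y)>0$ for every pair of pants (its boundary components are disjoint compact geodesics), and $M(a,b)<\infty$, since any pair of pants is compact. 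The strict inequality $\mindist(Y)<\diam(Y)$ holds pointwise: the orthogeodesic realizing $\mindist(Y)$ connects two points $p,q$ on distinct cuffs, while $\diam(Y) \geq d(p,q')$ for any third point $q'$, so picking $q'$ on the third cuff (whose length is at least $a>0$) yields a strictly larger distance. The only substantive step is the continuity of $\diam$ and $\mindist$, and this reduces to the continuity of the hexagon construction, which is a standard ingredient of Fenchel–Nielsen theory; there is no real obstacle.
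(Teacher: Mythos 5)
Your proof is correct and takes essentially the same route as the paper, which derives the lemma directly from the continuity of $\mindist$ and $\diam$ in the boundary lengths (via the right-angled hexagon decomposition) together with compactness of $[a,b]^3$. You simply spell out the details the paper treats as immediate; the only slightly soft spot is the final justification of $\mindist(Y)<\diam(Y)$ (you should note that the foot of the perpendicular from $p$ to the third cuff is unique, so any other point of that cuff is strictly farther from $p$), but this is a minor point and the paper does not address it explicitly either.
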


\begin{remark}
It is not too difficult to compute bounds for $\mindist(Y)$ and $\diam(Y)$ in terms of the lengths of the boundary geodesics of $Y$. This would give explicit expressions for $m(a,b)$ and $M(a,b)$ in terms of $a$ and $b$. As we are only interested in the order of growth, to avoid further technical details, we do not provide details.
\end{remark}

From now on, the numbers $m=m(a,b)$ and $M=M(a,b)$ will be fixed. Furthermore, the way in which we constructed the graph $L_g$ allows us to speak of \emph{consecutive} pairs of pants. A \emph{cluster} in a hyperbolic surface $\mM$ is a subset of $\mM$ consisting of a number of consecutive pairs of pants. Consider $\mcT\in\mcD(\mM)$. An \emph{$m$-gap} is a cluster consisting of $m$ consecutive empty pairs of pants, where empty means that the pairs of pants do not contain any vertices of $\mcT$. We say that an edge of $\mcT$ \emph{crosses} a cluster if the pairs of pants containing its endpoints are separated by the cluster. Note that the cluster need not contain all pairs of pants which separate the two endpoints.

The next lemma states that if an edge of a distance Delaunay triangulation crosses many pairs of pants, then ``many'' of these pairs of pants are empty. 

\begin{lemma}\label{lem:N-empty}
	Let $\mM\in S_g(a,b)$ and define $N=N(a,b)$ as $$N(a,b):=\bigg\lceil\dfrac{M(a,b)}{m(a,b)}\bigg\rceil+1.$$
	Then, for every $\mcT\in\mcD(\mM)$:
	\begin{enumerate}
		\item
		If an edge of $\mcT$ crosses a cluster consisting of at least $3N$ pairs of pants, this
		cluster contains an $N$-gap.
		\item
		If an edge of $\mcT$ crosses a cluster in which the first $N$ and the last $N$ pairs of pants are empty, then all pairs of pants in the cluster are empty.
	\end{enumerate}
\end{lemma}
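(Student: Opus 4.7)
The plan is to prove both parts by contradiction, exploiting the choice $N = \lceil M/m \rceil + 1$ so that $Nm > M$: any distance path traversing $N$ consecutive pants of the chain has length exceeding the diameter of a single pants. I treat part 2 first, and then observe that part 1 reduces to the same analysis.

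For part 2, suppose the first $N$ and last $N$ pants of the cluster are empty, but some interior pants $Y_i$ contains a vertex $w$. Since $e=(u,v)$ crosses $Y_i$, any point $p\in e\cap Y_i$ satisfies $d(p,w)\le\diam(Y_i)\le M$, while the two sub-arcs of $e$ from $u$ to $p$ and from $p$ to $v$ each cross at least $N$ empty pants and so have length at least $Nm>M$. Lifting to the universal cover, $e$ becomes a geodesic segment $\tilde e=[u',v']\subset\mD$ of length $\ell(e)$, and $w$ lifts to a point $w'$ within $M$ of $\tilde e$ whose foot on $\tilde e$ projects at distance at least $Nm>M$ from both endpoints. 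The two triangles of $\mcT$ incident to $e$ lift to Delaunay triangles $(u',v',w_0')$ and $(u',v',w_1')$ with $w_0',w_1'$ on opposite sides of $\tilde e$; their empty circumdisks $D_0,D_1$ together cover a tubular neighborhood of $\tilde e$. The heart of the argument is then to show that, along the middle portion of $\tilde e$ (away from the endpoints), this tubular neighborhood has perpendicular half-width at least $M$ on each side, so that $w'\in D_0\cup D_1$, contradicting emptiness of both circumdisks.

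For part 1, assume $e$ crosses a cluster of at least $3N$ pants and no $N$-gap exists. By pigeonhole, the middle $N$-window of the cluster contains a vertex $w$, separated from both endpoints of $e$ by at least $N$ pants of the cluster. Although the flanking pants need not be empty, the only hypothesis of part 2 actually used was that the foot of $w'$ on $\tilde e$ sits at distance greater than $M$ from both endpoints, and this follows here again from the pants count together with $Nm>M$. Applying the same universal-cover argument then yields the contradiction, proving part 1.

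The main obstacle I anticipate is the hyperbolic disk-containment estimate for $D_0\cup D_1$. Establishing that this union extends to perpendicular distance $\ge M$ on each side of $\tilde e$ along its middle portion requires a Fermi-coordinate analysis: for any circle through the endpoints of a long geodesic segment, the interior contains a sub-tube around the segment whose perpendicular half-width is comparable to the perpendicular distance of the third vertex of the triangle to the segment, and the two adjacent Delaunay triangles, lying on opposite sides, must between them fill out an $M$-thick lens around $\tilde e$ wherever $\tilde e$ is long compared to $M$. Getting an explicit comparison that makes $Nm>M$ sufficient is the delicate step.
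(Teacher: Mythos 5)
There is a genuine gap, and you have honestly flagged it yourself: the claim that the two circumdisks $D_0\cup D_1$ of the Delaunay triangles adjacent to $e$ contain a tube of perpendicular half-width at least $M$ on each side of the middle of $\tilde e$ is left unproved, and it is not a routine estimate. In hyperbolic geometry, if the base $\tilde e$ is very long and the third vertex $w_0'$ sits at small perpendicular distance from $\tilde e$, the circumscribing curve of $(u',v',w_0')$ degenerates toward a hypercycle, and controlling how far the corresponding disk extends on each side of $\tilde e$ at a given point along the chord is exactly the kind of Fermi-coordinate computation your plan defers. Without it, the argument that $w'\in D_0\cup D_1$ does not close, and it is not clear that the numerical threshold $Nm>M$ alone would be enough even if the computation were carried out. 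There is also a smaller slip: from $d(w',p')\le M$ and the hyperbolic Pythagorean theorem you only get that the foot $f$ of $w'$ on $\tilde e$ satisfies $d(f,p')\le M$, hence $d(f,u')\ge Nm-M>0$; the stated bound $d(f,u')\ge Nm$ is not justified and, as $Nm-M$ can be smaller than $M$, weakens the room you later rely on.

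The paper avoids the disk-shape analysis entirely. A Delaunay edge $(u,v)$ lies in some empty disk $D$ through $u$ and $v$ (for instance, the circumdisk of one adjacent triangle), with center $c$. For any $p\in Y_k$ with $i+N+1\le k\le j-N-1$, one of the distance paths $\gamma_u$ (from $c$ to $u$) or $\gamma_v$ (from $c$ to $v$) must meet $Y_k$; if $x_k$ is its intersection with a boundary geodesic of $Y_k$, then
$\dist(c,p)\le\dist(c,x_k)+M=\dist(c,u)-\dist(x_k,u)+M\le\dist(c,u)-(k-1-i)m+M<\dist(c,u)$,
since $k-1-i\ge N$ and $Nm>M$, so $p$ lies strictly inside $D$ and cannot be a vertex. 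This is a purely intrinsic metric argument on the surface, requiring no lift to the universal cover and no control over the geometry of circumdisks; it establishes in one stroke that only the first $N$ and last $N$ pairs of pants in the traversed range can be nonempty, from which both parts of the lemma follow.
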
	

\begin{proof}
	Let $\mcT\in\mcD(\mM)$ and let $(u,v)$ be an edge of $\mcT$ with $u\in Y_i$ and $v\in Y_j$. We will show that the cluster consisting of the union of all $Y_k$ with $i+N+1\leq k\leq j-N-1$ is empty. In other words, only the first $N$ and last $N$ pairs of pants are possibly non-empty. In particular, this implies that the two properties of the lemma hold. 
	
	Now, because $(u,v)$ is a Delaunay edge, it is contained in some empty disk $D$ passing through $u$ and $v$. Consider $Y_k$ with $i+N+1\leq k\leq j-N-1$ and take $p\in Y_k$ arbitrarily (see Figure~\ref{fig:pairofpants}). We will show that the distance between $p$ and the center $c$ of $D$ satisfies $\dist(p,c)<\dist(u,c)$. This implies that $p$ is contained in the interior of $D$, so it cannot be a vertex of $\mcT$. Therefore, $Y_k$ is empty.
		
		\begin{figure}[h]
			\centering
			\includegraphics[width=\textwidth]{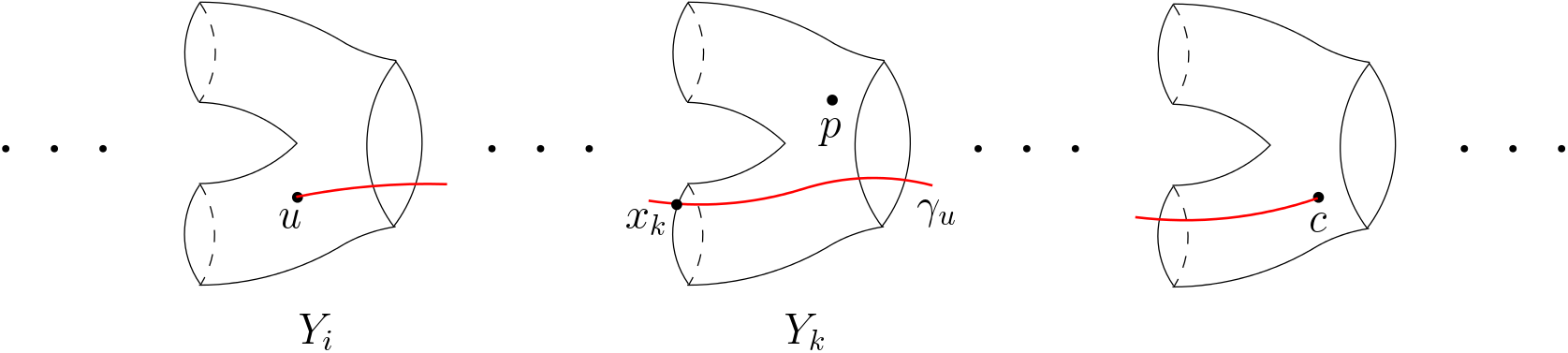}
			\caption{Depiction of the construction in the proof of Lemma~\ref{lem:N-empty}.}
			\label{fig:pairofpants}
		\end{figure}
		
	Let $\gamma_u$ be a distance path from $c$ to $u$. First assume that $\gamma_u\cap Y_k\neq\emptyset$. Let $x_k$ be the intersection of $\gamma_u$ with one of the boundary geodesics of $Y_k$. By the triangle inequality and the definition of $M$, we know that
	\[ \dist(c,p)\leq \dist(c,x_k)+\dist(x_k,p)\leq \dist(c,x_k)+M.\]
	To give an upper bound for $\dist(c,x_k)$, observe that the part of $\gamma_u$ from $x_k$ to $u$ passes through $Y_{k-1},\ldots,Y_{i+1}$ before reaching $u\in Y_i$. By definition of $m$, the length of the part of $\gamma_u$ within each of these $k-1-i$ pair of pants is at least $m$, so $\dist(x_k,u)\geq (k-1-i)m$. This means that 
	\[\dist(c,x_k)=\dist(c,u)-\dist(x_k,u)\leq \dist(c,u)-(k-1-i)m.\]
	It follows that
	\[\dist(c,p)\leq \dist(c,u)+M-(k-1-i)m. \]
	Because $k-1-i\geq N$ and $Nm> M$, we see that $M-(k-1-i)m<0$, so
	\[\dist(c,p)< \dist(c,u),  \]
	so $Y_k\in D$, hence $Y_k$ does not contain any vertices of $\mcT$. Note that we assumed that $\gamma_u\cap Y_k\neq\emptyset$. If $\gamma_u\cap Y_k=\emptyset$, then we consider a distance path $\gamma_v$ from $c$ to $v$ instead of $\gamma_u$. The rest of the proof is analogous. We conclude that $Y_k$ is empty for $i+N+1\leq k\leq j-N-1$. This finishes the proof.
\end{proof}

The following lemma states that we can construct a set of clusters which has as one of its properties that every edge of the distance Delaunay triangulation has its endpoints in the same cluster, or in two consecutive clusters. 

\begin{lemma}
	\label{lem:decomposition}
	Let $\mM\in S_g(a,b)$ be a hyperbolic surface and let $N=N(a,b)$ be as defined in Lemma~\ref{lem:N-empty}. Let $\mcT\in\mcD(\mM)$. There are interior-disjoint clusters with the following properties:
	\begin{enumerate}
		\item
		\label{it:size}
		Each cluster consists of at most $6N$ consecutive pairs of pants;
		\item
		\label{it:V}
		Every cluster contains at least one vertex of $\mathcal{T}$,
		and every vertex of $\mathcal{T}$ belongs to at least one cluster;
		\item
		\label{it:E}
		Every edge of $\mathcal{T}$ has its endpoints in the same cluster, or in two consecutive
		clusters.
	\end{enumerate}
\end{lemma}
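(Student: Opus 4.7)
The strategy is to use the empty pants (those containing no vertex of $\mcT$) to cut the chain of $2g-2$ pants into manageable blocks, then subdivide those blocks further. Call a maximal run of at least $N$ consecutive empty pants a \emph{big gap}. The big gaps separate the chain $Y_1,\ldots,Y_{2g-2}$ into blocks $R_0,R_1,\ldots,R_k$ which I will call \emph{populated regions}; each $R_i$ is a run of consecutive pants in which every maximal run of empty pants has length $<N$, and by maximality of the big gaps each non-empty $R_i$ contains at least one populated pant.

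The key consequences of Lemma~\ref{lem:N-empty} that I would extract are the following. (i) No edge of $\mcT$ crosses two distinct big gaps: the cluster starting at the first such big gap and ending at the last one would satisfy the hypothesis of Lemma~\ref{lem:N-empty}(2), forcing all pants in between to be empty and contradicting the maximality of those big gaps. (ii) If an edge with endpoints in $Y_p$ and $Y_q$ crosses a big gap $[Y_a,\ldots,Y_b]$, then the clusters $[Y_{p+1},\ldots,Y_{a-1}]$ and $[Y_{b+1},\ldots,Y_{q-1}]$ contain no $N$-gap (by (i)), so Lemma~\ref{lem:N-empty}(1) yields $a-p\leq 3N$ and $q-b\leq 3N$. (iii) An edge with both endpoints in the same $R_i$ spans at most $3N-1$ consecutive pants, because the intermediate cluster sits inside $R_i$ and therefore has no $N$-gap.

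I would then build the clusters by subdividing each non-empty $R_i$ greedily: if $|R_i|\leq 6N$ take $R_i$ itself as a single cluster, and otherwise cut $R_i$ left-to-right into consecutive chunks of exactly $3N$ pants, merging the (possibly too-small) final chunk into its predecessor so that every chunk ends up of size in $[3N,6N)$. Ordering all chunks globally left-to-right along the chain gives the cluster sequence. Property~(1) is immediate. For property~(2), any chunk of size $\geq N$ must contain a populated pant because its $R_i$ has no $N$-gap, while single-chunk $R_i$'s are populated by construction; every vertex of $\mcT$ lies in some populated pant and hence in some chunk. For property~(3), an edge with both endpoints in the same $R_i$ spans fewer than $3N$ pants by (iii), and since every chunk of $R_i$ has size at least $3N$ (or is all of $R_i$), its endpoints lie in the same chunk or in two consecutive chunks of $R_i$; an edge crossing a big gap $G$ crosses only $G$ by (i), and by (ii) its endpoints lie within $3N$ pants of $G$ on each side, so they land in the last chunk of the $R_i$ preceding $G$ and the first chunk of the $R_i$ following $G$ — two consecutive clusters in the global ordering.

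The main technical obstacle is setting up observations (i)--(iii) cleanly — in particular, making sure that each cluster one feeds into Lemma~\ref{lem:N-empty} is actually crossed by the edge under consideration and has the prescribed structure (empty first and last $N$ pants for (i), no $N$-gap at all for (ii) and (iii)), and keeping the various off-by-one issues straight. Once those are in place, the $3N/6N$ chunking is mechanical and the three required properties drop out without further calculation.
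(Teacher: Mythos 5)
Your proposal follows the paper's approach: maximal runs of at least $N$ empty pairs of pants (your ``big gaps'' are the paper's ``wide gaps'') partition the chain into populated regions (the paper's ``superclusters''), which are then chopped into chunks of size in $[3N,6N)$. The only real difference is that you verify Property~\ref{it:E} directly via the case analysis in (i)--(iii), whereas the paper argues by contradiction through its notion of a ``short cluster''; both are correct and use Lemma~\ref{lem:N-empty} in the same way.
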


\begin{proof}
	A \textit{wide gap} in the sequence of $2g-2$ pairs of pants is a maximal subsequence consisting of at least $N$ empty pairs of pants, where maximality is defined with respect to inclusion. The complement of the collection of wide gaps of $\mM$ consists of a number of sequences of consecutive pairs of pants, that we call \textit{superclusters} (see Figure~\ref{fig:decomposition}). To obtain clusters that satisfy the properties of the lemma, each supercluster will be chopped up into one or more subsequences of length at most $6N$ in the following way:
	\begin{itemize}
		\item 
		Each supercluster consisting of at most $3N$ pairs of pants is a cluster. Such a cluster is said to be a \emph{short cluster}.
		\item
		Each supercluster consisting of more than $3N$ pairs of pants is chopped up into non-overlapping
		subsequences of length $3N$, followed by a subsequence of length between $3N$ and $6N-1$.
		
		More precisely, let $m$ be the length of such a supercluster. Since $m > 3N$
		there are integers $k$ and $r$, with $0 \leq r < 3N$ and $k \geq 1$, such that $m = 3kN + r$.
		Let $m_1 = \cdots = m_{k-1} = 3N$ and $m_k = 3N + r$, then
		$m_1 + \cdots + m_k = m$ and $3N \leq m_i < 6N$.
		Therefore, the supercluster is the concatenation of $k$ subsequences of length $m_1,\ldots,m_k$.
	\end{itemize}
	This construction enforces Property~\ref{it:size}.
	
	\begin{figure}[h]
		\centering
		\includegraphics[width=\textwidth]{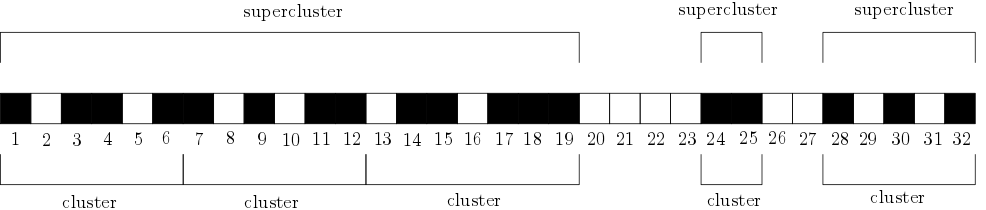}
		\caption{Illustration of the construction in the proof of Lemma~\ref{lem:decomposition}. Each square represents a pair of pants in the pair of pants decomposition of a hyperbolic surface of genus 17. A pair of pants corresponding to a black square contains vertices of $\mcT$, whereas a pair of pants corresponding to a white square does not contain vertices of $\mcT$. We assume that $N=2$. The superclusters and clusters that are defined during the construction are indicated.}
		\label{fig:decomposition}
	\end{figure}
		
	\medskip\noindent
	\textit{Proof of Property~\ref{it:V}.}
	Each supercluster contains at least one vertex.
	Therefore, short clusters contain at least one vertex.
	The clusters obtained by chopping up a supercluster have length at least $3N$. Since a
	supercluster contains no $N$-gap, each of these clusters contains at least one vertex.
	
	Since all vertices belong to some supercluster, they belong to at least one cluster.
	This completes the proof of Property~\ref{it:V}.
	
	\medskip\noindent
	\textit{Proof of Property~\ref{it:E}.}
	Suppose the property does not hold. Then there is an edge of $\mathcal{T}$ with vertices in
	non-adjacent clusters.
	In other words, this edge crosses an other cluster, say $C$.
	The construction of clusters implies that $C$ does not contain an $N$-gap.
	By Part 1 of Corollary~\ref{lem:N-empty} the cluster $C$ consists of less than $3N$ pairs of
	pants. Therefore, $C$ is a short cluster. Since this short cluster is neither
	the first nor the last in the sequence of superclusters, it is preceded by a wide gap and succeeded by a wide	gap. Since a wide gap contains an $N$-gap, Part 2 of Corollary~\ref{lem:N-empty} implies that cluster $C$
	is empty. This contradicts Property~\ref{it:V}. Therefore, Property~\ref{it:E} holds.
\end{proof}

In the following corollary, we denote the number of vertices of $\mcT\in\mcD(\mM)$ contained in a subset $U$ of $\mM$ by $v(U)$. Likewise, let $e(U,W)$ be the number of edges of $\mcT$ with one endpoint in $U\subset\mM$ and one endpoint in $W\subset\mM$.

\begin{corollary}\label{cor:boundverticesedges}
	Let $\mM\in S_g(a,b)$ be a hyperbolic surface and let $\mcT\in\mcD(\mM)$. Let $\{\mS_i\;|\;i=1,\ldots,n \}$ be a collection of clusters satisfying the properties of Lemma~\ref{lem:decomposition} for some $n\in\mN$. If $v$ and $e$ are the number of vertices and edges of $\mcT$, respectively, then
	\begin{align*}
		n&\leq v,\\
		v&=\sum_{i=1}^{n}v(\mS_i),\\
		e&=\sum_{i=1}^{n}e(\mS_i,\mS_i)+\sum_{i=1}^{n-1}e(\mS_i,\mS_{i+1}).
	\end{align*}
\end{corollary}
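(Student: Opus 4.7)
The plan is to treat this corollary as a direct bookkeeping exercise using the three properties guaranteed by Lemma~\ref{lem:decomposition}. The three claims amount to: a lower bound on the number of clusters, a partition identity for the vertex count, and a partition identity for the edge count. All the geometric work has already been done in establishing the decomposition, so nothing harder than careful counting should remain.

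First I would prove the vertex partition $v=\sum_{i=1}^n v(\mS_i)$. Property~\ref{it:V} of Lemma~\ref{lem:decomposition} guarantees that every vertex of $\mcT$ lies in some cluster, so no vertex is missed. Since the clusters $\mS_i$ are interior-disjoint unions of pairs of pants, a vertex lying in the interior of a pair of pants is in exactly one cluster; if a vertex happens to lie on a boundary geodesic shared by two clusters, we assign it to one of them by convention, and the $v(\mS_i)$ are defined accordingly. Either way, each vertex is counted exactly once on the right-hand side, yielding the identity.

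The bound $n\leq v$ then follows immediately from the other half of Property~\ref{it:V}: every cluster contains at least one vertex of $\mcT$, so $n\leq \sum_{i=1}^n v(\mS_i)=v$.

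For the edge identity $e=\sum_{i=1}^n e(\mS_i,\mS_i)+\sum_{i=1}^{n-1} e(\mS_i,\mS_{i+1})$, I would invoke Property~\ref{it:E} of Lemma~\ref{lem:decomposition}: every edge of $\mcT$ has its two endpoints in the same cluster or in two consecutive clusters. Combined with the fact that each endpoint lies in a unique cluster (as discussed above), this means every edge contributes to exactly one of the terms $e(\mS_i,\mS_i)$ or $e(\mS_i,\mS_{i+1})$ in the sum, and no edge is left out. Summing over $i$ gives the identity. Since the content of the corollary is essentially the translation of the three decomposition properties into a counting statement, there is no substantial obstacle once Lemma~\ref{lem:decomposition} is in hand; the only thing to be careful about is the (minor) ambiguity of vertices that lie on boundary geodesics shared between clusters.
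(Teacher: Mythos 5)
Your proof is correct and follows essentially the same route as the paper's one-paragraph proof: $n\leq v$ from Property~\ref{it:V} (every cluster contains a vertex), the vertex partition from Property~\ref{it:V} (every vertex lies in a cluster), and the edge identity from Property~\ref{it:E}. The one small thing you add---explicitly handling a vertex that might lie on a boundary geodesic shared by two consecutive clusters, by assigning it to one cluster by convention---is a point the paper leaves implicit, and your fix is the right one.
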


\begin{proof}
	Because every cluster contains at least one vertex, the number of clusters is at most the number of vertices, which proves the first equation. The second equation follows from the property that every vertex is contained in a cluster. Because every edge has its endpoints in the same cluster, or in two consecutive clusters, the third equation holds.
\end{proof}

Recall that we want to find a linear upper bound for the number of edges of a distance Delaunay triangulation in terms of the number of vertices. By Corollary~\ref{cor:boundverticesedges}, it suffices to find upper bounds for $e(\mS_i,\mS_i)$ and $e(\mS_i,\mS_{i+1})$ for clusters $\mS_i$ satisfying the properties of Lemma~\ref{lem:decomposition}. We will do this in the next lemma.

\begin{lemma}\label{lem:edgebound}
	Let the notation be as in Corollary~\ref{cor:boundverticesedges}. Then, the following upper bounds hold:
	\begin{enumerate}
		\item $e(\mS_i,\mS_i)\leq 3v(\mS_i)+18N(N+1)$ for all $i=1,\ldots,n$,
		\item $e(\mS_i,\mS_{i+1})\leq 18v(\mS_i\cup\mS_{i+1})+216N(N+1)$ for all $i=1,\ldots,i-1$.
	\end{enumerate}
\end{lemma}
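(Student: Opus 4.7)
The plan is to bound the edge counts by an Euler-type combinatorial argument applied to the subgraphs of $\mcT$ restricted to each cluster (or pair of consecutive clusters), exploiting that $\mcT$ is a simplicial complex whose edges, as edges of a triangulation of $\mM$, are pairwise non-crossing.

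For part 1, I would consider the subgraph $G_i\subseteq\mcT$ induced on the vertices of $\mcT$ lying in $\mS_i$, whose edges are exactly those in $e(\mS_i,\mS_i)$. Since $\mcT$ is simplicial, $G_i$ is a simple graph, and since edges of $\mcT$ are non-crossing in $\mM$, $G_i$ is embedded there without crossings. By Lemma~\ref{lem:decomposition}, $\mS_i$ consists of at most $6N$ pairs of pants, so $\chi(\mS_i)=-k_i$ with $k_i\leq 6N$, its genus is $O(N)$, and its number of boundary components is $O(N)$. The next step is to argue that $G_i$ can be realized as an embedding on $\mS_i$ itself, after homotoping any edges of $G_i$ that geometrically leave and re-enter $\mS_i$; this is permissible because the non-crossing property of $\mcT$ and its simplicial structure ensure no new intersections are created. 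Applying the standard bound $e(G)\leq 3v(G)-3\chi(S)$ for a simple graph $G$ on a surface $S$, combined with careful bookkeeping of boundary components and genus (each of order $O(N)$), will yield $e(\mS_i,\mS_i)\leq 3v(\mS_i)+18N(N+1)$.

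For part 2, I would apply the analogous strategy to the bipartite subgraph $H_i\subseteq\mcT$ between the vertices in $\mS_i$ and those in $\mS_{i+1}$. Every edge of $H_i$ must cross the wide gap separating the two clusters---at least $N$ empty pairs of pants by construction. Using the Delaunay property (each edge lies in an empty disk) together with the pants diameter bounds $m(a,b)$ and $M(a,b)$, I would bound the number of essentially distinct arcs connecting any pair of vertices across the gap. Summing over vertices in $\mS_i\cup\mS_{i+1}$ then gives the linear-in-$v$ bound with factor $18$, and the $216N(N+1)$ term arises from the topology of the combined region $\mS_i\cup\mathrm{gap}\cup\mS_{i+1}$, in the same way as in part 1.

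The main obstacle will be the topological realization of $G_i$ (and $H_i$) on the relevant subsurface: since distance paths are geodesics in $\mM$ and need not lie inside $\mS_i$, they could in principle wind nontrivially around $\mM$. Ruling out problematic windings requires combining the simplicial property of $\mcT$ with the diameter bounds coming from the hypothesis $\mM\in S_g(a,b)$; once this is established, the edge count follows directly from Euler's formula.
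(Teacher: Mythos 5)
Your plan for Part~1 has the right flavor (an Euler-type edge bound for the subgraph $G_i$), but the crucial step where you "homotope any edges of $G_i$ that geometrically leave and re-enter $\mS_i$" back into $\mS_i$ is unjustified and in general false: a distance path between two points of $\mS_i$ can leave $\mS_i$, wind around a handle outside it, and return, and is then not homotopic rel endpoints to any arc inside $\mS_i$. The correct mechanism, which you gesture at with "diameter bounds" but never articulate, is to bound the \emph{length} of such an edge: $\mS_i$ consists of at most $6N$ pairs of pants, each of diameter at most $M$, so $\mathrm{diam}(\mS_i)\leq 6NM < 6N^2 m$, and since each edge is a distance path whose length inside each traversed pair of pants is at least $m$, the edge can pass through at most $3N^2$ pairs of pants outside $\mS_i$. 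Hence $G_i$ embeds in a subsurface of at most $6N(N+1)+2$ pairs of pants, its graph genus is at most $3N(N+1)+1$, and the Euler bound $e\leq 6g_i + 3v(\mS_i) - 6$ gives the result. Without this length argument the topological realization you need does not follow from non-crossing plus simpliciality alone.

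For Part~2 there is a genuine gap. You propose to bound the topology of the region $\mS_i\cup\mathrm{gap}\cup\mS_{i+1}$, but when the two clusters are separated by a large gap (more than $6N^2+2$ pairs of pants), that region has unbounded genus as $g\to\infty$ and the Euler bound on it yields nothing. The paper handles this case by a more delicate two-sided argument: let $G_{i,1}$ be the bipartite graph of cross-cluster edges and $G_{i,2}$ the full graph on $V(\mS_i\cup\mS_{i+1})$, with graph genera $g_{i,1}\leq g_{i,2}$. Since $G_{i,1}$ is bipartite and (as they prove) connected, $e(\mS_i,\mS_{i+1})\leq 2v(\mS_i)+2v(\mS_{i+1})+4g_{i,1}-4$; on the other hand, observing that every cross-cluster edge lies in two triangles of $G_{i,2}$, Euler's formula applied to $G_{i,2}$ gives a \emph{lower} bound on $e(\mS_i,\mS_{i+1})$ involving $+6g_{i,2}$. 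Combining the two eliminates the unbounded genera (thanks to $g_{i,1}\leq g_{i,2}$) and leaves the stated $18v+216N(N+1)$ bound after plugging in Part~1. The connectivity of $G_{i,1}$ (shown by ordering the cross-edges by their intersection with a separating geodesic and noting consecutive ones share a triangle) is essential for the bipartite Euler inequality, and your sketch has no substitute for it; your idea of "bounding essentially distinct arcs across the gap" is undeveloped and it is unclear how it could yield a bound independent of the gap's size.
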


The proof is given in Appendix~\ref{sec:appendixupperboundtight}. We can now commence with the proof of Theorem~\ref{thm:lowerbound}.

\begin{proof}
	\textbf{(Theorem~\ref{thm:lowerbound})}\\
	Take $\mM\in S_g(a,b)$ arbitrary and let $\mcT\in\mcD(\mM)$ be arbitrary. Let $\{\mS_i\;|\;i=1,\ldots,n\}$ be a collection of clusters satisfying the properties of Lemma~\ref{lem:decomposition}. By Corollary~\ref{cor:boundverticesedges}, 
	$$ e = \sum_{i=1}^{n}e(\mS_i,\mS_i)+\sum_{i=1}^{n-1}e(\mS_i,\mS_{i+1}).$$
	Substituting the upper bounds for $e(\mS_i,\mS_i)$ and $e(\mS_i,\mS_{i+1})$ from Lemma~\ref{lem:edgebound}, we obtain
	\begin{align*}
		e&\leq \sum_{i=1}^n \big(3v(\mS_i)+18N(N+1)\big)+\sum_{i=1}^{n-1}\big(18v(\mS_i\cup\mS_{i+1})+216N(N+1)\big),\\
		&\leq 39\sum_{i=1}^{n}\big(v(\mS_i)+6N(N+1)\big).
	\end{align*}
	From Corollary~\ref{cor:boundverticesedges}, we know that $\sum_{i=1}^{n}v(\mS_i)=v$ and $n\leq v$. Hence,
	$$ e\leq 39(1+6N(N+1))v.$$
	Euler's formula for triangulations $v-\tfrac{1}{3}e=2-2g$ implies that 
	\begin{align*}
		39(1+6N(N+1))v&\geq e=3v+6g-6,\\
		v&\geq \dfrac{g-1}{6+39N(N+1)},
	\end{align*}
	which finishes the proof.
\end{proof}

\section{Lower bound}\label{sec:lowerbound}

In this section, we will look at a general lower bound for the minimal number of vertices of a distance Delaunay triangulation of a hyperbolic surface of genus $g$. 

In the more general situation of a simplicial triangulation of a topological surface of genus $g$, one has an immediate lower bound on the minimal number of vertices. The fact that this lower bound is sharp is the following classical theorem of Jungerman and Ringel:
\begin{theorem}{\cite[Theorem 1.1]{jungerman1980}}\label{thm:minverticestopsurface}
The minimal number of vertices of a simplicial triangulation of a topological surface of genus $g$ is
	$$ \bigg\lceil\dfrac{7+\sqrt{1+48g}}{2}\bigg\rceil.$$
\end{theorem}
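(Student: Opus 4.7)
The plan is to prove the stated value both as a lower bound and as a tight upper bound. The lower bound follows easily from Euler's formula combined with the simplicial hypothesis. Let $\mcT$ be a simplicial triangulation of a topological surface of genus $g$ with $v$ vertices, $e$ edges, and $f$ triangular faces. Since each triangle has three edges and each edge lies in exactly two triangles, $2e = 3f$; combined with $v - e + f = 2 - 2g$, this yields $e = 3v - 6 + 6g$. Because $\mcT$ is simplicial, any two distinct vertices are joined by at most one edge, so $e \leq \binom{v}{2}$. Substituting gives the quadratic inequality $v^2 - 7v + 12 - 12g \geq 0$, and the quadratic formula together with $v > 0$ forces $v \geq (7 + \sqrt{1+48g})/2$. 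Taking the ceiling yields the claimed lower bound.

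The hard direction is to realize this bound, i.e.\ to construct, for every $g \geq 1$, a simplicial triangulation on exactly $\lceil (7+\sqrt{1+48g})/2 \rceil$ vertices. Observe that equality in the lower bound requires $v(v-1)/2 = 3v - 6 + 6g$, equivalently $g = (v-3)(v-4)/12$. When this quantity is an integer (which happens precisely when $v \equiv 0, 3, 4, 7 \pmod{12}$), a minimum triangulation must use \emph{every} pair of vertices as an edge, so it is the same thing as a triangular embedding of the complete graph $K_v$ in the orientable surface of genus $g$. The existence of such embeddings is exactly the content of the Ringel--Youngs solution to the Heawood conjecture, proved by constructing, for each residue class of $v$ mod 12, an explicit rotation system via the machinery of \emph{current graphs}. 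So in these regular cases one simply invokes Ringel--Youngs.

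For the remaining genera, where $48g+1$ is not the square of an integer of the form $2v-7$, the target triangulation has $\binom{v}{2} - e$ strictly positive, so one is no longer triangulating with a complete graph. The strategy I would follow is to start from a Ringel--Youngs triangular embedding of $K_n$ at some nearby ``regular'' genus and then perform controlled local surgeries that adjust the genus and vertex count by small amounts without destroying the simplicial and triangulated properties: typical moves are subdividing a triangle by an interior vertex (adds $1$ vertex, keeps simpliciality, adds $0$ to the genus), handle additions across pairs of faces (adds one to the genus at the cost of a prescribed change in $v$ and $e$), and deletion--retriangulation of a single vertex. The main obstacle, and the bulk of the work in Jungerman--Ringel, is a careful case analysis by the residue of $g$ modulo a suitable small modulus, together with a handful of sporadic low-genus cases that must be settled by ad hoc constructions; verifying that every residue class can be reached by such moves from a Ringel--Youngs building block, and that no new forbidden bigons or loops are introduced, is the core combinatorial difficulty.
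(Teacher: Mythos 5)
The paper does not prove this statement; it cites it directly from Jungerman and Ringel, so there is no internal argument to compare against. Your lower bound is the standard one and is correct, and your reduction of the ``regular'' case $g=(n-3)(n-4)/12$ with $n\equiv 0,3,4,7\pmod{12}$ to triangular embeddings of $K_n$ via Ringel--Youngs is also right; this is essentially the ingredient the paper itself uses in Theorem~\ref{thm:generallowerbound}. The gap is in the remaining genera. Your plan to start from a Ringel--Youngs embedding and reach every other genus by local surgeries (interior vertex insertion, handle addition across faces, vertex deletion with retriangulation) is not what Jungerman--Ringel do, and it is far from clear it can work: the count $\binom{v}{2}-e = \binom{v}{2}-3v+6-6g$ must land on a small, precisely prescribed nonnegative value, and your moves are not designed to track this; adding a handle across two triangular faces destroys triangularity and the retriangulation required to restore it changes $v$ and $e$ in ways you have not controlled. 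Jungerman--Ringel instead construct the minimal triangulations directly via modified current-graph constructions tailored to each residue class, plus a number of sporadic low-genus cases; this occupies the bulk of their paper and cannot be summarized as ``controlled local surgeries.''

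There is also a substantive error that your sketch inherits from the imprecise statement in the paper: the formula fails at $g=2$, where the minimum is $10$ vertices rather than the $\lceil(7+\sqrt{97})/2\rceil=9$ that the formula predicts (this exception is part of the actual Jungerman--Ringel theorem). Since your upper-bound argument does not single out $g=2$, if it were carried through it would prove a false claim, so some step must fail there; any correct completion has to explain why, and why only there. For the present paper this does not matter, because Theorem~\ref{thm:generallowerbound} only invokes the result for $g=(n-3)(n-4)/12$ with $n\equiv 0\pmod{12}$, which never produces $g=2$; but a proof of the theorem as stated must address it.
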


We show that  the same result holds for the minimal number of vertices of a distance Delaunay triangulation of a hyperbolic surface of genus $g$ for infinitely many values of $g$. 

\begin{theorem}\label{thm:generallowerbound}
For any $g\geq 2$ of the form
$$g=\dfrac{(n-3)(n-4)}{12}$$
for some $n\equiv 0\mod 12$, the minimal number of vertices of a distance Delaunay triangulation of a hyperbolic surface of genus $g$ is 
	$$ n=\dfrac{7+\sqrt{1+48g}}{2}.$$
\end{theorem}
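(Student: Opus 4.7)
The plan is to match the Jungerman--Ringel lower bound from both sides. Since any distance Delaunay triangulation is simplicial, Theorem~\ref{thm:minverticestopsurface} gives the lower bound on vertex count immediately: plugging $g = (n-3)(n-4)/12$ into $48g + 1$ yields $(2n-7)^2$, so $\lceil (7+\sqrt{1+48g})/2 \rceil$ is exactly $n$. The substance of the proof is therefore to construct, for each $n \equiv 0 \pmod{12}$, a closed hyperbolic surface $X_g$ of genus $g = (n-3)(n-4)/12$ equipped with a distance Delaunay triangulation on exactly $n$ vertices.

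The natural candidate comes from the Ringel--Youngs theorem. In ``Case 0'' ($n \equiv 0 \pmod{12}$) of their resolution of the Heawood problem they produce a \emph{triangular} embedding of the complete graph $K_n$ in the orientable topological surface of genus $(n-3)(n-4)/12$: every face is a $3$-cycle, every vertex has degree $n-1$, and a double count gives $\binom{n}{2}$ edges and $n(n-1)/3$ faces. I would promote this combinatorial object to a hyperbolic surface $X_g$ by realising each face as an isometric copy of the regular equilateral hyperbolic triangle with interior angle $\alpha := 2\pi/(n-1)$ and gluing along the prescribed edge identifications. The value $\alpha$ is admissible since $n \geq 12$ forces $\alpha < \pi/3$; the angle sum around every vertex is $(n-1)\alpha = 2\pi$, so no cone point is introduced and $X_g$ is a genuine closed hyperbolic surface. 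The Ringel--Youngs embedding becomes a geodesic triangulation $\mcT$ of $X_g$ all of whose edges share a common length $s$, determined by $\cosh s = \cos\alpha/(1-\cos\alpha)$.

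What remains is to check that $\mcT \in \mcD(X_g)$. Simpliciality is automatic since $K_n$ has no loops or multi-edges. For the Delaunay and distance-path conditions, I would pass to the universal cover: by construction $\widetilde{X_g}$ is isometric to the regular tessellation of $\mH$ by equilateral triangles of angle $\alpha$, and the preimage of the vertex set of $\mcT$ is precisely the full vertex set of this tessellation. For Delaunay, the standard local-to-global criterion in constant curvature reduces the question to checking that across each edge the opposite vertex lies outside the circumscribed disk; by symmetry the circumcentre $c$ of a tile $ABC$ lies strictly between $A$ and the midpoint $M$ of $BC$, and the fourth vertex $D$ across $BC$ lies on the extension beyond $M$ with $d(M, D) = d(A, M)$, so
$$d(c, D) \;=\; d(c, M) + d(M, D) \;=\; R + 2\,d(c, M) \;>\; R.$$
For the distance-path property, after the $n-1$ adjacent tessellation vertices at distance $s$, the next-closest tessellation vertices to any $\tilde u$ are the vertices opposite $\tilde u$ across adjacent tiles, at distance $2 d_m$, where $d_m$ is the median of the equilateral triangle from a vertex to the midpoint of the opposite edge. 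The right triangle with legs $d_m, s/2$ and hypotenuse $s$ yields $\cosh d_m = \cosh s / \cosh(s/2)$, and combining this with $\cosh s = 2\cosh^2(s/2) - 1$ gives $d_m > s/2$, so $2 d_m > s$. Since $K_n$ has a unique edge between any two vertices, exactly one tessellation neighbour of $\tilde u$ is a lift of any prescribed target $v$, so each edge of $\mcT$ is realised by the unique shortest geodesic between its endpoints.

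The main technical content is the pair of hyperbolic-trigonometric inequalities $d(c, M) > 0$ and $d_m > s/2$, valid uniformly in $\alpha \in (0, \pi/3)$ and not merely in the Euclidean-like limit $\alpha \to \pi/3$. The first is immediate from the fact that the circumcentre of an equilateral hyperbolic triangle lies strictly inside the triangle, and the second is the short computation above. Combining the resulting $\mcT \in \mcD(X_g)$ with the Jungerman--Ringel bound then pins the minimal vertex count at exactly $n$, proving the theorem.
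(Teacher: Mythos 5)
Your construction is the same as the paper's: start from the Ringel--Youngs triangular embedding of $K_n$ (Case~$0$, $n\equiv 0\bmod 12$) in the genus-$g$ surface, replace every face by the equilateral hyperbolic triangle of angle $2\pi/(n-1)$, and note the cone angle $(n-1)\cdot 2\pi/(n-1)=2\pi$ at each vertex, yielding a smooth hyperbolic surface carrying a geodesic triangulation on $n$ vertices. Where you diverge is in the verification of the Delaunay property: the paper argues directly, taking an arbitrary triangle with circumcentre $c$ and an arbitrary extra vertex $p$ and splitting a distance path $c\to p$ into the piece leaving $c$'s triangle (length $\geq d(c,x)$) and the piece entering $p$'s last triangle from the opposite side (length $\geq \ell/2$), so $d(c,p)\geq d(c,x)+\tfrac12\ell\geq d(c,w)$. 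You instead appeal to the local-to-global Delaunay lemma and check the single flip condition, which is cleaner if one is willing to invoke that lemma for an infinite, locally finite triangulation of $\mH$ (true, but worth a word of justification since the usual statement is for finite planar point sets). Your argument for the distance-path property proves the inequality $d_m>s/2$ which the paper merely cites from \cite{parlier2016}, which is nice; however, the step ``the next-closest tessellation vertices are the opposite vertices at distance $2d_m$'' is asserted rather than argued. To close it, note that the closed star of a vertex is a convex $(n-1)$-gon (since the interior angle $2\alpha<\pi$ once $n\geq 5$), and that closed stars of non-adjacent vertices have disjoint interiors; any geodesic from $\tilde u$ to a non-neighbour must therefore exit $\tilde u$'s star and later enter the target's star, costing at least $d_m+d_m=2d_m>s$. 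Combined with the uniqueness of the $K_n$-edge between any two vertices, this gives the distance-path property, and the Jungerman--Ringel lower bound then pins the count at exactly $n$.
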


\begin{proof}
	Because every distance Delaunay triangulation of a hyperbolic surface is a simplicial triangulation of the corresponding topological surface, it follows from Theorem~\ref{thm:minverticestopsurface} that the minimal number of vertices is at least  
	$$ \bigg\lceil\dfrac{7+\sqrt{1+48g}}{2}\bigg\rceil.$$
	In the remainder of the proof, we will construct for a given hyperbolic surface a distance Delaunay triangulation with the required number of vertices, inspired by a similar construction in the context of the chromatic number of hyperbolic surfaces \cite{parlier2016}. 
	
	Let $n\equiv 0\mod 12$ and assume that $n\neq 0$. The complete graph $K_n$ on $n$ vertices can be embedded in a topological surface $S_g$ of genus 
	$$ g=\dfrac{(n-3)(n-4)}{12},$$
	which is the smallest possible genus \cite{ringel1968}. Because we have assumed that $n\equiv 0\mod 12$, we know that the embedding of $K_n$ into $S_g$ is a triangulation $\mcT$ \cite{terry1967}. To turn $\mcT$ into a distance Delaunay triangulation, we will add a hyperbolic metric to the topological surface as follows. Every triangle in $\mcT$ is replaced by the unique equilateral hyperbolic triangle with all three angles equal to $\tfrac{2\pi}{n-1}$. In the complete graph $K_n$ every vertex has $n-1$ neighbouring vertices. This means that in every vertex $n-1$ equilateral triangles meet, so the total angle at each vertex is $2\pi$. Therefore, the result after replacing all triangles in $\mcT$ by hyperbolic triangles is a smooth hyperbolic surface $Z_g$. 
	
	It remains to be shown that $\mcT\in\mcD(Z)$. By construction, $\mcT$ is a simplicial complex. It has also been shown that all edges are distance paths \cite{parlier2016}. We will show here that $\mcT$ is a Delaunay triangulation of $Z_g$. Consider an arbitrary triangle $(u,v,w)$ in $\mcT$ with circumcenter $c$ and let $p\not\in\{u,v,w\}$ be an arbitrary vertex of $\mcT$ (Figure~\ref{fig:equilateraltriangles}). Consider a distance path $\gamma$ from $c$ to $p$. We can regard $\gamma$ as the concatenation of simple segments that each pass through an individual triangle. 
	
	\begin{figure}[h]
		\centering
		\includegraphics[width=.8\textwidth]{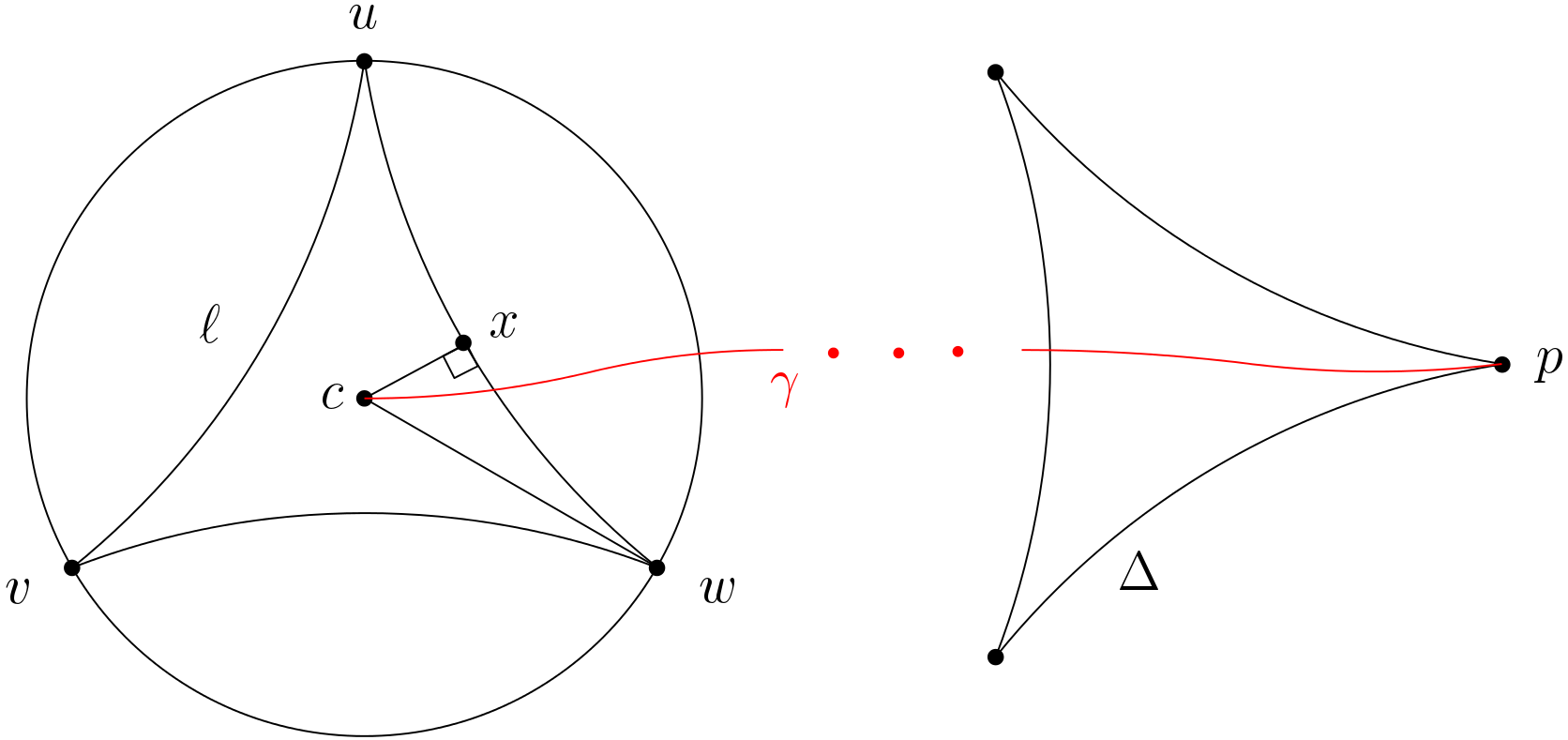}
		\caption{Schematic overview of the proof of $\mcT$ being a Delaunay triangulation.}
		\label{fig:equilateraltriangles}
	\end{figure}
	
	The first of these simple segments starts from $c$ and leaves the triangle $(u,v,w)$, so its length is at least the distance between $c$ and a side of $(u,v,w)$. Therefore, denoting by $x$ the projection of $c$ on one of the edges as shown in Figure~\ref{fig:equilateraltriangles}, the length of the first segment is at least $d(c,x)$. The last of the simple segments passes through a triangle, say $\Delta$, before arriving at $p$, so it has to pass through the side of $\Delta$ opposite to $p$. Therefore, its length is at least the distance between $p$ and the opposite side of $\Delta$. It is known that the distance between a vertex and the opposite side of an equilateral triangle is at least $\tfrac{1}{2}\ell$, where $\ell$ denotes the length of the sides of the equilateral triangle \cite{parlier2016}. Hence, $d(c,p)=\ell(\gamma)\geq d(c,x)+\tfrac{1}{2}\ell$. By the triangle inequality in triangle $(c,w,x)$ we see that $d(c,w)\leq d(c,x)+d(x,w)= d(c,x)+\tfrac{1}{2}\ell$, so we conclude that $d(c,p)\geq d(c,w)$. This means that $p$ is not contained in the interior of the circumcircle of $(u,v,w)$, which shows that $(u,v,w)$ is a Delaunay triangle. By symmetry, it follows that all triangles are Delaunay triangles, which finishes the proof.  
\end{proof}

\appendix

\section{Proof of Lemma~\ref{lem:cylindernottooshort}}\label{sec:appendixupperbound}

\begin{proof}
	\textbf{(Lemma~\ref{lem:cylindernottooshort})}\\
	To prove that the triangles of $(V,E)$ are Delaunay triangles, we will show that every circumscribed disk does not contain any point of $\mcP$ in its interior. By symmetry, it is sufficient to consider the top half of the cylinder. Let $i=1,2,3$ be arbitrary and denote the disk passing through $x_i^+, x_{i+1}^+, x_i, x_{i+1}$ by $D_i$. That $D_i$ does not contain any $p\in V$ in its interior is clear. The remainder of the proof consists of showing that $p$ is not contained in the interior of $D_i$ for all $p\in\mcP\setminus V$. Take $p\in\mcP\setminus V$ arbitrarily. Let $c_i$ be the center of $D_i$. If $d(c_i,p)>d(c_i,x_i)$, then $p$ is not contained in the interior of $D_i$.  
		
	Observe that $d(p,x_i^\pm)\geq\tfrac{1}{2}\varepsilon$ for $i=1,2,3$. Namely, if $p\in\mcP_2$, where $\mcP_2$ is the subset of $\mcP$ constructed in $\mMthick$, then by definition $d(p,x_i^\pm)\geq\tfrac{1}{2}\varepsilon$ for $i=1,2,3$. On the other hand, if $p\in\mcP_1$, then $p$ is a vertex in some hyperbolic cylinder $C'\neq C$ with waist $\gamma'$ in $\mMthin$. By Lemma~\ref{lem:bandaroundcollar}, the collars $C_\gamma(K_C+\tfrac{1}{3}\varepsilon)$ and $ C_{\gamma'}(K_{C'}+\tfrac{1}{2}\varepsilon)$ are disjoint, so the distance between $C$ and $C'$ is at least $\tfrac{2}{3}\varepsilon$. Hence, $d(p,x_i)\geq\tfrac{1}{2}\varepsilon$ for $i=1,2,3$. 
	
	\begin{figure}[h]
		\centering
		\includegraphics[width=0.85\textwidth]{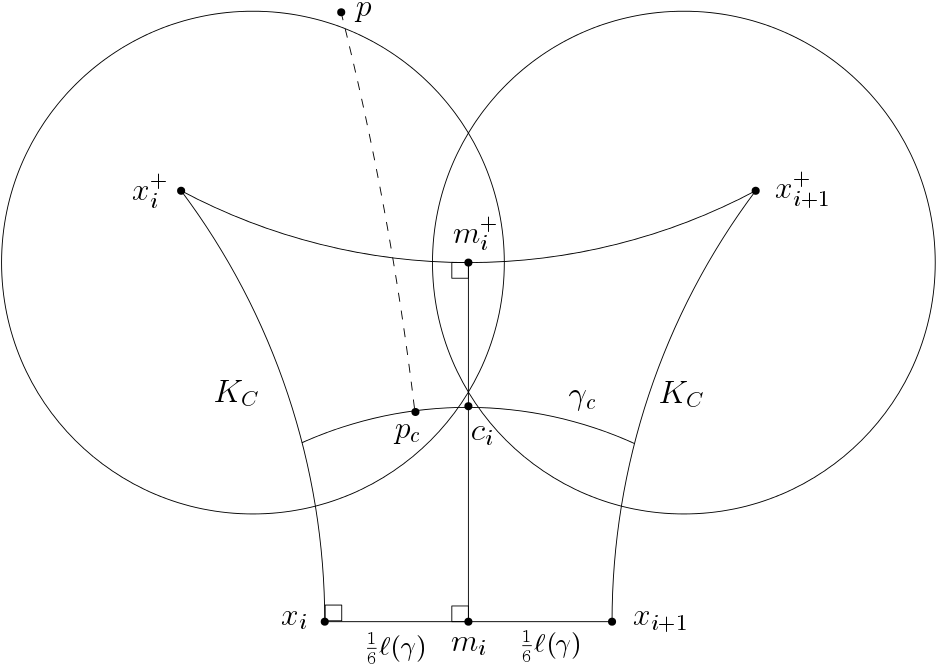}
		\caption{Construction to show that we can assume without loss of generality that $d(x_i^+,p)=d(x_{i+1}^+,p)=\tfrac{1}{2}\varepsilon$.
		\ref{lem:cylindernottooshort}.}
		\label{fig:cylinderdelaunaycircles}
	\end{figure}
	
	Now, we claim that we can assume without loss of generality that $d(x_i^+,p)=d(x_{i+1}^+,p)=\tfrac{1}{2}\varepsilon$. See Figure~\ref{fig:cylinderdelaunaycircles}. Consider the curve $\gamma_c$ consisting of points of distance $d(c_i,\gamma)$ from $\gamma$ and let $p_c$ be the point of $\gamma_c$ closest to $p$. Because $c_i\in\gamma_c$, we know that $d(p,c_i)\geq d(p,p_c)$. Now, note that a standard result from hyperbolic trigonometry in the quadrilateral $(x_i,x_i^+,m_i^+,m_i)$ with three right angles \cite[Formula Glossary 2.3.1(v)]{buser2010} states that
	$$ \sinh(\tfrac{1}{2}d(x_i^+,x_{i+1}^+))=\sinh(\tfrac{1}{6}\ell(\gamma))\cosh(K_C)=\dfrac{\sinh(\tfrac{1}{6}\ell(\gamma))\sinh(\varepsilon)}{\sinh(\tfrac{1}{2}\ell(\gamma))},$$
	where the last equality follows from expression \eqref{eq:expressionKC} for $K_C$. It can be deduced that $d(x_i^+,x_{i+1}^+)<\varepsilon$. Because $d(x,p)\geq \tfrac{1}{2}\varepsilon$ for all $x\in V$, the circles of radius $\tfrac{1}{2}\varepsilon$ centered at $x_i^+$ and $x_{i+1}^+$ intersect in two points. We see that $d(p,p_c)$ is minimized when $d(x_i^+,p)=d(x_{i+1}^+,p)=\tfrac{1}{2}\varepsilon$. Furthermore, if $d(x_i^+,p)=d(x_{i+1}^+,p)=\tfrac{1}{2}\varepsilon$, then $p$ lies on the geodesic passing through $m_i$ and $m_i^+$, so $d(p,p_c)=d(p,c_i)$, which means that $d(p,c)$ is minimized as well. We conclude that we can assume without loss of generality that $d(x_i^+,p)=d(x_{i+1}^+,p)=\tfrac{1}{2}\varepsilon$.
	
	\begin{figure}[h]
		\centering
		\includegraphics[width=0.7\textwidth]{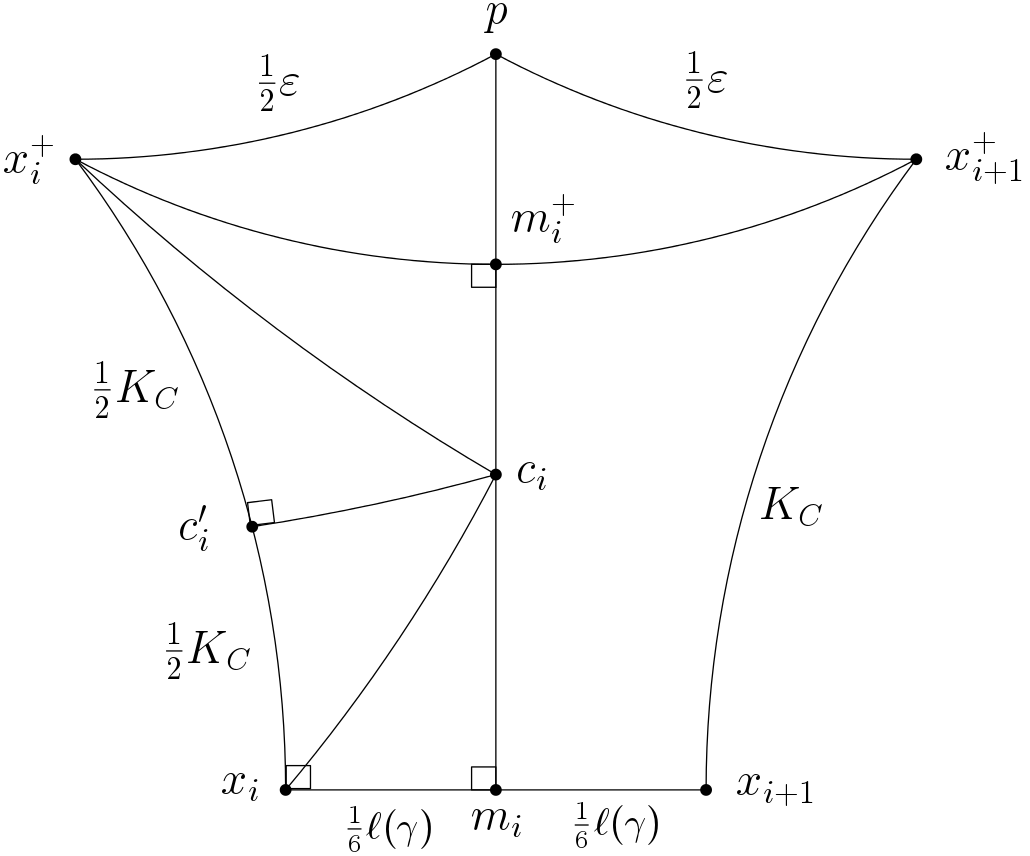}
		\caption{Schematic overview of the trigonometry in Lemma~\ref{lem:cylindernottooshort}.}
		\label{fig:cylinderdelaunay}
	\end{figure}
	
	See Figure~\ref{fig:cylinderdelaunay}, where $c_i'$ is the projection of $c_i$ on $(x_i,x_i^+)$. To prove that $$d(c_i,p)>d(c_i,x_i),$$
	observe that $d(c_i,p)=d(m_i,m_i^+)-d(m_i,c_i)+d(m_i^+,p)$, where $d(m_i,m_i^+),d(m_i,c_i)$ and $d(m_i^+,p)$ satisfy the equations
	\begin{align}
		\coth(d(m_i,m_i^+))&=\dfrac{\cosh(\tfrac{1}{6}\ell(\gamma))}{\tanh (K_C)},\label{eq:dmimiplus}\\
		\tanh(d(m_i,c_i))&=\dfrac{\cosh(\tfrac{1}{6}\ell(\gamma))}{\coth(\tfrac{1}{2}K_C)},\label{eq:dmici}\\
		\cosh(d(m_i^+,p))&=\dfrac{\cosh(\tfrac{1}{2}\varepsilon)}{\cosh(\tfrac{1}{2}d(x_i^+,x_{i+1}^+))}\label{eq:dmiplusp}.
	\end{align}
	Here, equation \eqref{eq:dmimiplus} follows from applying a standard formula in hyperbolic trigonometry \cite[Formula Glossary 2.3.1(iv)]{buser2010} in quadrilateral $(x_i,x_i^+,m_i^+,m_i)$. Equation \eqref{eq:dmici} follows from applying the same formula in quadrilateral $(x_i,c_i',c_i,m_i)$. Equation \eqref{eq:dmiplusp} follows from the hyperbolic Pythagorean theorem in triangle $(x_i^+,p,m_i^+)$. Moreover, applying the hyperbolic Pythagorean theorem in triangle $(x_i,c_i,m_i)$ yields
	\begin{align}
		\cosh(d(c_i,x_i))&=\cosh(\tfrac{1}{6}\ell(\gamma))\cosh(d(c,m_i)),\nonumber\\
		&=\cosh(\tfrac{1}{6}\ell(\gamma))\cosh\bigg(\arctanh\bigg(\dfrac{\cosh(\tfrac{1}{6}\ell(\gamma))}{\coth(\tfrac{1}{2}K_C)}\bigg)\bigg),\label{eq:dcixi}
	\end{align}
	where we used equation \eqref{eq:dmici} in the second line.
	
	When we substitute the expressions for $K_C$ and $d(x_i^+,x_{i+1}^+)$ into equations \eqref{eq:dmimiplus},\eqref{eq:dmici},\eqref{eq:dmiplusp} and \eqref{eq:dcixi}, we find expressions for $d(m_i,m_i^+),d(m_i,c_i),d(m_i^+,p)$ and $d(c_i,x_i)$ in terms of $\varepsilon$ and $\ell(\gamma)$. As $\varepsilon=0.72$ is fixed, we can treat these as functions of $\ell(\gamma)$. By a straightforward (but tedious) computation, it can be shown that $d(m_i,m_i^+)-d(m_i,c_i)-d(c_i,x_i)$ is strictly decreasing as a function of $\ell(\gamma)$ with minimum $-0.180\ldots$ for $\ell(\gamma)=2\varepsilon'$. By a similar computation, $d(m_i^+,p)$ is strictly increasing as a function of $\ell(\gamma)$ with minimum $0.247\ldots$ for $\ell(\gamma)\rightarrow 0$. We conclude that
	$$ d(m_i,m_i^+)-d(m_i,c_i)-d(c_i,x_i)+d(m_i^+,p)\geq -0.180\ldots+0.247\ldots>0,$$
	from which it follows that $d(c_i,p)=d(m_i,m_i^+)-d(m_i,c_i)+d(m_i^+,p)>d(c_i,p)$. Hence, $p$ is not contained in $D_i$. This finishes the proof.    	
\end{proof}

\section{Proof of Lemma~\ref{lem:edgebound}}\label{sec:appendixupperboundtight}

\begin{proof}
	\textbf{(Lemma~\ref{lem:edgebound})}\\
		Throughout the proof, we denote the set of vertices of $\mcT$ contained in a subset $U$ of $\mM$ by $V(U)$. Likewise, let $E(U,W)$ be the set of edges with one endpoint in $U\subset\mM$ and one endpoint in $W\subset\mM$.
		
		\bigskip\noindent 
		\textit{Part 1.} Consider the graph $G_i=(V(\mS_i),E(\mS_i,\mS_i))$. Let $g_i$ be the genus of $G_i$, i.e., the minimal genus of a surface onto which $G_i$ can be embedded. It is known that \cite[Proposition 4.4.4]{mohar2001}
		$$ g_i\geq \bigg\lceil\dfrac{e(\mS_i,\mS_i)}{6}-\dfrac{v(\mS_i)}{2}+1\bigg\rceil,$$
		or, equivalently, that 
		\begin{equation}\label{eq:graphedgeupperbound}
			e(\mS_i,\mS_i)\leq 6g_i+3v(\mS_i)-6.
		\end{equation} 
		We will show that the embedding of $G_i$ into $\mM$ intersects at most $6N(N+1)+2$ pairs of pants, which implies that $g_i\leq 3N(N+1)+1$. Certainly, $V(\mS_i)$ is contained in $\mS_i$, which consists of at most $6N$ consecutive pairs of pants. Now, let $e\in E(\mS_i,\mS_i)$. Because the diameter of $\mS_i$ is at most $6NM$, we know that there exists a path of length at most $6NM$ between the endpoints of $e$. Because $e$ is a distance path, it follows that 
		$$ \ell(e)\leq 6NM< 6N^2m.$$
		Suppose that $e$ intersects exactly $k$ pairs of pants that are not contained in $\mS_i$ and denote the farthest pair of pants that it intersects by $Y^*$. Here, farthest is defined with respect to the distance along the trivalent graph $L_g$. Because $e$ is an edge between vertices in $\mS_i$, $e$ has to traverse at least $k-1$ pairs of pants to reach $Y^*$ and similarly at least $k-1$ pairs of pants to return to $\mS_i$. As the length of $e$ within each of these pairs of pants is at least $m$, we know
		$$ \ell(e)\geq 2(k-1)m.$$
		It follows that 
		$$ 2(k-1)m < 6N^2m,$$
		which implies that $k<3N^2+1$. We conclude that $G_i$ is embedded in a surface consisting of at most $(3N^2+1)+6N+(3N^2+1)=6N(N+1)+2$ pairs of pants. It follows that $g_i\leq 3N(N+1)+1$. Hence 
		$$ e(\mS_i,\mS_i)\leq 6(3N(N+1)+1)+3v(\mS_i)-6=3v(\mS_i)+18N(N+1),$$
		which finishes the proof.
		
		\bigskip\noindent
		\textit{Part 2.} 
		We consider two cases. 
		
		\medskip\noindent
		\textit{Case 1: there are at most $6N^2+2$ pairs of pants between $\mS_i$ and $\mS_{i+1}$.}
		
		\noindent
		Consider the graph $(V(\mS_i\cup \mS_{i+1}),E(\mS_i\cup\mS_{i+1},\mS_i\cup\mS_{i+1}))$. We have shown in Part 1 that edges in $E(\mS_i,\mS_i)$ and $E(\mS_{i+1},\mS_{i+1})$ can traverse at most $3N^2-1$ pairs of pants that are not contained in $\mS_i$ and $\mS_{i+1}$ respectively. Therefore, $E(\mS_i,\mS_i)\cup E(\mS_{i+1},\mS_{i+1})$ is contained in a surface consisting of at most 
		$$(3N^2+1)+6N+(3N^2+1)+(3N^2+1)+6N+(3N^2+1)=12N^2+12N+4$$
		pairs of pants. With a similar argument it can be shown that $E(\mS_i,\mS_{i+1})$ is contained in this surface as well. Therefore, $(V(\mS_i\cup \mS_{i+1}),E(\mS_i\cup\mS_{i+1},\mS_i\cup\mS_{i+1}))$ is embedded in a surface consisting of at most $12N^2+12N+4$ pairs of pants. Replacing $\mS_i$ by $\mS_i\cup\mS_{i+1}$ in Inequality \eqref{eq:graphedgeupperbound} yields
		\begin{align*}
			e(\mS_i\cup\mS_{i+1},\mS_i\cup\mS_{i+1})&\leq 6(6N^2+6N+4)+3v(\mS_i\cup\mS_{i+1})-6,\\
			&=3v(\mS_i\cup\mS_{i+1})+36N(N+1)+18.
		\end{align*}
		Because $e(\mS_i,\mS_{i+1})\leq e(\mS_i\cup\mS_{i+1},\mS_i\cup\mS_{i+1})$, the desired inequality follows. 
		
		\bigskip\noindent
		\textit{Case 2: there are more than $6N^2+2$ pairs of pants between $\mS_i$ and $\mS_{i+1}$.} 
		
		\noindent
		We show that there are integers $g_{i,1}$ and $g_{i,2}$ with
		\begin{equation}
		\label{eq:gi1-leq-gi2}
		g_{i,1} \leq g_{i,2},
		\end{equation}
		such that
		\begin{equation}
		\label{eq:gi1}
		e(\mS_i,\mS_{i+1})\leq 2v(\mS_i)+2v(\mS_{i+1})+4g_{i,1}-4
		\end{equation}
		and
		\begin{equation}
		\label{eq:gi2}
		e(\mS_i,\mS_{i+1})\geq 3v(\mS_i)+3v(\mS_{i+1})-3e(\mS_i,\mS_i)-3e(\mS_{i+1},\mS_{i+1})+6g_{i,2}-6.
		\end{equation}
		Combining \eqref{eq:gi1} and \eqref{eq:gi2} yields
		\begin{equation*}
		e(\mS_i,\mS_{i+1})\leq 6 e(\mS_i,\mS_i) + 6 e(\mS_{i+1},\mS_{i+1})  +
		12g_{i,1} - 12g_{i,2}.
		\end{equation*}
		Using the upper bound $e(\mS_j,\mS_j) \leq 3 v(\mS_j) + 18 N(N+1)$ from Part 1 for $j = i$ and $j = i+1$, together with
		\eqref{eq:gi1-leq-gi2} yields
	    \begin{equation*}
	    e(\mS_i,\mS_{i+1})\leq 18 v(\mS_i \cup \mS_{i+1}) + 216 N(N + 1),
	    \end{equation*} 
	    which is the desired inequality.
	    
	   	The number $g_{i,1}$ is the genus of the graph $G_{i,1}$ with edge set $E(\mS_i,\mS_{i+1})$ and
	   	vertex set $V_{i,1}$ consisting of all vertices incident to some edge in $E(\mS_i,\mS_{i+1})$.
	   	The number $g_{i,2}$ is the genus of the graph $G_{i,2}$ with edge set
	   	$E(\mS_i\cup\mS_{i+1}, \mS_i\cup\mS_{i+1})$ and vertex set $V(\mS_{i+1}\cup\mS_{i+1})$.
	   	Since $G_{i,1}$ is a subgraph of $G_{i,2}$, inequality \eqref{eq:gi1-leq-gi2} holds.
	   	Therefore, it remains to be proved that \eqref{eq:gi1} holds for this value of $g_{i,1}$, and that
	   	\eqref{eq:gi2} holds for this value of $g_{i,2}$.
	   			
		To prove that \eqref{eq:gi1} holds, we apply a result in graph theory about bipartite graphs to $G_{i,1}$. By construction, $G_{i,1}$ contains no cycle of length 3. We claim that $G_{i,1}$ is connected. Then, \cite[Prop. 4.4.4, eq. 4.13]{mohar2001} 
		\begin{align*}
		e(\mS_i,\mS_{i+1})&\leq 2|V_{i,1}|+4g_{i,1}-4.
		\end{align*}
		Observing that $|V_{i,1}|\leq v(\mS_i)+v(\mS_{i+1})$ yields the desired inequality.
		
		To prove that $G_{i,1}$ is connected, consider a pair of pants $Y$ between $\mS_i$ and $\mS_{i+1}$ such that there are at least $3N^2+1$ pairs of pants between $Y$ and $\mS_i$ and between $Y$ and $\mS_{i+1}$. Such a pair of pants $Y$ exists because there are more than $6N^2+2$ pairs of pants between $\mS_i$ and $\mS_{i+1}$. Let $\gamma$ be the boundary geodesic of $Y$ such that $\mM\setminus\gamma$ consists of exactly two connected components. Every edge in $E(\mS_i,\mS_{i+1})$ intersects $\gamma$, because $\gamma$ separates $\mS_i$ and $\mS_{i+1}$. Furthermore, every edge in $E(\mS_i,\mS_{i+1})$ intersects $\gamma$ \emph{exactly once}, because the edges in $E(\mS_i,\mS_{i+1})$ and $\gamma$ are geodesics and there are no hyperbolic bigons. No edge in $E(\mS_i,\mS_i)\cup E(\mS_{i+1},\mS_{i+1})$ intersects $\gamma$, because by the reasoning in Part 1 edges in $E(\mS_j,\mS_j)$ for $j=i$ and $j=i+1$ intersect fewer than $3N^2+1$ pairs of pants that are not contained in $\mS_j$. Because $\mcT$ has no edges between non-adjacent clusters, it follows that the only edges of $\mcT$ that intersect $\gamma$ are edges in $E(\mS_i,\mS_{i+1})$ and that each edge intersects $\gamma$ in exactly one point. Therefore, we can write $E(\mS_i,\mS_{i+1})=\{e_1,\ldots,e_k\}$ for $k=|E(\mS_i,\mS_{i+1})|$, where the indices of the edges correspond to the order in which they intersect $\gamma$. For every $i=1,\ldots,k$, the edges $e_i$ and $e_{i+1}$ are contained in a triangle of $\mcT$ consisting of $e_i,e_{i+1}$ and an edge in either $E(\mS_i,\mS_i)$ or $E(\mS_{i+1},\mS_{i+1})$. Hence, $e_i$ and $e_{i+1}$ share an endpoint. Therefore, there is a path in $G_{i,1}$ between any endpoint of $e_i$ and any endpoint of $e_{i+1}$. This implies that there is a path in $G_{i,1}$ between any endpoint of $e_i$ and $e_j$ for all $i,j=1,\ldots,k$. Because $V_{i,1}$ consists of the vertices incident to some edge in $E(\mS_i,\mS_{i+1})$, it follows that $G_{i,1}$ is connected. This concludes the proof of Inequality \eqref{eq:gi1}.
		
		We continue with the proof of Inequality \eqref{eq:gi2}. Recall that the graph $G_{i,2}$ has edge set $E(\mS_i\cup\mS_{i+1}, \mS_i\cup\mS_{i+1})$ and vertex set $V(\mS_{i+1}\cup\mS_{i+1})$. The union of the triangles of $\mcT$ with edges and vertices in $G_{i,2}$ define a topological surface with boundary components\footnote{In fact, these triangles define a hyperbolic surface with boundary components consisting of a finite number of geodesic segments. However, for our argument we do not use any metric properties.}. Each boundary component consists of a finite number of edges of $G_{i,2}$. By adding a face to each boundary component, we obtain an embedding of $G_{i,2}$ in a closed surface $S_i$. When we speak of faces of $G_{i,2}$, we will always refer to faces with respect to the embedding of $G_{i,2}$ in $S_i$. 
		
		Denote the number of edges that are contained in exactly zero, one or two triangles in $G_{i,2}$ by $\delta_0,\delta_1$ and $\delta_2$, respectively. The total number of edges is given by $e_{i,2}=\delta_0+\delta_1+\delta_2$. Now, let $f_\Delta$ be the number of triangular faces of $G_{i,2}$. As the total number of faces $f_{i,2}$ is at least the number of triangular faces, we know that $f_{i,2}\geq f_\Delta$. Since $3f_\Delta=2\delta_2+\delta_1$, it follows that
		$$ 3f_{i,2}\geq 2\delta_2+\delta_1.$$
		As $e_{i,2}=\delta_0+\delta_1+\delta_2$, we obtain
		$$ 3f_{i,2}\geq 2(e_{i,2}-\delta_1-\delta_0)+\delta_1=2e_{i,2}-\delta_1-2\delta_0.$$
		Because
		\begin{equation}\label{eq:subgraphtotaledges}
		e_{i,2}=e(\mS_i,\mS_i)+e(\mS_i,\mS_{i+1})+e(\mS_{i+1},\mS_{i+1}),
		\end{equation}
		we see
		\begin{equation}\label{eq:subgraphtrianglelowerbound}
		3f_{i,2}\geq 2e(\mS_i,\mS_i)+2e(\mS_i,\mS_{i+1})+2e(\mS_{i+1},\mS_{i+1})-\delta_1-2\delta_0.
		\end{equation}
		We will now bound the right-hand side from below by proving that 	$$\delta_1+2\delta_0\leq 2e(\mS_i,\mS_i)+2e(\mS_{i+1},\mS_{i+1}).$$
		We claim that every edge in $E(\mS_{i},\mS_{i+1})$ is part of two triangles. To prove this, let $e=(u,v)\in E(\mS_i,\mS_{i+1})$ and consider a triangle $(u,v,w)$ in $\mcT$ containing $e$. Because every edge has its endpoints in either the same cluster or consecutive clusters, there is only an edge between $u$ and $w$ if $w\in\mS_j$ for $j=i-1,i,i+1$ and there is only an edge between $v$ and $w$ if $w\in\mS_j$ for $j=i,i+1,i+2$. Since $(u,v,w)$ is a triangle in $\mcT$, it follows that $w\in \mS_i\cup\mS_{i+1}$, so $w\in G_{i,2}$. This means that $(u,v,w)$ is a triangle in $G_{i,2}$ and since we have chosen it arbitrarily, both triangles in $\mcT$ containing $e$ are triangles in $G_{i,2}$. It follows that the edges that are contained in no triangles or exactly one triangle in $G_{i,2}$ are contained in $E(\mS_i,\mS_i)\cup E(\mS_{i+1},\mS_{i+1})$. This means that 
		$$\delta_0+\delta_1\leq e(\mS_i,\mS_i)+e(\mS_{i+1},\mS_{i+1}),$$
		so
		$$2\delta_0+\delta_1\leq 2e(\mS_i,\mS_i)+2e(\mS_{i+1},\mS_{i+1}).$$ 
		Combining this upper bound with Equation \eqref{eq:subgraphtrianglelowerbound} we obtain
		\begin{equation}\label{eq:subgraphfaceslowerbound}
		3f_{i,2}\geq 2e(\mS_i,\mS_{i+1}).
		\end{equation}
		To conclude, we will look at Euler's formula for the graph $G_{i,2}$, which is given by 
		\begin{equation}\label{eq:euler}
			v_{i,2}-e_{i,2}+f_{i,2}=2-2g'_{i,2},
		\end{equation} 
		where $g'_{i,2}$ is the genus of the embedding of $G_{i,2}$ in $S_i$. Because $g_{i,2}$ is the minimal genus of a surface onto which $G_{i,2}$ can be embedded, in particular $g_{i,2}\leq g'_{i,2}$. Substituting Equation \eqref{eq:subgraphtotaledges} and Inequality \eqref{eq:subgraphfaceslowerbound} into Euler's formula \eqref{eq:euler}, we obtain after some simplifications
		$$ e(\mS_i,\mS_{i+1})\geq 3v(\mS_i\cup\mS_{i+1})-3e(\mS_i,\mS_i)-3e(\mS_{i+1},\mS_{i+1})+6g_{i,2}-6.$$
		This finishes the proof.
\end{proof}

\bibliography{Hyperbolicsurfaces}	
\bibliographystyle{plainurl}

\end{document}